\newtheorem{theorem}{Theorem}[section]
\newtheorem{lemma}[theorem]{Lemma}
\newtheorem{proposition}[theorem]{Proposition}
\newtheorem{definition}[theorem]{Definition}
\newtheorem{remark}[theorem]{Remark}
\newtheorem{example}[theorem]{Example}
\newtheorem{corollary}[theorem]{Corollary}
\newcommand{\exc}[2]{\underset{#1}{\mathbf{E}}\left\{ #2 \right\}}
\newcommand\blfootnote[1]{%
	\begingroup
	\renewcommand\thefootnote{}\footnote{#1}%
	\addtocounter{footnote}{-1}%
	\endgroup
}
\def\bra #1{\langle #1\vert}
\def\ket #1{\vert #1\rangle}
\def\ketbra #1#2{\vert #1\rangle \langle #2\vert}
\def\tr{{\rm Tr}}
\newcommand*{\eps}{\varepsilon}
\newcommand*{\cI}{\mathcal{I}}
\newcommand*{\cM}{\mathcal{M}}
\newcommand*{\cN}{\mathcal{N}}
\begin{document}

\title{Semidefinite programming hierarchies for constrained bilinear optimization}

\author{Mario Berta}
\address{Department of Computing, Imperial College London, United Kingdom}
\email{m.berta@imperial.ac.uk}

\author{Francesco Borderi}
\address{Department of Computing, Imperial College London, United Kingdom}
\email{f.borderi17@imperial.ac.uk}

\author{Omar Fawzi}
\address{Universit\'e de Lyon, ENS de Lyon, CNRS, UCBL, LIP, F-69342, Lyon Cedex 07, France}
\email{omar.fawzi@ens-lyon.fr}

\author{Volkher B.~Scholz}
\address{Department of Physics, Ghent University, Ghent, Belgium}
\email{volkher.scholz@gmail.com}

\begin{abstract}
We give asymptotically converging semidefinite programming hierarchies of outer bounds on bilinear programs of the form $\mathrm{Tr}\big[H(D\otimes E)\big]$, maximized with respect to semidefinite constraints on $D$ and $E$. Applied to the problem of approximate error correction in quantum information theory, this gives hierarchies of efficiently computable outer bounds on the success probability of approximate quantum error correction codes in any dimension. The first level of our hierarchies corresponds to a previously studied relaxation [Leung \& Matthews, IEEE ITTrans.~2015] and positive partial transpose constraints can be added to give a sufficient criterion for the exact convergence at a given level of the hierarchy.

To quantify the worst case convergence speed of our sum-of-squares hierarchies, we derive novel quantum de Finetti theorems that allow imposing linear constraints on the approximating state. In particular, we give finite de Finetti theorems for quantum channels, quantifying closeness to the convex hull of product channels as well as closeness to local operations and classical forward communication assisted channels. As a special case this constitutes a finite version of Fuchs-Schack-Scudo's asymptotic de Finetti theorem for quantum channels. Finally, our proof methods answer a question of [Brand\~ao \& Harrow, STOC 2013] by improving the approximation factor of de Finetti theorems with no symmetry from $O(d^{k/2})$ to $\mathrm{poly}(d,k)$, where $d$ denotes local dimension and $k$ the number of copies.
\end{abstract}

\maketitle
\blfootnote{*\textit{ Part of this work has been presented at ISIT 2019 under the title "Quantum Coding via Semidefinite Programming".}}
\vspace{-1cm}


\section{Introduction}\label{sec:introduction}

In this paper, we study constrained bilinear optimization problems of the form
\begin{align}\label{eq:form}
Q=\max &\quad \mathrm{Tr}\big[H (D\otimes E)\big]\\
s.t. &\quad D \in\mathcal{P}_D=\Pi_{A \to D}(\mathcal{S}_A^+ \cap \mathcal{A}_{A})\\
&\quad E \in\mathcal{P}_E=\Pi_{B \to E}(\mathcal{S}_B^+ \cap \mathcal{A}_{B}),
\end{align}
where $H$ denotes a matrix in $\mathcal{S}_D\otimes\mathcal{S}_E$ for $\mathcal{S}_D=\mathbb{C}^{d_D\times d_D}$ and $\otimes$ the Kronecker tensor product, and $\mathcal{P}_{D}$ and $\mathcal{P}_{E}$ are positive semidefinite representable sets such that:
\begin{itemize}
\item $\Pi_{A \to D}:\mathcal{S}_A\to\mathcal{S}_D$ and $\Pi_{B \to E}:\mathcal{S}_B\to\mathcal{S}_E$ are linear maps
\item $\mathcal{S}_A^+$ and $\mathcal{S}_B^+$ are the sets of positive semidefinite unit trace matrices in $\mathcal{S}_A$ and $\mathcal{S}_B$, respectively
\item $\mathcal{A}_{A}$ and $\mathcal{A}_B$ are affine subspaces of $\mathcal{S}_A$ and $\mathcal{S}_B$, respectively.
\end{itemize}
Our main motivation to study problems of the form~\eqref{eq:form} comes from quantum information theory\,---\,or more specifically the problem of {\it approximate quantum error correction}. We present this application and its motivation in detail in Section~\ref{sec:quantum-error-correction}, but continue here with the general mathematical setting.

To discuss our approach, we first rewrite~\eqref{eq:form} by defining $G_{AB}= (\Pi_{A \to D}^{\dagger} \otimes \Pi_{B \to E}^{\dagger})(H)$, where $\Pi^{\dagger}$ denotes the adjoint map of $\Pi$ in the Hilbert-Schmidt inner product. This leads to the form
\begin{align}\label{original_program}
Q=\max &\quad \mathrm{Tr}\big[G_{AB} (W_{A} \otimes W_{B})\big]\\
s.t. &\quad W_{A} \succeq 0,\quad W_{B} \succeq 0,\quad\mathrm{Tr}[W_A] = \mathrm{Tr}[W_B] = 1\\
&\quad\Lambda_{A\to C_A}\left(W_{A}\right)=X_{C_A},\quad\Gamma_{B\to C_B}\left(W_{B}\right)=Y_{C_B}, 
\end{align}
where $G_{AB}\in\mathcal{S}_A\otimes\mathcal{S}_B$, $\Lambda_{A\to C_A}:\mathcal{S}_A \to \mathcal{S}_{C_A}$ and $\Gamma_{B\to C_B}:\mathcal{S}_B \to \mathcal{S}_{C_B}$ denote linear maps, and $X_{C_A}\in\mathcal{S}_{C_A}$ and $Y_{C_B}\in\mathcal{S}_{C_B}$ are the matrices defining $\mathcal{A}_{A}$ and $\mathcal{A}_B$ as the affine subspaces associated with the kernels of the linear maps $\Lambda_{A\to C_A}$ and $\Gamma_{B\to C_B}$, respectively.
Now, by the linearity of the objective function we can equivalently optimise over the convex hull of feasible points
\begin{align}\label{eq:transformed_program}
Q=\max &\quad \mathrm{Tr}\left[G_{AB}\left(\sum_{i\in I}p_iW_A^i \otimes W_B^i\right)\right]\\
s.t. &\quad p_i\geq0,\quad W_A^i \succeq 0,\quad W_B^i \succeq 0,\quad\mathrm{Tr}\left[W_A^i\right] = \mathrm{Tr}\left[W_B^i\right] = 1\quad\forall i\in I,\quad\sum_{i\in I}p_i=1\\
&\quad\Lambda_{A\to C_A}\left(W_A^i\right)=X_{C_A},\quad\Gamma_{B\to C_B}\left(W_B^i\right)=Y_{C_B}\quad\forall i\in I.
\end{align}
That is, in the language of quantum information theory we are maximizing over a subset of the so-called {\it separable quantum states}\,---\,where the latter is defined on $A\otimes B$ as
\begin{align}
\text{Sep}(A:B)=\left\{\sum_{i\in I}p_iW_A^i \otimes W_B^i: W_A^i \succeq 0, W_B^i \succeq 0, \mathrm{Tr}\left[W_A^i\right]=\mathrm{Tr}\left[W_B^i\right] = 1,p_i\geq0,\sum_{i\in I}p_i=1\right\}.
\end{align}
Recall that matrices $W_A\in\mathcal{S}^+_A$ are called quantum states on system $A$\,---\,and similarly for bipartite states on $A\otimes B$.

Now, to approximate the set of separable states within the set of bipartite states is a ubiquitous but hard problem in quantum information theory (see, e.g., \cite{Barak12}). Nevertheless, as realized in~\cite{Doherty04} the set of separable states can be approximated by the sum-of-squares hierarchies of Lasserre~\cite{Lasserre00} and Parrilo~\cite{Parrilo2003}. This lead to the {\it semidefinite programming hierarchy} of Doherty-Parrilo-Spedalieri (DPS), which is extensively employed to characterize quantum correlations in quantum information theory~\cite{Doherty02}. The underlying idea of the DPS hierarchy is that separable states $W_{AB}=\sum_i p_i W_A^i \otimes W_B^i$ on $A\otimes B$, where $\{p_i\}_{i\in I}$ is a probability distribution, are {\it $n$-extendible} to $W_{AB_1^n}=\sum_i p_iW_A^i \otimes (W_B^i)^{\otimes n}$ on $A\otimes B^{\otimes n}$ for any $n$, such that we have for any permutation $\pi$ that\footnote{Here and henceforth we use the notation $B_i^j$ to denote the systems $B_i\otimes\cdots\otimes B_j$, which should be interpreted as empty if $i>j$.}
\begin{align}
W_{AB_1^n}=\left(\mathcal{I}_A\otimes\mathcal{U}_{B_1^n}^\pi\right)(W_{AB_1^n})
\end{align}
with $\mathcal{I}_A$ the identity map on $\mathcal{S}_A$ and $\mathcal{U}_{B_1^n}^\pi$ the unitary map that permutes the systems $B_1^n$ according to $\pi\in\mathfrak{S}_n$\,---\,the symmetric group of $n$ elements. The state $W_{AB_1^n}=\sum_i p_iW_A^i \otimes (W_B^i)^{\otimes n}$ is an \textit{extension} of $W_{AB}$, meaning that we again get back the original state $W_{AB}$ when throwing away all additional systems $B_2^n$: $\mathrm{Tr}_{B_2^n}(W_{AB_1^n})=W_{AB}$.\footnote{We refer to Section \ref{sec:notation} for the formal definition of the so-called partial trace map $\mathrm{Tr}_{B_2^n}[\cdot]$.} Due to the monogamy of quantum correlations, however, general states do not have this property ~\cite{Coffman00},~\cite{Hreview2009}. In fact, finite {\it quantum de Finetti theorems} quantify, with upper bounds, the distance of $n$-extendible states to separable states~\cite{Christandl2007}, with convergence in the limit $n\to\infty$~\cite{STORMER196948}. More precisely, \cite[Theorem II.7]{Christandl2007} gives that for states $W_{AB}$ $n$-extendible to $W_{AB_1^n}$, there exists a probability distribution $\{p_i\}_{i\in I}$ and states $W^i_A,W_B^i$ on $A$ and $B$, respectively, such that
\begin{align}\label{eq:intro-deFinetti}
\left\|W_{AB}-\sum_{i\in I}p_iW_A^i\otimes W_B^i\right\|_1\leq\frac{2d_B^2}{n},
\end{align}
where $\|X\|_1=\mathrm{Tr}\left[|X|\right]$ denotes the Schatten one-norm and $d_B$ the dimension of $B$. Crucially, $n$-extendability has a {\it semidefinite representation} and this then immediately gives efficient semidefinite approximations of the set $\text{Sep}(A:B)$ for any fixed $n$.

For our setting, however, we are interested more generally in characterizing bipartite states that are separable, but subject to linear constraints on the $W_A^i,W_B^i$ as well. As such, the approach we use to generate convergent semidefinite programming hierarchies for the constrained bilinear optimizations~\eqref{eq:transformed_program} is based on deriving finite de Finetti representation theorems with additional linear constraints. This leads to our main finding, the semidefinite programs
\begin{align}\label{eq:intro-main}
\mathrm{SDP}_n=\max &\quad \mathrm{Tr}\big[G_{AB} W_{AB}\big]\\
s.t. 
&\quad W_{AB_1^n}\succeq0, \mathrm{Tr}(W_{AB_1^n}) = 1, \;W_{AB_1^n}=\left(\mathcal{I}_A\otimes\mathcal{U}_{B_1^n}^\pi\right)\left(W_{AB_1^n}\right)\;\forall\pi\in\mathfrak{S}_n\\
& \quad \left(\Lambda_{A\to C_A}\otimes\mathcal{I}_{B_1^n}\right)\left(W_{AB_1^n}\right)=X_{C_A}\otimes W_{B_1^n},\quad \left(\mathcal{I}_{B_1^{n-1}}\otimes\Gamma_{B_n\to C_B}\right)\left(W_{B_1^n}\right)=W_{B_1^{n-1}}\otimes Y_{C_B}
\end{align}
form a sequence of upper bounds on $Q$ with the property
\begin{align}
0 \leq \mathrm{SDP}_n - Q \leq \frac{\mathrm{poly}(d)}{\sqrt{n}}\quad\text{implying}\quad Q=\lim_{n\to\infty}\mathrm{SDP}_n,
\end{align}
where $d=\max\{d_A,d_B\}$ and $\mathrm{poly}(d)$ denotes a term at most polynomial in $d$. Notice that the state $W_{AB}$ appearing in the objective function of \eqref{eq:intro-main}, is the reduced state of $W_{AB_1^n}$ on $A\otimes B_1$, i.e., $\mathrm{Tr}_{B_2^n}(W_{AB_1^n})=W_{AB}$.

The remainder of our manuscript is structured as follows. In Section~\ref{sec:deFinetti} we give quantum de Finetti theorems with linear constraints and in Section~\ref{sec:outer-bounds} we present how these lead to an outer hierarchy of converging SDP relaxations for constrained bilinear optimization of the form \eqref{eq:form}. In Section~\ref{sec:quantum-error-correction}, we then discuss as a special case de Finetti theorems for quantum channels (Section~\ref{sec:separable-deFinetti}), which we utilise for our main application about approximate quantum error correction (Section~\ref{subsectionHierarchy}). We end with some conclusions in Section~\ref{sec:conclusion}. Some arguments and extended material are deferred to appendices, which includes some basic numerical studies in Appendix~\ref{sec:numerics}.

We should mention that in recent work, optimization problems similar to \eqref{eq:form} and termed {\it jointly constrained semidefinite bilinear programming} were studied in~\cite{Huber18}, where it was pointed out that they appear in various forms throughout quantum information theory. We notice that the approach in~\cite{Huber18} is based on non-commutative extensions of the classical branch-and-bound algorithm from~\cite{Khayyal83} and is complementary to ours. Another remark is that we should distinguish the setting~\eqref{original_program} studied here from our previous work on quantum bilinear optimization~\cite{Berta16}, where we were interested in bilinear optimizations of the form
\begin{align}\label{eq:previous}
\max &\quad \sum_{\alpha,\beta}G_{\alpha,\beta}\langle\psi|E_\alpha D_\beta|\psi\rangle\\
s.t. &\quad \text{$|\psi\rangle\in\mathcal{H}$: Hilbert space}\\
& \quad \text{$E_\alpha,D_\beta$ Hermitian with $[E_\alpha,D_\beta]=0$},
\end{align}
where $E_{\alpha}$ and $D_\beta$ are operators acting on $\mathcal{H}$ subject to polynomial constraints given by the set of conditions $\{[E_{\alpha}, D_{\beta}]=0\}_{\alpha,\beta}$ expressed by commutators, i.e., $[E_{\alpha}, D_{\beta}]=E_{\alpha}D_{\beta}-D_{\beta}E_{\alpha}$.  Note that in this latter setting~\eqref{eq:previous} the dimension of the underlying Hilbert space is unbounded and optimized over as well \cite{Navascues}. In contrast, for our optimisation~\eqref{original_program} the dimension of the variables is fixed in advance. As such, the scope of applications of our current work is different.


\section{De Finetti theorems with linear constraints}\label{sec:deFinetti}

\subsection{Notation}\label{sec:notation}

In the following, we introduce some notation that is standard in quantum information theory. A $d_A$-dimensional quantum system (or in short {\it system}) is given by an inner product space $\mathbb{C}^{d_A}$ and denoted by $A$. {\it Quantum states} (or in short {\it states}) on $A$ are matrices\footnote{Here and henceforth we use the symbol $:=$ as {\it equal by definition}.}
\begin{align}
\text{$W_A\in\mathcal{S}_A:=\mathbb{C}^{d_A\times d_A}$ with $\mathrm{Tr}[W_A]=1$ and $W_A\succeq0$,}
\end{align}
where $\succeq$ denotes the operator (Loewner) order. Quantum states of rank one are called {\it pure} and can be written as $W_A=\ket{\psi}\bra{\psi}_A$, where $\ket{\psi}_A\in\mathbb{C}^{d_A}$ and $\ket{\psi}\bra{\psi}_A\in\mathcal{S}_A$ denotes the rank-one projector on the vector $\ket{\psi}_A$.

A {\it bipartite system} $AB:=A\otimes B$ is given by an inner product space $\mathbb{C}^{d_A}\otimes\mathbb{C}^{d_B}$, where $\otimes$ denotes the Kronecker tensor product. Correspondingly, states on $AB$ are matrices $W_{AB}\in\mathcal{S}_A\otimes\mathcal{S}_B$ with $\mathrm{Tr}[W_{AB}]=1$ and $W_{AB}\succeq0$. {\it Separable states} are states on $AB$ that are in the convex hull of product states $W_A\otimes W_B$, with $W_A$ and $W_B$ states on $A$ and $B$, respectively. The {\it maximally entangled state} $\Phi_{AB}:=\ket{\Phi}\bra{\Phi}_{AB}$ on $AB$ for $d:=d_A=d_B$ is not separable and defined via
\begin{align}
\text{$\ket{\Phi}_{AB}:=\frac{1}{\sqrt{d}}\sum_{x=1}^d\ket{x}_A\otimes\ket{x}_B$ for some orthonormal basis $\{\ket{x}\}_{x=1}^d$ of $\mathbb{C}^d$.}
\end{align}
The {\it swap operator} $F_{AB}$ on $A\otimes B$ exchanges the two quantum systems, i.e., $F_{AB}(W_A\otimes W_B)=W_B\otimes W_A$ for every state $W_A$ and $W_B$ on $A$ and $B$, respectively. {\it Classical-quantum states} are bipartite states that can be written in the form
\begin{align}
W_{XB}=\sum_{x=1}^{d_X}p_x\ket{x}\bra{x}_X\otimes W_B^x
\end{align}
for a probability distribution $\{p_x\}_{x=1}^{d_A}$, an orthonormal basis $\{\ket{x}_X\}_{x=1}^{d_X}$ of $\mathbb{C}^{d_X}$, and states $W_B^x$ on $B$ for $x=1,\dots,d_X$. We refer to $X$ as the classical part of the bipartite classical-quantum system $XB$.

{\it Quantum channels} (or in short {\it channels}) are linear maps $\mathcal{N}_{A\to B}:\mathcal{S}_A\to\mathcal{S}_B$ that are trace preserving and completely positive\footnote{A linear map $\mathcal{N}_{A\to B}:\mathcal{S}_A\to\mathcal{S}_B$ is said to be completely positive if $\mathcal{N}_{A\to B}\otimes \mathcal{I}_C$ is a positive map for every quantum system $C$, where $\mathcal{I}_C$ denotes the identity map on $\mathcal{S}_C$} (cp). In particular, they map states from the input system $A$ to states on the output system $B$. We often abbreviate bipartite channels $\mathcal{I}_A\otimes\mathcal{N}_B$ that act trivially on the $A$-system as $\mathcal{N}_B$, where $\mathcal{I}_A$ denotes the identity channel on $\mathcal{S}_A$. The partial trace $\mathrm{Tr}_B[\cdot]$ is a channel from $AB$ to $A$ defined via
\begin{align}
\mathrm{Tr}_B[\cdot]:=\sum_{x=1}^{d_B}\big(1_A\otimes\bra{x}_B\big)(\cdot)\big(1_A\otimes\ket{x}_B\big),
\end{align}
where $1_A$ denotes the identity matrix on $A$, and $\{\ket{x}\}_{x=1}^{d_B}$ an orthonormal basis of $\mathbb{C}^{d_B}$. For bipartite states $W_{AB}$, we write $W_A=\mathrm{Tr}_B[W_{AB}]$ for the reduced state on $A$. {\it Quantum measurements} (or in short {\it measurements}) are a special case of channels that can be written in the form
\begin{align}
\mathcal{M}_{A\to I}(\cdot):=\sum_{i\in I}\mathrm{Tr}\left[M_A^i(\cdot)\right]\ket{i}\bra{i}_I
\end{align}
with an orthonormal basis $\{\ket{i}_I\}_{i\in I}$ and $M_A^i\succeq0$ $\forall i\in I$ with $\sum_{i\in I}M_A^i=1_A$.

The {\it Choi-Jamio\l{}kowski isomorphism} relates channels with states. For a channel $\mathcal{N}_{A\to B}$, its Choi state is given by
\begin{align}\label{eq:Choi-state}
J^{\mathcal{N}}_{BA'}:=(\mathcal{N}_{A\to B}\otimes\mathcal{I}_{A'})(\Phi_{AA'}),
\end{align}
where $d_{A'}:=d_A$. Note that $J^{\mathcal{N}}_{A'}=\frac{1_{A'}}{d_{A'}}$. Vice versa, for a bipartite state $W_{A'B}$ with $W_{A'}=\frac{1_{A'}}{d_{A'}}$, its Choi channel is given as
\begin{align}\label{eq:Choi-channel}
\mathcal{N}^W_{A\to B}:Z_A\mapsto d_A\cdot\mathrm{Tr}_{A'}\left[W_{A'B}(Z_{A'}^T\otimes1_B)\right],
\end{align}
where the transpose $T$ is taken with respect to the orthonormal basis of the maximally entangled state in~\eqref{eq:Choi-state}.

The distance between states is quantified by the metric induced from the Schatten one-norm $\|X\|_1:=\mathrm{Tr}\left[|X|\right]$. The distance between channels is quantified by the metric induced from the Diamond norm
\begin{align}
\|\mathcal{N}_{A\to B}\|_{\Diamond}:=\sup_{\|X\|_1\leq1}\|(\mathcal{N}_{A\to B}\otimes\mathcal{I}_A)(X_{AA})\|_1.
\end{align}

A multipartite state $W_{AB_1^n}$ on $AB_1^n\equiv AB_1\cdots B_n$ is called {\it symmetric with respect to $A$} if
\begin{align}
(\mathcal{I}_A\otimes\mathcal{U}_{B_1^n}^\pi)(W_{AB_1^n})=W_{AB_1^n}\quad\forall\pi\in\mathfrak{S}_n,
\end{align}
where $\mathfrak{S}_n$ denotes the symmetric group of $n$ elements and
\begin{align}
\mathcal{U}_{B_1^n}^\pi(W_{B_1}\otimes\cdots\otimes W_{B_n}):=W_{B_{\pi^{-1}(1)}}\otimes\cdots\otimes W_{B_{\pi^{-1}(n)}}.
\end{align}
A bipartite state $W_{AB}$ is called {\it $n$-extendable} if there exists a multipartite extension $W_{AB_1^n}$, i.e., $\mathrm{Tr}_{B_2^n}(W_{AB_1^n})=W_{AB}$, that is symmetric with respect to $A$.


\subsection{Previous work}

General de Finetti representation theorems state that if a multipartite state on $AB_1^n$ is symmetric with respect to $A$, then the reduced state on the first $k$ systems $AB_1^k$ is close to a separable mixture of independent and identical states for $k$ sufficiently smaller than $n$. De Finetti~\cite{deFinetti1937} first proved for the classical case with $A$ trivial, i.e.~$A=\mathbb{C}$, that if $n = \infty$ and $k$ is finite, then the statement holds exactly. Quantitative finite versions for the classical case were later proven and the state-of-the-art bounds can be found in~\cite{Diaconis80}. In the quantum setting, early works considered the $n=\infty$ setting including~\cite{STORMER196948,Hudson1976,Fannes1988,raggio89,Petz1990} in the mathematical physics community and~\cite{fuchs02} in the quantum information theory community. The first finite quantum de Finetti representation theorem was proved in~\cite{koenig05}. The state-of-the-art bounds from \cite{Christandl2007,koenig09c} show that for multipartite states $W_{AB_1^n}$ symmetric with respect to $A$, we have that
\begin{align}
\left\|W_{AB_1^k}-\sum_{i\in I}p_iW_A^i\otimes \left(W_B^i\right)^{\otimes k}\right\|_1\leq\frac{2kd_B^2}{n},
\end{align}
for a probability distribution $\{p_i\}_{i\in I}$ and states $W^i_A,W_B^i$ on $A$ and $B$, respectively. Note that the special case $k=1$ exactly recovers~\eqref{eq:intro-deFinetti}.


\subsection{Proof methods}

In the following, we provide a brief sketch of our proof ideas. For simplicity we restrict to $k=1$, which is the relevant case for~\eqref{eq:transformed_program}. Namely, we start with a multipartite state $W_{AB_1^n}$ symmetric with respect to $A$ and the goal is to identify constraints such that $W_{AB_1}$ is approximated by a mixture of states of the form
\begin{align}
\text{$W^i_A \otimes W^i_B$ with $\Lambda_{A\to C_A}\left(W_A^i\right)=X_{C_A}$ and $\Gamma_{B\to C_B}\left(W_B^i\right)=Y_{C_B}$.}
\end{align}
The standard approach for proving de Finetti theorems~\cite{Christandl2007} proceeds by measuring the systems $B_1^n$ with the uniform measurement on the symmetric subspace given by $\left\{ \ket{\psi}\bra{\psi}_B^{\otimes n} \right\}_{\psi}$. In this case, the candidate mixture of product states is given by
\begin{align}
\int p\left(\psi \right) d \ket{\psi} W_{A | \psi} \otimes \ket{\psi}\bra{\psi}_B\,,
\end{align}
where the integral is computed with respect to the Haar measure, $p(\psi) d\ket{\psi}$ denotes the probability of outcome $\psi$, and $W_{A|\psi}$ the state on $A$ conditioned on obtaining outcome $\psi$ in the measurement. The problem with this candidate is that, in this mixture, there will in general be many terms where
\begin{align}
\text{$\ket{\psi}\bra{\psi}_B$ is such that $\Gamma_{B\to C_B}\left(\ket{\psi}\bra{\psi}_B\right)\neq Y_{C_B}$.}
\end{align}
One could try to modify the measurement so that we only get $\ket{\psi}\bra{\psi}_B$ that satisfy the desired constraint, but this seems difficult. Instead, we use an alternative approach, where the candidate mixture of product states is chosen differently~\cite{koenig05,brandao13b}. Namely, starting from $W_{AB_1^n}$ a well-chosen measurement on the systems $B_2^n$ with measurement outcomes $z_2^n$ leads to the candidate mixture of product states
\begin{align}\label{eq:this-lemma}
\exc{z_2^n}{W_{A|z_2^n} \otimes W_{B|z_2^n}}\,.
\end{align}
The advantage of this candidate state is that by enforcing the right constraints on the global state $W_{AB_1^n}$, namely the ones in~\eqref{eq:intro-main}, we can ensure that $\Lambda_{A\to C_A}(W_{A|z_2^n})=X_{C_A}$ and $\Gamma_{B\to C_B}(W_{B|z_2^n})=Y_{C_B}$. Note that in order for this strategy to work properly, we need the chosen measurement to be informationally complete\,---\, that is, allowing to estimate the expectation value of arbitrary states\,----\,and have a small distortion in the sense that the loss in distinguishibility resulting from applying the measurement is small.


\subsection{Information-theoretic tools}\label{sec:proof-tools}

The starting point for our proof technique is the use of the chain rule of the conditional mutual information, first used in this context in~\cite{brandao11} and further exploited in~\cite{brandao13b}. More precisely, we will use the quantum relative entropy defined as
\begin{align}
D(\rho\|\sigma):=
\begin{cases} \tr(\rho \log \rho) - \tr(\rho \log \sigma) & \mbox{if } support(\rho)\subseteq support(\sigma) \\ \infty & \mbox{otherwise} \end{cases},
\end{align}
where $\rho$ and $\sigma$ are quantum states and the logarithm is taken with respect to the basis two. Recall that the support of an operator is defined as the orthogonal complement of its kernel. The following lemma, which can be found in~\cite{brandao13b}, says that if some classical systems $Z_1^n$ are symmetric with respect to $A$, then conditioning on $Z_1^m$ for some value of $m$ breaks the correlations between $A$ and $Z_{m+1}$. Before stating the lemma, we introduce notation that will be used throughout the section. For a state $\rho_{A Z}$ with a classical $Z$-system, we write
\begin{align}
\rho_{A|z}:= \frac{\mathrm{Tr}_{Z}\Big[\rho_{AZ} \left(1_{A} \otimes \ket{z}\bra{z}\right)\Big]}{\mathrm{Tr}\Big[\rho_{AZ} \left(1_{A} \otimes \ket{z}\bra{z}\right)\Big]}.
\end{align}
We simply write $\exc{z_{1}^{m}}{\cdot}$ for the expectation over the choices of $z_1^m$ and the distribution will be clear from the context.

\begin{lemma}\cite{brandao13b}\label{lem:conditioning-breaks-correlations}
Let $\rho_{A Z_1^n}$ be a classical-quantum state with the $Z_1^n$-systems classical and $\mathcal{U}^\pi_{Z_1^n}(\rho_{A Z_1^n})=\rho_{A Z_1^n}$ for all $\pi\in\mathfrak{S}_n$. Then, there exists $0\leq m < n$ such that
\begin{align}
\exc{z_{1}^{m}}{D(\rho_{AZ_{m+1}|z_1^m} \| \rho_{A|z_1^m} \otimes \rho_{Z_{m+1}|z_1^m})} &\leq \frac{\log d_A}{n}
\end{align}
as well as
\begin{align}
\exc{z_{1}^{m}}{\| \rho_{AZ_{m+1}|z_1^m} - \rho_{A|z_1^m} \otimes \rho_{Z_{m+1}|z_1^m} \|_1^2} &\leq \frac{(2 \ln 2) \log d_A}{n},
\end{align}
where $\ln(\cdot)$ denotes the natural logarithm.
\end{lemma}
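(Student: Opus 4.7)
The plan is to argue via the chain rule of quantum mutual information together with a pigeonhole step, and then pass to the trace-distance statement through Pinsker's inequality. Define the quantum mutual information $I(A:Z_1^n)_\rho := D(\rho_{AZ_1^n}\|\rho_A\otimes\rho_{Z_1^n})$. Because the systems $Z_1^n$ are classical, $H(A|Z_1^n)\geq 0$, and hence
\begin{align*}
I(A:Z_1^n)_\rho = H(A)_\rho - H(A|Z_1^n)_\rho \leq H(A)_\rho \leq \log d_A .
\end{align*}
Here the symmetry hypothesis plays no role for this global upper bound; it is simply the standard fact that classical side information cannot make a quantum system more uncertain than its own dimension allows.

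Next I would invoke the chain rule for mutual information:
\begin{align*}
I(A:Z_1^n)_\rho = \sum_{m=0}^{n-1} I(A:Z_{m+1}\,|\,Z_1^m)_\rho .
\end{align*}
Each summand is non-negative (strong subadditivity / non-negativity of conditional mutual information). Combined with the bound $I(A:Z_1^n)\leq \log d_A$, the pigeonhole principle guarantees the existence of some index $0\leq m<n$ with
\begin{align*}
I(A:Z_{m+1}\,|\,Z_1^m)_\rho \leq \frac{\log d_A}{n}.
\end{align*}
Since $Z_1^m$ is classical, the conditional mutual information rewrites as an average of relative entropies on the conditioned states,
\begin{align*}
I(A:Z_{m+1}\,|\,Z_1^m)_\rho = \exc{z_1^m}{D\bigl(\rho_{AZ_{m+1}|z_1^m}\,\big\|\,\rho_{A|z_1^m}\times\rho_{Z_{m+1}|z_1^m}\bigr)},
\end{align*}
which yields exactly the first inequality. (This step is where the classicality of the $Z_i$'s matters: it lets one pull conditioning outside the divergence.)

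For the second inequality, I would apply the quantum Pinsker inequality pointwise inside the expectation. In the convention where $D$ is measured in bits,
\begin{align*}
\| \rho_{AZ_{m+1}|z_1^m} - \rho_{A|z_1^m}\times\rho_{Z_{m+1}|z_1^m}\|_1^2 \leq (2\ln 2)\, D\bigl(\rho_{AZ_{m+1}|z_1^m}\,\big\|\,\rho_{A|z_1^m}\times\rho_{Z_{m+1}|z_1^m}\bigr),
\end{align*}
and taking $\exc{z_1^m}{\cdot}$ on both sides together with the relative entropy bound from the previous step gives the claimed $(2\ln 2)\log d_A/n$ bound.

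The main subtlety, and hence the only real ``obstacle'', is keeping track of the conventions: one has to use the classical-$Z$ identity that turns the conditional divergence into an expected divergence over outcomes, and apply Pinsker in bits so that the constant $2\ln 2$ appears correctly. Everything else is chain rule plus pigeonhole plus Pinsker; permutation symmetry is not invoked in the proof itself and is only present because the lemma will be applied in symmetric settings later.
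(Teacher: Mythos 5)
Your proposal is correct and follows essentially the same route as the paper: bound $I(A:Z_1^n)_\rho\leq\log d_A$ (using classicality of $Z_1^n$), apply the chain rule and pigeonhole to find $m$ with $I(A:Z_{m+1}|Z_1^m)_\rho\leq\frac{\log d_A}{n}$, rewrite this as an expectation of relative entropies over the classical outcomes $z_1^m$, and finish with Pinsker's inequality. Your side remark is also accurate: the permutation invariance is not needed for this lemma itself, only in its later applications.
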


\begin{proof}
For the quantum mutual information we have $I\left(A:Z_1^n\right)_{\rho}:=D(\rho_{AZ_1^n}\|\rho_A\otimes \rho_{Z_1^n}) \leq \log d_A$ as well as (see, e.g., \cite[Chapter 11]{nielsen00})
\begin{align}
I\left(A:Z_1^n\right)_{\rho}=\sum_{m=0}^{n-1} I(A : Z_{m+1} | Z_1^m)_{\rho}
\end{align}
for the quantum conditional mutual information $I(A : Z_{m+1} | Z_1^m)_\rho:=I(A:Z_1^{m+1})_\rho-I(A:Z_1^m)_\rho$. As a result, there exists an $m \in \{0, \cdots, n-1\}$ such that $I(A : Z_{m+1} | Z_1^m)_{\rho} \leq \frac{\log d_A}{n}$, which implies
\begin{align}
\exc{z_1^m}{I(A:Z_{m+1})_{\rho_{AZ_{m+1}|z_1^m}}} \leq \frac{\log d_A}{n},
\end{align}
where we used $I(A : Z_{m+1} | Z_1^m)_{\rho}=\exc{z_1^m}{I(A:Z_{m+1})_{\rho_{AZ_{m+1}|z_1^m}}}$, which holds since the conditioning systems are classical. The second statement then follows directly from Pinsker's inequality $D(\rho\|\sigma)\geq\frac{1}{2\ln 2}\left\|\rho-\sigma\right\|_1^2$ \cite[Theorem 5.38]{watrous-book}.
\end{proof}

To prove the de Finetti theorem, we will crucially make use of so-called informationally complete measurements for which the loss in distinguishability, or {\it distortion}, can be bounded.

\begin{lemma}{\cite[Lemma 14]{brandao13b}}\label{lem:product-ic-meas}
There exist a product measurement $\mathcal{M}_A\otimes\mathcal{M}_B$ with finitely many outcomes such that for any Hermitian and traceless matrix $\xi_{AB}$ on $A\otimes B$, we have 
\begin{align}
\|(\mathcal{M}_{A} \otimes \mathcal{M}_{B})(\xi_{AB})\|_1 \geq \frac{1}{18\sqrt{d_A d_B}}\| \xi_{AB}\|_{1}.
\end{align}
\end{lemma}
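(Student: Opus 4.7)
The plan is to construct $\mathcal{M}_A$ and $\mathcal{M}_B$ from weighted state $4$-designs on the respective factors and then invoke a tensorized Paley--Zygmund argument, essentially the approach of~\cite{lancien13} used in Lemma~14 of~\cite{brandao13b}. Concretely, I would fix weighted state $4$-designs $\{(w_i^A,\ket{\psi_i^A})\}_{i=1}^{N_A}$ on $A$ and $\{(w_j^B,\ket{\psi_j^B})\}_{j=1}^{N_B}$ on $B$, i.e.\ finite families satisfying $\sum_i w_i^A (\ket{\psi_i^A}\bra{\psi_i^A})^{\otimes t}=\int d\psi\,(\ket{\psi}\bra{\psi})^{\otimes t}$ for $t\in\{1,2,3,4\}$, and analogously for $B$. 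Such designs with polynomially many outcomes are known to exist. From these I define the rank-$1$ POVMs $\mathcal{M}_A(\rho):=\sum_i d_A w_i^A \bra{\psi_i^A}\rho\ket{\psi_i^A}\,\ket{i}\bra{i}$ and likewise $\mathcal{M}_B$. The $4$-design condition is what allows all moment computations up to order four to be replaced by Haar integrals over pure states, which is the technical reason a $4$-design (rather than just a $2$-design) is needed.

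Expanding the $1$-norm of the output one obtains
\begin{align*}
\|(\mathcal{M}_A \otimes \mathcal{M}_B)(\xi_{AB})\|_1 = d_A d_B \sum_{i,j} w_i^A w_j^B \bigl|\bra{\psi_i^A \otimes \psi_j^B}\xi_{AB}\ket{\psi_i^A \otimes \psi_j^B}\bigr|\,.
\end{align*}
Writing $Y:=\bra{\psi\otimes\phi}\xi_{AB}\ket{\psi\otimes\phi}$ for independent Haar-distributed $\psi,\phi$, the $2$-design identity applied on each side yields $\mathbb{E}[Y^2]\geq c_2\,\|\xi_{AB}\|_2^2/(d_A d_B)^2$ for an explicit $c_2>0$, after decomposing $\xi_{AB}$ into a doubly-traceless part together with correction terms of the form $\xi_A\otimes 1_B/d_B$ and $1_A/d_A\otimes\xi_B$ (handled by falling back on the single-system estimate for each factor). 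The $4$-design identity yields $\mathbb{E}[Y^4]\leq c_4\,\|\xi_{AB}\|_2^4/(d_A d_B)^4$ with an explicit $c_4>0$. The inequality $\mathbb{E}|Y|\geq(\mathbb{E}[Y^2])^{3/2}/(\mathbb{E}[Y^4])^{1/2}$, which follows from two applications of Cauchy--Schwarz, then gives $\mathbb{E}|Y|\geq c_5\,\|\xi_{AB}\|_2/(d_A d_B)$ with $c_5=c_2^{3/2}/c_4^{1/2}$.

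Combining this with the comparison $\|\xi_{AB}\|_2\geq \|\xi_{AB}\|_1/\sqrt{d_A d_B}$ (which follows from $\mathrm{rank}(\xi_{AB})\leq d_A d_B$ by Cauchy--Schwarz on the singular values) produces $\|(\mathcal{M}_A\otimes\mathcal{M}_B)(\xi_{AB})\|_1\geq c_5\,\|\xi_{AB}\|_1/\sqrt{d_A d_B}$, and tracking all constants through the $2$- and $4$-design moment integrals together with the Cauchy--Schwarz step lands on the explicit value $1/18$. The main obstacle I anticipate is that the decomposition into a doubly-traceless part plus rank-one corrections forces one to verify the $\sqrt{d}$-scaling simultaneously in the fully bipartite regime and in the two effectively single-system regimes; a direct application of Paley--Zygmund to $\xi_{AB}$ without this decomposition would lose a factor that grows with the local dimensions, and tight bookkeeping of constants at each step is what ultimately pins the prefactor down to $1/18$.
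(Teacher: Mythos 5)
The paper does not prove this lemma at all: it is imported verbatim from Lemma~14 of~\cite{brandao13b}, whose proof rests on~\cite{lancien13}, and your plan (rank-one POVMs from state $4$-designs, matching the second and fourth Haar moments, the inequality $\mathbb{E}|Y|\ge(\mathbb{E}[Y^2])^{3/2}/(\mathbb{E}[Y^4])^{1/2}$, and finally $\|\xi_{AB}\|_2\ge\|\xi_{AB}\|_1/\sqrt{d_Ad_B}$) is exactly that route; your expansion of $\|(\mathcal{M}_A\otimes\mathcal{M}_B)(\xi_{AB})\|_1$ and the two-fold Cauchy--Schwarz derivation of the moment inequality are correct, as is the observation that the $4$-design property is what lets the discrete averages be replaced by Haar integrals.

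There is, however, a genuine gap in the moment estimates as you state them. The fourth-moment bound $\mathbb{E}[Y^4]\le c_4\,\|\xi_{AB}\|_2^4/(d_Ad_B)^4$ is false for operators that are only globally traceless: take $\xi_{AB}=\theta_A\otimes\frac{1_B}{d_B}$ with $\theta_A$ traceless, so that $Y=\langle\psi|\theta_A|\psi\rangle/d_B$. Then $\mathbb{E}[Y^2]=\|\theta_A\|_2^2/(d_A(d_A+1)d_B^2)$ and already Jensen gives $\mathbb{E}[Y^4]\ge(\mathbb{E}[Y^2])^2\ge\|\theta_A\|_2^4/(4d_A^4d_B^4)$, whereas $\|\xi_{AB}\|_2^4/(d_Ad_B)^4=\|\theta_A\|_2^4/(d_A^4d_B^6)$, so any admissible $c_4$ must grow like $d_B^2$. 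Moreover, the decomposition you invoke is attached to the wrong step and would not repair this: the second moment needs no decomposition, since a direct swap-trick computation gives, for traceless $\xi_{AB}$,
\begin{align*}
\mathbb{E}[Y^2]=\frac{\|\xi_A\|_2^2+\|\xi_B\|_2^2+\|\xi_{AB}\|_2^2}{d_A(d_A+1)\,d_B(d_B+1)}\;\ge\;\frac{\|\xi_{AB}\|_2^2}{4\,d_A^2d_B^2}\,,
\end{align*}
i.e.\ the partial-trace terms only help there; it is the fourth moment that is sensitive to $\xi_A,\xi_B$, and ``falling back on the single-system estimate for each factor'' does not recombine into a lower bound on $\|(\mathcal{M}_A\otimes\mathcal{M}_B)(\xi_{AB})\|_1$ because the triangle inequality over the three pieces points the wrong way. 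The standard fix, and what~\cite{lancien13} actually establishes, is a relative bound $\mathbb{E}[Y^4]\le C\,(\mathbb{E}[Y^2])^2$ with a universal constant, both sides retaining the $\|\xi_A\|_2$ and $\|\xi_B\|_2$ contributions; Berger's inequality then yields $\mathbb{E}|Y|\ge C^{-1/2}(\mathbb{E}[Y^2])^{1/2}\ge c\,\|\xi_{AB}\|_2/(d_Ad_B)$, the rest of your argument goes through unchanged, and the constant $1/18=(3\sqrt{2})^{-2}$ is exactly what these universal constants produce. Note finally that in the only place the lemma is used in the paper, namely the definition of $f(A,B)$, the maximization is restricted to $\xi_A=0$ and $\xi_B=0$; in that doubly traceless case your fourth-moment bound is correct as stated and your argument is complete.
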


This \cite[Lemma 14]{brandao13b} follows from the methods in \cite{lancien13}. More generally, we define the {\it minimal distortion} for the bipartite system $A \otimes B$ as
\begin{align}\label{eq:def-f1}
f(A, B):=\inf_{\cM_{A}, \cM_{B}}\max_{\substack{\xi_{AB}^{\dagger} = \xi_{AB}\\ \xi_A = 0, \xi_B = 0}} \frac{\| \xi_{AB} \|_1}{\| (\cM_{A} \otimes \cM_B)(\xi_{AB})\|_1},
\end{align}
where the infimum is over all product measurements on $AB$. In this notation, Lemma~\ref{lem:product-ic-meas} shows that
\begin{align}
f(A, B) \leq 18 \sqrt{d_A d_B}.
\end{align}
Note that in the definition of $f(A,B)$ we restricted the maximization to matrices satisfying $\xi_{A} = 0$ and $\xi_{B} = 0$ because this is sufficient for us.

A drawback of Lemma~\ref{lem:product-ic-meas} is that the distortion depends on the dimension $d_A$. More generally, we define the {\it minimal distortion with side information} for a system $B$ as
\begin{align}\label{eq:def-f2}
f(B|\cdot):=\inf_{\cM_B} \sup_{\substack{\xi_{AB}^\dagger=\xi_{AB}\\ \xi_{A} = 0, \xi_B = 0}} \frac{\| \xi_{AB} \|_1}{\| (\cI_{A} \otimes \cM_B)(\xi_{AB})\|_1},
\end{align}
where the infimum is over all measurements on $B$ and the supremum is over all finite-dimensional systems $A$. In Lemma~\ref{lem:ic-meas-side-info} we give an elementary proof that
\begin{align}
f(B|\cdot) \leq d_B^2 (d_B+1)
\end{align}
using state two-designs and properties of weighted non-commutative $L_p$-spaces. In fact, after completion of our work we realised that methods from operator space theory even give the stronger bound
\begin{align}
f(B|\cdot) \leq \sqrt{18d_B^3},
\end{align}
which is discussed in \cite[Equation 66]{brandao13c}. We leave it as an open question to determine the exact dimensional dependence of the minimal distortion with side information.


\subsection{Main technical result}

Combining the tools from the previous subsection we find the following de Finetti theorem with linear constraints.

\begin{theorem}\label{thm:main-deFinetti}
Let $\rho_{AB_1^n}$ be a quantum state, $\Lambda_{A\to C_A},\Gamma_{B\to C_B}$ linear maps, and $X_{C_A},Y_{C_B}$ matrices such that
\begin{align}
\mathcal{U}_{B_1^n}^\pi(\rho_{AB_1^n})&=\rho_{AB_1^n}\;\forall\pi\in\mathfrak{S}_n\qquad\text{symmetric with respect to $A$}\\
\Lambda_{A\to C_A}(\rho_{AB_1^n})&=X_{C_A}\otimes\rho_{B_1^n}\qquad\quad\;\;\text{linear constraint on $A$}\\
\Gamma_{B_n\to C_B}(\rho_{B_1^n})&=\rho_{B_1^{n-1}}\otimes Y_{C_B}\qquad\quad\text{linear constraint on $B$}.
\end{align}
Then, we have that
\begin{align}
\left\|\rho_{AB}-\sum_{i\in I}p_i\sigma^i_{A}\otimes\omega^i_B\right\|_1\leq&\;\min\big\{ f(A,B) , f(B|\cdot) \big\}\sqrt{ \frac{(2 \ln 2) \log \left(d_A\right)}{n}}
\end{align}
with $\{p_i\}_{i\in I}$ a probability distribution, $\rho_{AB}=\tr_{B_2^n}\left[\rho_{AB_1^n}\right]$, and quantum states $\sigma^i_A,\omega_B^i$ such that for $i\in I$:
\begin{align}
\Lambda_{A\to C_A}\left(\sigma^i_A\right)=X_{C_A}\quad\text{and}\quad\Gamma_{B\to C_B}\left(\omega^i_B\right)=Y_{C_B}.
\end{align}
As stated in Section~\ref{sec:proof-tools}, we can, e.g., take $f(A,B)\leq18\sqrt{d_A d_B}$ or $f(B|\cdot)\leq\sqrt{18d_B^3}$.
\end{theorem}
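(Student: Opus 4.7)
My plan is to exhibit the desired convex decomposition of $\rho_{AB_1}$ as an average over the outcomes of a well-chosen measurement applied to $B_2,\ldots,B_n$. The first step is to reduce to a classical symmetric setting: apply the distortion-minimizing POVM $\cM_B$ from Lemma~\ref{lem:ic-meas-side-info} (or the product variant from Lemma~\ref{lem:product-ic-meas}) to each of $B_1,\ldots,B_n$, obtaining a state $\rho_{AZ_1^n}$ with classical registers $Z_1^n$. Since each $B_i$ is measured identically, $\rho_{AZ_1^n}$ inherits the exchangeability of $Z_1^n$ relative to $A$, and Lemma~\ref{lem:conditioning-breaks-correlations} therefore yields an index $m\in\{0,\ldots,n-1\}$ with
\begin{align*}
\exc{z_1^m}{\bigl\|\rho_{AZ_{m+1}|z_1^m} - \rho_{A|z_1^m}\otimes\rho_{Z_{m+1}|z_1^m}\bigr\|_1^2} \leq \frac{(2\ln 2)\log d_A}{n}.
\end{align*}
By the exchangeability of $Z_1^n$ I would relabel the positions so that the decorrelated system is $Z_1$, conditioned on $z_2^{m+1}$.

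The second step is to lift this near-product structure from the classical outcome $Z_1$ back to the quantum system $B_1$. Since $\rho_{AZ_1|z_2^{m+1}}=(\cI_A\otimes\cM_B)(\rho_{AB_1|z_2^{m+1}})$, the Hermitian operator $\xi:=\rho_{AB_1|z_2^{m+1}}-\rho_{A|z_2^{m+1}}\otimes\rho_{B_1|z_2^{m+1}}$ is traceless on both $A$ and $B_1$, so Lemma~\ref{lem:ic-meas-side-info} supplies $\|\xi\|_1\leq f(B|\cdot)\|(\cI_A\otimes\cM_B)(\xi)\|_1$. Taking expectation over $z_2^{m+1}$, applying Jensen to the squared-norm bound above, then using $\rho_{AB_1}=\exc{z_2^{m+1}}{\rho_{AB_1|z_2^{m+1}}}$ together with the triangle inequality, delivers the bound claimed in the theorem with $p_i=p(z_2^{m+1})$, $\sigma_A^i=\rho_{A|z_2^{m+1}}$, and $\omega_B^i=\rho_{B_1|z_2^{m+1}}$. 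The $f(A,B)$ variant follows identically after observing $\|(\cM_A\otimes\cM_B)(\xi)\|_1\leq\|(\cI_A\otimes\cM_B)(\xi)\|_1$ by data processing, so the product measurement bound can be invoked in place of the side-information one.

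It remains to check that each $\sigma_A^i$ and $\omega_B^i$ satisfies the prescribed linear constraint. Since $\Lambda_{A\to C_A}$ commutes with the measurement on $B_2,\ldots,B_{m+1}$, the hypothesis $\Lambda(\rho_{AB_1^n})=X_{C_A}\otimes\rho_{B_1^n}$ descends to $\Lambda(\rho_{AB_1|z_2^{m+1}})=X_{C_A}\otimes\rho_{B_1|z_2^{m+1}}$, so tracing over $B_1$ yields $\Lambda(\sigma_A^i)=X_{C_A}$. For the $B$-side, permutation symmetry combined with $\Gamma_{B_n\to C_B}(\rho_{B_1^n})=\rho_{B_1^{n-1}}\otimes Y_{C_B}$ implies $\Gamma_{B_1\to C_B}(\rho_{B_1^n})=Y_{C_B}\otimes\rho_{B_2^n}$, and this relation is preserved under conditioning on $z_2^{m+1}$ in the same way, giving $\Gamma(\omega_B^i)=Y_{C_B}$. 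The main obstacle in the whole argument is the distortion bound with quantum side information (Lemma~\ref{lem:ic-meas-side-info}): its independence of $d_A$ is exactly what removes the $\sqrt{d_A}$ factor one would otherwise inherit from $f(A,B)$. Since that lemma is already proved, what remains is the bookkeeping combining symmetry, chain rule, distortion, and propagation of the linear constraints.
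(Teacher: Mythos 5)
Your proposal is correct and follows essentially the same route as the paper's proof: measure the $B$-systems with the distortion-bounded informationally complete measurement, invoke Lemma~\ref{lem:conditioning-breaks-correlations} via the chain rule, lift back to $B_1$ using Lemma~\ref{lem:ic-meas-side-info} (or Lemma~\ref{lem:product-ic-meas} with data processing for the $f(A,B)$ case), apply Jensen and convexity, and check that the linear constraints descend to the conditional states. Your relabelling so that the retained system is $B_1$ conditioned on $z_2^{m+1}$ is only a cosmetic difference from the paper's choice of keeping $B_{m+1}$ conditioned on $z_1^m$.
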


\begin{proof}
Let $\cM_{B}$ be a measurement of the $B$ system and call the outcome system $Z$. Consider the state $\rho_{AZ_1^n}$ obtained by measuring all the $B$ systems with $\cM_{B}$. This distribution is symmetric with respect to $A$ and so we can apply Lemma~\ref{lem:conditioning-breaks-correlations}. We find that there exists an $m \in \{0, \cdots, n-1\}$ such that
\begin{align}
\exc{z_1^m}{\| \rho_{AZ_{m+1}|z_1^m} - \rho_{A|z_1^m} \otimes \rho_{Z_{m+1}|z_1^m} \|_1^2} \leq \frac{(2 \ln 2) \log d_{A}}{n}.
\end{align}
Note that we have for any $z_1^m$, $\rho_{AZ_{m+1}|z_1^m} = (\cI_{A} \otimes \cM_{B})(\rho_{AB_{m+1}|z_1^m})$ and correspondingly $\rho_{Z_{m+1}|z_1^m} = \cM_{B}(\rho_{B_{m+1}|z_1^m})$. Now, we choose the measurement $\cM_{B}$ to be as in Lemma~\ref{lem:ic-meas-side-info} and achieving $f(B|\cdot)$ in \eqref{eq:def-f2}, we get that $\| \xi_{AB} \|_1^2 \leq f(B|\cdot)^2 \| (\cI_{A} \otimes \cM_{B})(\xi_{AB}) \|_1^2$, where $\xi_{AB} = \rho_{AB_{m+1}|z_1^m} - \rho_{A|z_1^m} \otimes \rho_{B_{m+1}|z_1^m}$. As a result, we have
\begin{align}
\exc{z_1^m}{\| \rho_{AB_{m+1}|z_1^m} - \rho_{A|z_1^m} \otimes \rho_{B_{m+1}|z_1^m} \|_1^2} \leq f(B|\cdot)^2\frac{(2 \ln 2) \log d_{A}}{n}.
\end{align}
But note we can also choose measurements $\cM_{A}$ and $\cM_{B}$ achieving $f(A,B)$ in \eqref{eq:def-f1}. In this case,
\begin{align} 
\| \rho_{AB_{m+1}|z_1^m} - \rho_{A|z_1^m} \otimes \rho_{B_{m+1}|z_1^m} \|_1^2 &\leq f(A,B)^2\| (\cM_A \otimes \cM_{B})(\rho_{AB_{m+1}|z_1^m} - \rho_{A|z_1^m} \otimes \rho_{B_{m+1}|z_1^m}) \|_1^2 \\
&\leq f(A,B)^2 \| (\cI_A \otimes \cM_{B})(\rho_{AB_{m+1}|z_1^m} - \rho_{A|z_1^m} \otimes \rho_{B_{m+1}|z_1^m}) \|_1^2 \\
&= f(A,B)^2 \| \rho_{AZ_{m+1}|z_1^m} - \rho_{A|z_1^m} \otimes \rho_{Z_{m+1}|z_1^m} \|_1^2 ,
\end{align}
where we used the fact that the trace norm cannot increase when applying the quantum channel $\cM_{A}$ \cite[Theorem 3.39]{watrous-book}. As a result, we get
\begin{align}
\exc{z_1^m}{\| \rho_{AB_{m+1}|z_1^m} - \rho_{A|z_1^m} \otimes \rho_{B_{m+1}|z_1^m} \|_1^2} \leq f(A,B)^2 \frac{(2 \ln 2) \log d_{A}}{n}.
\end{align}
Now, using the convexity of the square function, we get
\begin{align}
\exc{z_{1}^m}{\| \rho_{AB_{m+1}|z_1^m} - \rho_{A|z_1^m} \otimes \rho_{B_{m+1}|z_1^m} \|_1}&\leq \sqrt{\exc{z_{1}^m}{\| \rho_{AB_{m+1}|z_1^m} - \rho_{A|z_1^m} \otimes \rho_{B_{m+1}|z_1^m} \|^2_1}} \\
&\leq  \min\big\{f(A, B), f(B|\cdot)\big\}\sqrt{\frac{(2 \ln 2) \log d_{A}}{n}}.
\end{align}
Then, using the convexity of the norm and the fact that $\exc{z_1^m}{\rho_{AB_{m+1}|z_1^m}} = \rho_{AB_{m+1}}$, we obtain
\begin{align}
\left\| \rho_{AB_{m+1}} - \exc{z_1^m}{\rho_{A|z_1^m} \otimes \rho_{B_{m+1}|z_1^m}} \right\|_1 &\leq \min\big\{f(A,B), f(B|\cdot)\big\}\sqrt{\frac{(2 \ln 2) \log d_{A}}{n}}.
\end{align}
The state $\exc{z_1^m}{\rho_{A|z_1^m} \otimes \rho_{B_{m+1}|z_1^m}}$ corresponds to our candidate mixture of product states. It now remains to show that all the states in the mixture satisfy the linear constraints. Indeed we have for any $z_1^m$, writing $M_{z}$ for matrices of the measurement $\cM_{B}$,
\begin{align}
\Lambda_{A \to C_{A}}(\rho_{A|z_1^m})
&= \frac{\tr_{B_1^m}\Big[(1_{A}\otimes M_{z_1} \otimes \cdots \otimes M_{z_m}) \Lambda_{A \to C_{A}}(\rho_{A B_1^m})\Big]}{\tr\Big[(1_{A}\otimes M_{z_1} \otimes \cdots \otimes M_{z_m})\rho_{A B_1^m}\Big]} \\
&= \frac{\tr_{B_1^m}\Big[(1_{A}\otimes M_{z_1} \otimes \cdots \otimes M_{z_m}) (X_{C_A} \otimes \rho_{B_1^m}) \Big]}{\tr\Big[(1_{A}\otimes M_{z_1} \otimes \cdots \otimes M_{z_m})\rho_{A B_1^m}\Big]} \\
&= X_{C_A},
\end{align}
and similarly
\begin{align}
\Gamma_{B \to C_{B}}(\rho_{B_{m+1}|z_1^m})
&= \frac{\tr_{B_1^m}\Big[(M_{z_1} \otimes \cdots \otimes M_{z_m} \otimes 1_{C_{B}} ) \Gamma_{B_{m+1} \to C_{B}}(\rho_{B_1^{m+1}})\Big]}{\tr\Big[(M_{z_1} \otimes \cdots \otimes M_{z_m} \otimes 1_{B_{m+1}} )\rho_{B_1^{m+1}}\Big]} \\
&= \frac{\tr_{B_1^m}\Big[(M_{z_1} \otimes \cdots \otimes M_{z_m} \otimes 1_{B_{m+1}} ) (\rho_{B_1 \cdots B_{m}} \otimes Y_{C_B})\Big]}{\tr\Big[(M_{z_1} \otimes \cdots \otimes M_{z_m} \otimes 1_{B_{m+1}} )\rho_{B_1^{m+1}}\Big]} \\
&= Y_{C_B}.
\end{align}
\end{proof}

This can then be extended to a full quantum de Finetti theorem for any reduced state $\rho_{AB_1^{k}}$ with $0<k<n$.

\begin{theorem}\label{cor:main-deFinetti}
For the same setting as in Theorem~\ref{thm:main-deFinetti}, we have for $0<k<n$ that
\begin{align}
\left\|\rho_{AB_1^{k} }-\sum_{i\in I}p_i\sigma^i_{A}\otimes\left(\omega^i_B\right)^{\otimes k}\right\|_1\leq kf(B| \cdot )\sqrt{ (2 \ln 2) \frac{ \log d_{A} + (k - 1) \log d_B }{n-k+1}}.
\end{align}
\end{theorem}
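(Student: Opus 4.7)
My plan is to extend Theorem~\ref{thm:main-deFinetti} from a single reduced copy to $k$ copies through a hybrid (telescoping) argument that applies Theorem~\ref{thm:main-deFinetti} once per position $\ell=1,\dots,k$.

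First I construct the candidate mixture. Measure $B_{k+1},\dots,B_n$ with the state two-design measurement $\cM_B$ from Lemma~\ref{lem:ic-meas-side-info} and call the classical outcome string $z = z_{k+1}^n$. The $\mathfrak{S}_n$-symmetry of $\rho_{AB_1^n}$ in $B_1,\dots,B_n$ implies that the conditional state $\rho_{AB_1^k|z}$ is permutation-invariant in $B_1,\dots,B_k$, so the single-system marginals $\rho_{B_j|z}$ agree across $j\in\{1,\dots,k\}$; denote this common value by $\rho_{B|z}$. Define $\mu := \exc{z}{\rho_{A|z}\otimes (\rho_{B|z})^{\otimes k}}$, which is already of the desired mixture form.

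Next I telescope. Setting $\tau_j^{(z)} := \rho_{AB_1^j|z}\otimes (\rho_{B|z})^{\otimes(k-j)}$ for $j=0,\dots,k$, the triangle inequality applied pointwise in $z$ gives
\begin{equation*}
\bigl\|\rho_{AB_1^k|z} - \rho_{A|z}\otimes (\rho_{B|z})^{\otimes k}\bigr\|_1 \le \sum_{\ell=1}^{k}\bigl\| \rho_{AB_1^\ell|z} - \rho_{AB_1^{\ell-1}|z}\otimes \rho_{B_\ell|z}\bigr\|_1.
\end{equation*}
Taking expectation in $z$ and using convexity of the trace norm together with $\exc{z}{\rho_{AB_1^k|z}} = \rho_{AB_1^k}$, we obtain
\begin{equation*}
\|\rho_{AB_1^k}-\mu\|_1 \le \sum_{\ell=1}^{k}\exc{z}{\bigl\|\rho_{AB_1^\ell|z}-\rho_{AB_1^{\ell-1}|z}\otimes\rho_{B_\ell|z}\bigr\|_1}.
\end{equation*}

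To bound the $\ell$-th term I view $\rho_{AB_1^n}$ as a state on the enlarged ``big-$A$'' system $AB_1^{\ell-1}$ (of dimension $d_A d_B^{\ell-1}$) together with the $n-\ell+1$ permutation-symmetric copies $B_\ell,B_{\ell+1},\dots,B_n$. The hypotheses of Theorem~\ref{thm:main-deFinetti} (symmetry of these $n-\ell+1$ copies, plus the induced linear constraint inherited from $\Lambda_A$ and $\Gamma_B$) are satisfied in this view, and the theorem yields the bound
\begin{equation*}
f(B|\cdot)\sqrt{(2\ln 2)\,(\log d_A + (\ell-1)\log d_B)/(n-\ell+1)} \le f(B|\cdot)\sqrt{(2\ln 2)\,(\log d_A + (k-1)\log d_B)/(n-k+1)},
\end{equation*}
using $\ell-1 \le k-1$ and $n-\ell+1 \ge n-k+1$. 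Summing over $\ell=1,\dots,k$ gives the claimed inequality. The constraints $\Lambda_A(\sigma^i_A)=X_{C_A}$ and $\Gamma_B(\omega^i_B)=Y_{C_B}$ carry over exactly as in the proof of Theorem~\ref{thm:main-deFinetti}: the measurement on $B_{k+1}^n$ commutes with $\Lambda_A$, and by the $B$-side symmetry plus the original $\Gamma$-constraint the common conditional marginal $\rho_{B|z}$ is sent to $Y_{C_B}$.

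The main obstacle I expect is aligning the measurements across the $k$ per-term applications: the existential conditioning size $m$ supplied by Lemma~\ref{lem:conditioning-breaks-correlations} inside each application of Theorem~\ref{thm:main-deFinetti} could a priori depend on $\ell$, which would call for different measurement subsets at different $\ell$'s and so break the single-mixture definition of $\mu$. The fix is to exploit the full $\mathfrak{S}_n$-symmetry of $\rho_{AB_1^n}$: for each $\ell$, by relabeling within the set of $n-\ell+1$ symmetric copies, the measured systems in Lemma~\ref{lem:conditioning-breaks-correlations} can be taken to lie in $\{B_{k+1},\dots,B_n\}$ and the chain-rule bound leading to the $(n-\ell+1)^{-1}$ denominator is unaffected; the bookkeeping slack between the different existential $m_\ell$'s is absorbed into the overall factor of $k$ from the telescoping sum.
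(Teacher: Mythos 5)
Your overall architecture is the same as the paper's: telescope over $\ell=1,\dots,k$, view $AB_1^{\ell-1}$ as an enlarged side system of dimension $d_Ad_B^{\ell-1}$, use the side-information distortion $f(B|\cdot)$, and loosen via $\ell-1\le k-1$, $n-\ell+1\ge n-k+1$. However, there is a genuine gap at exactly the point you flag as "the main obstacle", and your proposed fix does not close it. You define the candidate $\mu$ by conditioning on the \emph{full} outcome string $z=z_{k+1}^n$ of measurements on all of $B_{k+1},\dots,B_n$, and then claim each telescoping term $\exc{z}{\|\rho_{AB_1^\ell|z}-\rho_{AB_1^{\ell-1}|z}\otimes\rho_{B_\ell|z}\|_1}$ is bounded "by Theorem~\ref{thm:main-deFinetti}". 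But Theorem~\ref{thm:main-deFinetti} (via Lemma~\ref{lem:conditioning-breaks-correlations}) only asserts the existence of \emph{some} number $m$ of conditioned systems for which the conditional state is close to product; the chain rule controls the average of $I(AB_1^{\ell-1}:Z_{\text{next}}\,|\,Z_{\text{conditioned}})$ over the amount of conditioning, not the term where one conditions on everything. So your specific, fully-conditioned quantity is simply not bounded by the cited results, and the statement that "the bookkeeping slack between the different existential $m_\ell$'s is absorbed into the overall factor of $k$" is not an argument: the candidate mixture must be built from a single common conditioning record, and a priori the good $m_\ell$ supplied by Lemma~\ref{lem:conditioning-breaks-correlations} differs from term to term, which is precisely what has to be reconciled.

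The paper's resolution is different and is the missing ingredient: for each $i\in\{1,\dots,k\}$ it does \emph{not} extract an existential $m$, but bounds the \emph{average} over $m\in\{k,\dots,n\}$ (same measured systems $B_{k+1},\dots,B_m$, same outcomes for every $i$) of $\exc{z_{k+1}^m}{\|\rho_{AB_1^i|z_{k+1}^m}-\rho_{AB_1^{i-1}|z_{k+1}^m}\otimes\rho_{B_i|z_{k+1}^m}\|_1}$ by $f(B|\cdot)\sqrt{(2\ln2)\log(d_Ad_B^{i-1})/(n-k+1)}$, as in~\eqref{eq:deFinetti-general-i}. Summing these averaged bounds over $i$ after the pointwise triangle inequality, one obtains a bound on the $m$-average of the full telescoped quantity, and only then does one extract a \emph{single} $m$ that works simultaneously for all $k$ terms; convexity then moves the expectation over $z_{k+1}^m$ inside the norm. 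If you replace your fixed conditioning on $z_{k+1}^n$ and the per-$\ell$ existential $m_\ell$ by this "average first over $m$, sum over $\ell$, choose one $m$ at the end" step, your argument becomes the paper's proof; without it, the central estimate in your telescoping chain is unproven.
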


\begin{proof}
Note that the for the state $\rho_{AB_1^{k-1} B_{k}^n}$, the systems $B_{k}^n$ are symmetric with respect to $AB_1^{k-1}$. As such, we can apply the same argument used in the proof of Theorem~\ref{thm:main-deFinetti}, but this time starting from the decomposition $I\left(AB_1^{k-1}:Z_k^n\right)_{\rho}=\sum_{m=k-1}^{n-1} I(AB_1^{k-1} : Z_{m+1} | Z_k^m)_{\rho}$, leading to
\begin{align}
\frac{1}{n-k+1} \sum_{m=k}^n \exc{z_{k+1}^m}{\| \rho_{A B_1^k | z_{k+1}^m} - \rho_{A B_1^{k-1}|z_{k+1}^{m}} \otimes \rho_{B_k | z_{k+1}^m} \|_1}\leq f(B|\cdot)\sqrt{\frac{(2 \ln 2) \log (d_{A} d_B^{k-1})}{n-k+1}} .
\end{align}
Similarly, for any $i \in \{1,\dots,k\}$, we have
\begin{align}
\frac{1}{n-k+1} \sum_{m=k}^n \exc{z_{k+1}^m}{\| \rho_{A B_1^i | z_{k+1}^m} - \rho_{A B_1^{i-1}|z_{k+1}^{m}} \otimes \rho_{B_{i} | z_{k+1}^m} \|_1}\leq f(B| \cdot )\sqrt{\frac{(2 \ln 2) \log (d_{A} d_B^{i-1})}{n-k+1}} .\label{eq:deFinetti-general-i}
\end{align}
Now, using the triangle inequality $k-1$ times, we get for any $m \in \{k, \dots, n\}$ and any $z_{k+1}^m$ that 
\begin{align}
&\left\| \rho_{A B_1^k| z_{k+1}^m} - \rho_{A |z_{k+1}^{m}} \otimes \rho_{B_1|z_{k+1}^m} \otimes \cdots \otimes \rho_{B_k |z_{k+1}^{m}}\right\|_1 \\
&\leq \sum_{i=1}^k \Big\| \rho_{A B_1^i| z_{k+1}^m} \otimes \rho_{B_{i+1}|z_{k+1}^{m}} \otimes \cdots \otimes \rho_{B_k|z_{k+1}^{m}}- \rho_{A B_1^{i-1}|z_{k+1}^{m}} \otimes \rho_{B_{i} | z_{k+1}^m} \otimes \rho_{B_{i+1}|z_{k+1}^{m}} \otimes \cdots \otimes \rho_{B_k|z_{k+1}^{m}}\Big\|_1 \\
&= \sum_{i=1}^k\left\| \rho_{A B_1^i | z_{k+1}^m} - \rho_{A B_1^{i-1}|z_{k+1}^{m}} \otimes \rho_{B_{i} | z_{k+1}^m}\right\|_1.
\end{align}
Taking the average over $m$ and $z_{k+1}^m$ and using~\eqref{eq:deFinetti-general-i}, we get
\begin{align}
&\frac{1}{n-k+1} \sum_{m=k}^n \exc{z_{k+1}^m}{\left\| \rho_{A B_1^k| z_{k+1}^m} - \rho_{A |z_{k+1}^{m}} \otimes \rho_{B_1|z_{k+1}^m} \otimes \cdots \otimes \rho_{B_k |z_{k+1}^{m}}\right\|_1} \\
&\leq k f(B|\cdot)\sqrt{\frac{(2 \ln 2) \log (d_{A} d_B^{k-1})}{n-k+1}} .
\end{align}
As a result, there is an $m$ such that the previous inequality holds. Then, as before, we use the convexity of the norm to put the expectation inside, getting the existence of an $m$ such that
\begin{align}
\left\| \rho_{A B_1^k} - \exc{z_{k+1}^m}{\rho_{A |z_{k+1}^{m}} \otimes \rho_{B_1|z_{k+1}^m} \otimes \cdots \otimes \rho_{B_k |z_{k+1}^{m}}}\right\|_1\leq kf(B|\cdot)\sqrt{ (2 \ln 2) \frac{ \log d_{A} + (k - 1) \log d_B }{n-k+1}}.
\end{align}
To conclude, it suffices to observe that by symmetry $\rho_{B_i|z_{k+1}^m} = \rho_{B_{1}|z_{k+1}^m}$ for all $i \in \{1,\dots,k\}$ and the linear constraints are satisfied by the same calculation as in the proof of Theorem~\ref{thm:main-deFinetti}.
\end{proof}


\subsection{De Finetti theorems without symmetries}\label{sec:no-symmetry}

These results can again be strengthened to a form studied in \cite{brandao13b}, where $\rho_{AB_1^n}$ is not assumed to be symmetric but rather the systems that are kept are chosen at random. More precisely, we improve the so-called de Finetti theorem without symmetries of~\cite[Section 3]{brandao13b} by reducing the dependence from $d_B^{k/2}$ to polynomial in both $d_B$ and $k$, thereby solving one of the problems~\cite[Section 9]{brandao13b} had left open.

\begin{theorem}
Let $\rho_{B_1^n}$ be a quantum state with the systems $B_i$ all having dimension $d_B$. Furthermore, let the entries of $\vec{i}=(i_1, \dots, i_k)$, $\vec{j}=(j_1, \dots, j_{n-k})$ be a random permutation of $\{1, \dots, n\}$, and assume we measure the systems $j_1,\dots,j_{n-k}$ each using the measurement $\cM_B$, getting the classical systems $Z_{j_1}, \dots, Z_{j_{n-k}}$. Then, there exists $m \in \{0, \dots, n - k\}$ such that
\begin{align}
\exc{\vec{i}, \vec{j}, z_{j_1}, \dots, z_{j_m} }{\left\|\rho_{B_{\vec{i}}|z_{j_1} \cdots z_{j_m}} - \rho_{B_{i_1}|z_{j_1} \cdots z_{j_m}} \otimes \cdots \otimes \rho_{B_{i_k|z_{j_1} \cdots z_{j_m}}} \right\|_1} &\leq kf(B|\cdot) \sqrt{ (2 \ln 2) \frac{(k-1)\log d_B }{n-k+1}} \\
&\leq\frac{3k^{3/2}d_B^3 \log d_B}{\sqrt{n-k+1}},
\end{align}
where $f(B|\cdot)$ is defined in~\eqref{eq:def-f2}.
\end{theorem}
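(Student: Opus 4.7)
The plan is to mirror the proof of Theorem~\ref{cor:main-deFinetti}, with the uniform randomization over the permutation $(\vec i,\vec j)$ playing the role of the explicit permutation invariance of $\rho$ that was assumed there. The decisive new ingredient compared to the argument of~\cite{brandao13b}\,---\,which gave a bound scaling like $d_B^{k/2}$\,---\,is that I apply the side-information distortion bound of Lemma~\ref{lem:ic-meas-side-info}, whose prefactor is independent of the dimension of the side system, in place of the product-measurement bound of Lemma~\ref{lem:product-ic-meas}. This is what yields a polynomial dependence on both $d_B$ and $k$.

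Fix $\ell\in\{1,\dots,k\}$. Writing $Z_{\vec j_s}$ for the outcome of applying $\cM_B$ to $B_{\vec j_s}$, the chain rule of mutual information gives
\[
\sum_{m=0}^{n-k-1} I(B_{\vec i_1\cdots \vec i_{\ell-1}} : Z_{\vec j_{m+1}} \mid Z_{\vec j_1\cdots \vec j_m}) = I(B_{\vec i_1\cdots \vec i_{\ell-1}} : Z_{\vec j_1\cdots \vec j_{n-k}}) \leq (\ell-1)\log d_B.
\]
Taking expectation over the uniformly random permutation and then averaging in $m$, the average conditional mutual information is at most $(\ell-1)\log d_B/(n-k)$. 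By exchangeability, for fixed $\{\vec i_1,\dots,\vec i_{\ell-1}\}$ and $\{\vec j_1,\dots,\vec j_m\}$ the remaining indices are uniformly exchangeable, so in expectation over the random permutation the index $\vec j_{m+1}$ can be swapped for $\vec i_\ell$ without changing the bound. Pinsker's inequality followed by two applications of Jensen then yields an $O\bigl(\sqrt{(\ell-1)\log d_B/(n-k+1)}\bigr)$ bound on the averaged trace distance between $\rho_{B_{\vec i_1\cdots \vec i_{\ell-1}} Z_{\vec i_\ell}\mid z_{\vec j_1\cdots \vec j_m}}$ and the corresponding product, where $Z_{\vec i_\ell}$ is the hypothetical outcome of applying $\cM_B$ to $B_{\vec i_\ell}$. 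Invoking Lemma~\ref{lem:ic-meas-side-info} with side information $B_{\vec i_1\cdots \vec i_{\ell-1}}$ (whose dimension $d_B^{\ell-1}$ crucially does \emph{not} enter the distortion prefactor) lifts the bound back to the unmeasured state on $B_{\vec i_\ell}$, introducing the $\ell$-independent factor $f(B|\cdot)$.

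Combining these bounds over $\ell=2,\dots,k$ through $k-1$ applications of the triangle inequality, exactly as in the proof of Theorem~\ref{cor:main-deFinetti}, assembles the full product structure one factor at a time and (using $\ell-1\leq k-1$) yields
\[
\frac{1}{n-k+1}\sum_m \mathbb{E}\Bigl\|\rho_{B_{\vec i}\mid z_{\vec j_1\cdots \vec j_m}} - \bigotimes_{\ell=1}^{k}\rho_{B_{\vec i_\ell}\mid z_{\vec j_1\cdots \vec j_m}}\Bigr\|_1 \leq k\,f(B|\cdot)\sqrt{(2\ln 2)\,\frac{(k-1)\log d_B}{n-k+1}}.
\]
Pigeonholing on $m$ picks out the desired index. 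The explicit second inequality in the theorem then follows by substituting $f(B|\cdot)\leq d_B^2(d_B+1)\leq 2d_B^3$ and $\sqrt{\log d_B}\leq \log d_B$ (valid for $d_B\geq 2$), together with $2\sqrt{2\ln 2}<3$.

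The main obstacle I anticipate is the careful execution of the exchangeability step: precisely spelling out why, in expectation over the uniformly random permutation, the chain-rule bound stated in terms of the actually-measured $Z_{\vec j_{m+1}}$ controls the target quantity that involves the unmeasured system $B_{\vec i_\ell}$, uniformly over the conditioning on the already-observed outcomes $z_{\vec j_1\cdots \vec j_m}$. Once this is settled, the remainder is a direct transcription of the proof of Theorem~\ref{cor:main-deFinetti}, with the $d_A$-free distortion bound of Lemma~\ref{lem:ic-meas-side-info} substituted for Lemma~\ref{lem:product-ic-meas} so that no exponential-in-$k$ factor accumulates as $\ell$ grows.
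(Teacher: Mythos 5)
Your proposal follows essentially the same route as the paper's proof: the triangle-inequality decomposition one factor at a time, the chain rule combined with the exchangeability of the unmeasured index $\vec i_\ell$ with the measured indices (the paper handles this by chaining over the $n-k+1$ systems $Z_{\vec i_t}Z_{\vec j_1}\cdots Z_{\vec j_{n-k}}$ directly, which is exactly your "swap" step), Pinsker plus Jensen, the side-information distortion bound of Lemma~\ref{lem:ic-meas-side-info} to undo the measurement on $B_{\vec i_\ell}$ without a $d_B^{\ell-1}$ penalty, and a final pigeonhole over $m$. The only (immaterial) difference is that chaining over just the $n-k$ actually-measured systems gives a denominator $n-k$ rather than the paper's $n-k+1$; including the hypothetical outcome $Z_{\vec i_\ell}$ in the chain, as the paper does, recovers the stated constant exactly.
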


To compare with the usual de Finetti theorems with symmetry, the expectation is taken inside the trace norm (by convexity)\,---\,which can then be understood as enforcing the permutation invariance of the state.

\begin{proof}
For fixed $\vec{i}, \vec{j}$, $m \in \{0, \dots, n-k\}$, and $z_{j_1} \cdots z_{j_m}$, we have using the triangle inequality $k-1$ times, 
\begin{align}
&\| \rho_{B_{i_1} \cdots B_{i_k} | z_{j_1} \cdots z_{j_m} } - \rho_{B_{i_1}|z_{j_1} \cdots z_{j_m} } \otimes \cdots \otimes \rho_{B_{i_k} | z_{j_1} \cdots z_{j_m} }  \|_1 \\
&\leq \sum_{t=1}^k \Big\| \rho_{B_{i_1 \cdots i_t} | z_{j_1} \cdots z_{j_m} } \otimes \rho_{B_{i_{t+1}}| z_{j_1} \cdots z_{j_m} } \otimes \cdots \otimes \rho_{B_{i_k}| z_{j_1} \cdots z_{j_m} }\\
&\qquad\quad-  \rho_{B_{i_1 \cdots i_{t-1}} | z_{j_1} \cdots z_{j_m} } \otimes \rho_{B_{i_{t}}| z_{j_1} \cdots z_{j_m}} \otimes \cdots \otimes \rho_{B_{i_k}| z_{j_1} \cdots z_{j_m}} \Big\|_1 \\
&= \sum_{t=1}^k \left\| \rho_{B_{i_1 \cdots i_t} | z_{j_1} \cdots z_{j_m} } -  \rho_{B_{i_1 \cdots i_{t-1}} | z_{j_1} \cdots z_{j_m}  } \otimes \rho_{B_{i_{t}} | z_{j_1} \cdots z_{j_m} }\right\|_1. \label{eq:deFinetti-without-sym-triangle-ineq}
\end{align}
Now, consider a fixed $t$ and fixed values for $i_1, \dots, i_{t-1}$, and assume we additionally measure the system $B_{i_t}$ using the measurement $\cM_B$, getting the classical system $Z_{i_t}$. Then, for fixed $i_1, \dots, i_{t-1}$, the resulting distributions on $(i_t,j_1)$ and $(j_1,i_t)$ are identical, and the same holds for $(i_t,j_1,j_2)$ and $(j_2,j_1,i_t)$, and so on. Hence, we find by elementary entropic identities that
\begin{align}
&\exc{i_t, \vec{j}}{I(B_{i_1} \cdots B_{i_{t-1}} : Z_{i_t} Z_{j_1} \cdots Z_{j_{n-k}})_{\rho} }\\
&=\sum_{m=1}^{n-k} \exc{i_t, \vec{j}}{I(B_{i_1} \cdots B_{i_{t-1}} : Z_{i_t})_\rho+I(B_{i_1} \cdots B_{i_{t-1}} : Z_{j_m}| Z_{j_1} \cdots Z_{j_{m-1}}Z_{i_t})_{\rho}}\\
&= \sum_{m=0}^{n-k} \exc{i_t, \vec{j}}{I(B_{i_1} \cdots B_{i_{t-1}} : Z_{i_t}| Z_{j_1} \cdots Z_{j_m})_{\rho}} \\
&= \sum_{m=0}^{n-k} \exc{i_t, \vec{j}, z_{j_1}, \dots, z_{j_m} }{I(B_{i_1} \cdots B_{i_{t-1}} : Z_{i_t})_{\rho_{|z_{j_1} \cdots z_{j_m} } } } .
\end{align}
Note on the other hand that we have $I(B_{i_1} \cdots B_{i_{t-1}} : Z_{i_t} Z_{j_1} \cdots Z_{j_{n-k}} ) \leq \log(d_{B}^{t-1})$ and thus we get, using Pinsker's inequality,
\begin{align}
&\frac{1}{n-k+1} \sum_{m=0}^{n-k} \exc{i_t, \vec{j}, z_{j_1}, \dots, z_{j_m} }{\| \rho_{B_{i_1 \cdots i_{t-1}} Z_{i_t} |z_{j_1} \cdots z_{j_m}} -  \rho_{B_{i_1 \cdots i_{t-1}} |z_{j_1} \cdots z_{j_m}} \otimes \rho_{Z_{i_{t}}|z_{j_1} \cdots z_{j_m}} \|_1^2}\\
&\leq (2 \ln 2) \frac{\log d_B^{t-1}}{n-k+1}.
\end{align}
Observe that $\rho_{B_{i_1 \cdots i_{t-1}} Z_{i_t} |z_{j_1} \cdots z_{j_m}} = \cM_{B_{i_t}}(\rho_{B_{i_1 \cdots i_{t-1}} B_{i_t} |z_{j_1} \cdots z_{j_m}})$ and using a measurement $\cM_{B}$ achieving $f(B|\cdot)$ in~\eqref{eq:def-f2} (or using the measurement in Lemma~\ref{lem:ic-meas-side-info}, in which case we should replace $f(B|\cdot)$ by $d_B^2(d_B+1)$ in the following equations), we get that 
\begin{align}
&\frac{1}{n-k+1} \sum_{m=0}^{n-k} \exc{i_t, \vec{j}, z_{j_1}, \dots, z_{j_m}}{\| \rho_{B_{i_1 \cdots i_{t-1}} B_{i_t} | z_{j_1} \cdots z_{j_m} } -  \rho_{B_{i_1 \cdots i_{t-1}} | z_{j_1} \cdots z_{j_m}} \otimes \rho_{B_{i_{t}}|z_{j_1} \cdots z_{j_m} } \|_1^2} \\
&\leq (2 \ln 2) f(B|\cdot)^2 \frac{\log d_B^{t-1}}{n-k+1}.
\end{align}
This implies, using the convexity of the square function, that 
\begin{align}
&\frac{1}{n-k+1} \sum_{m=0}^{n-k} \exc{i_t, \vec{j}, z_{j_1}, \dots, z_{j_m}}{\| \rho_{B_{i_1 \cdots i_{t-1}} B_{i_t} | z_{j_1} \cdots z_{j_m} } -  \rho_{B_{i_1 \cdots i_{t-1}} | z_{j_1} \cdots z_{j_m}} \otimes \rho_{B_{i_{t}}|z_{j_1} \cdots z_{j_m} } \|_1}\\
&\leq f(B|\cdot)\sqrt{(2 \ln 2) \frac{\log d_B^{t-1}}{n-k+1}},
\end{align}
and we get, continuing on~\eqref{eq:deFinetti-without-sym-triangle-ineq}, that
\begin{align}
&\frac{1}{n-k+1} \sum_{m=0}^{n-k} \exc{\vec{i}, \vec{j}, z_{j_1}, \dots, z_{j_m} }{ \| \rho_{B_{i_1} \cdots B_{i_k} | z_{j_1} \cdots z_{j_m} } - \rho_{B_{i_1}|z_{j_1} \cdots z_{j_m} } \otimes \cdots \otimes \rho_{B_{i_k} | z_{j_1} \cdots z_{j_m} }  \|_1} \\
&\leq \sum_{t=1}^k \frac{1}{n-k+1} \sum_{m=0}^{n-k} \exc{\vec{i}, \vec{j}, z_{j_1}, \dots, z_{j_m} }{\| \rho_{B_{i_1 \cdots i_t} | z_{j_1} \cdots z_{j_m} } -  \rho_{B_{i_1 \cdots i_{t-1}} | z_{j_1} \cdots z_{j_m}  } \otimes \rho_{B_{i_{t}} | z_{j_1} \cdots z_{j_m} } \|_1} \\
&\leq kf(B|\cdot) \sqrt{(2 \ln 2) \frac{\log d_B^{k-1}}{n-k+1}}.
\end{align}
\end{proof}


\section{Constrained bilinear optimization}\label{sec:outer-bounds}

As stated in~\eqref{original_program}, the constrained bilinear optimization problem we are interested in takes the form
\begin{align}
Q:=\max &\quad \mathrm{Tr}\big[G_{AB} (W_{A} \otimes W_{B})\big]\\
s.t. &\quad W_{A} \succeq 0, W_{B} \succeq 0, \mathrm{Tr}(W_{A}) = \mathrm{Tr}(W_{B}) = 1\\
&\quad \Lambda_{A\to C_A}\left(W_{A}\right)=X_{C_A}, \: \Gamma_{B\to C_B}\left(W_{B}\right)=Y_{C_B}.
\end{align}
Lower bounds on the optimal value can, e.g., be derived by means of seesaw methods~\cite{Konno1976} (see~\cite{wolf01} for an example in quantum information theory). These then often converge in practice and sometimes even provably reach a local maxima. What was missing, however, is a general method to give an approximation guarantee to the global maximum.

Our de Finetti theorem with linear constraints (Theorem~\ref{thm:main-deFinetti}) gives an SDP hierarchy of outer bounds, that exactly provides such a criterion.

\begin{theorem}
For the SDPs
\begin{align}
\mathrm{SDP}_n :=\max &\quad \mathrm{Tr}\big[G_{AB} W_{AB_1}\big]\\
s.t. 
&\quad W_{AB_1^n}\succeq0, \mathrm{Tr}(W_{AB_1^n}) = 1, \;W_{AB_1^n}=\mathcal{U}_{B_1^n}^\pi\left(W_{AB_1^n}\right)\;\forall\pi\in\mathfrak{S}_n\\
& \quad \Lambda_{A\to C_A}\left(W_{AB_1^n}\right)=X_{C_A}\otimes W_{B_1^n},\;\Gamma_{B_n\to C_B}\left(W_{B_1^n}\right)=W_{B_1^{n-1}}\otimes Y_{C_B}
\end{align}
and $Q$ defined as above, we have for $d:=\max\{d_A,d_B\}$ that
\begin{align}
0 \leq \mathrm{SDP}_n - Q \leq \frac{\mathrm{poly}(d)}{\sqrt{n}}\quad\text{implying}\quad Q=\lim_{n\to\infty}\mathrm{SDP}_n.
\end{align}
\end{theorem}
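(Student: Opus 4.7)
The plan is to establish the two claims by standard de Finetti-type arguments, separating monotonicity, the easy lower bound, and the convergence from above.

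\textbf{Monotonicity.} Given any feasible $W_{AB_1^{n+1}}$ of $\mathrm{sdp}_{n+1}$, I would show that $W'_{AB_1^n} := \mathrm{Tr}_{B_{n+1}}[W_{AB_1^{n+1}}]$ is feasible for $\mathrm{sdp}_n$ with the same objective value, which only involves $W_{AB_1}$ and is therefore unchanged by the partial trace. Positive semidefiniteness, unit trace, and permutation invariance over $B_1^n$ descend immediately from their level-$(n+1)$ counterparts. The constraint $\Lambda_{A\to C_A}(W'_{AB_1^n})=X_{C_A}\otimes W'_{B_1^n}$ is obtained by applying $\mathrm{Tr}_{B_{n+1}}$ to the level-$(n+1)$ identity. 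For the $\Gamma$ constraint, I would apply $\mathrm{Tr}_{B_n}$ to $\Gamma_{B_{n+1}\to C_B}(W_{B_1^{n+1}})=W_{B_1^n}\otimes Y_{C_B}$ and then use the transposition symmetry of $W_{B_1^{n+1}}$ swapping $B_n$ and $B_{n+1}$ to relabel the remaining system and obtain $\Gamma_{B_n\to C_B}(W'_{B_1^n})=W'_{B_1^{n-1}}\otimes Y_{C_B}$.

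\textbf{Lower bound $Q\leq\mathrm{sdp}_n$.} For any feasible $(W_A,W_B)$ of the bilinear program, the product $W_A\otimes W_B^{\otimes n}$ is manifestly permutation invariant on $B_1^n$ and satisfies both affine constraints (since $\Lambda(W_A)=X_{C_A}$ and $\Gamma(W_B)=Y_{C_B}$ tensor through). Its reduction to $AB_1$ is $W_A\otimes W_B$, so the objective equals $\mathrm{Tr}[G_{AB}(W_A\otimes W_B)]$, giving $\mathrm{sdp}_n\geq Q$.

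\textbf{Convergence from above.} This is the central step, and it is a direct application of Theorem~\ref{cor:main-deFinetti} with $k=1$. The hypotheses of that theorem are precisely the permutation invariance and the two linear constraints built into $\mathrm{sdp}_n$, so for any feasible $W_{AB_1^n}$ the reduced state $W_{AB_1}$ is $\varepsilon_n$-close in trace norm to a convex combination $\sum_i p_i\,\sigma^i_A\otimes\omega^i_B$ with $\Lambda_{A\to C_A}(\sigma^i_A)=X_{C_A}$ and $\Gamma_{B\to C_B}(\omega^i_B)=Y_{C_B}$, where $\varepsilon_n=O(n^{-1/2})$ up to dimension factors independent of $n$. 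Every pair $(\sigma^i_A,\omega^i_B)$ is feasible for $Q$, so each $\mathrm{Tr}[G_{AB}(\sigma^i_A\otimes\omega^i_B)]\leq Q$. Trace-norm/operator-norm duality then yields
\begin{align*}
\mathrm{Tr}[G_{AB}W_{AB_1}]\leq Q+\varepsilon_n\,\|G_{AB}\|_\infty.
\end{align*}
Taking the supremum over feasible $W_{AB_1^n}$ gives $\mathrm{sdp}_n\leq Q+\varepsilon_n\|G_{AB}\|_\infty$, and combining with $\mathrm{sdp}_n\geq Q$ and monotonicity yields $\lim_{n\to\infty}\mathrm{sdp}_n=Q$.

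The only mildly delicate point is the verification of the $\Gamma$ constraint in the monotonicity argument, where one has to interleave a partial trace with the permutation symmetry; everything else is either a direct application of Theorem~\ref{cor:main-deFinetti} or a routine compatibility check between the affine constraints and partial traces.
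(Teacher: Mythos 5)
Your proposal is correct and follows essentially the same route as the paper: the lower bound via the product extension $W_A\otimes W_B^{\otimes n}$, and the convergence by applying the de Finetti theorem with linear constraints (Theorem~\ref{thm:main-deFinetti}, equivalently Theorem~\ref{cor:main-deFinetti} at $k=1$) to conclude feasibility of the $\sigma^i_A,\omega^i_B$ and closeness of objective values. The paper's proof is terser\,---\,it leaves the monotonicity $\mathrm{sdp}_{n+1}\leq\mathrm{sdp}_n$ and the H\"older estimate $\mathrm{sdp}_n\leq Q+\varepsilon_n\|G_{AB}\|_\infty$ implicit\,---\,and your explicit partial-trace-plus-swap argument for the $\Gamma$ constraint and the duality bound fill these in correctly.
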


\begin{proof}
We have by construction $0 \leq \mathrm{SDP}_n - Q $ and the remaining inequality arises from
\begin{align}
\mathrm{Tr}\left[G_{AB} W_{AB_1}\right]&=\mathrm{Tr}\left[G_{AB} (W_{A} \otimes W_{B})\right]+\mathrm{Tr}\left[G_{AB}\left(W_{AB_1} - W_{A} \otimes W_{B}\right)\right] \\
& \leq \mathrm{Tr}\left[G_{AB} (W_{A} \otimes W_{B})\right]+\| G_{AB} \|_\infty \cdot \|W_{AB_1} - W_{A} \otimes W_{B} \|_1 \\
&\leq\mathrm{Tr}\left[G_{AB} (W_{A} \otimes W_{B})\right]+\frac{\mathrm{poly}(d)}{\sqrt{n}},
\end{align}
where we used H\"older's inequality and the de Finetti argument as in Theorem \ref{thm:main-deFinetti}.
\end{proof}

The bounds from Theorem~\ref{thm:main-deFinetti} give worst case convergence guarantees that are slow\,---\,as to ensure that the approximation error is small we need at least the level $n=\mathrm{poly}(d)$. However, note that constrained bilinear optimization contains as a special case the best separable state problem and so we cannot expect much better bounds on the convergence speed in general. We refer to \cite{harrow2016limitations} and the references therein for a detailed discussion about the computational complexity of the best separable state problem.

We can add positive partial transpose (PPT) constraints\footnote{The partial transpose of a matrix $W_{AB}$ is defined for a fixed product basis as $\langle ij|W_{AB}^{T_A}|kl\rangle:=\langle kj|W_{AB}|il\rangle$.}
\begin{align}
W_{AB_1^n}^{T_A}\succeq0,\;W_{AB_1^n}^{T_{B_1}}\succeq0,\;W_{AB_1^n}^{T_{B_1^2}}\succeq0,\dots,\;W_{AB_1^n}^{T_{B_1^{n-1}}}\succeq0
\end{align}
to $\mathrm{SDP}_n$ and we denote the resulting relaxations by $\mathrm{SDP}_{n,\mathrm{PPT}}$. It is important to point out that any separable state is also a PPT state, and hence we still have a valid relaxation to the problem \eqref{original_program}. It is an interesting question to study if these constraints can lead to a faster convergence speed, cf.~the discussions in~\cite{Navascues09,Fawzi19}. Based on the PPT constraints, we can give a sufficient condition when already
\begin{align}
\text{$\mathrm{SDP}_{n,\mathrm{PPT}}=Q$ for some finite $n$.}
\end{align}
The condition\,---\,known as {\it rank loop condition}\,---\,is based on \cite{Navascues09}, which in turn builds on \cite{Horodecki00}.

\begin{lemma}{\cite{Navascues09},\cite{Horodecki00}}\label{lem:rank-loop}
Let $W_{AB_1^n}=\mathcal{U}_{B_1^n}^\pi\left(W_{AB_1^n}\right)$ for all $\pi\in\mathfrak{S}_n$ and fixed $0\leq k\leq n$ such that $W_{AB_1^n}^{T_{B_{k+1}^n}}\succeq0$. Then, $W_{AB_1}$ is separable if
\begin{align}
\mathrm{rank}(W_{AB_1^n}) \leq \max\left\{\mathrm{rank}\left(W_{AB_1^k}\right),\,\mathrm{rank}\left(W_{B_{k+1}^n}\right)\right\}.
\end{align}
\end{lemma}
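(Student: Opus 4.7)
The plan is to combine the range-criterion argument for PPT states with the permutation symmetry of the $n$-extension. First, I would apply the range criterion of Horodecki~\cite{Horodecki00} (as invoked in~\cite{Navascues09}): under the PPT condition $W_{AB_1^n}^{T_{B_{k+1}^n}} \succeq 0$ together with the rank-loop inequality, the global operator must admit a decomposition as a convex combination of product operators across the bipartition $AB_1^k \,|\, B_{k+1}^n$,
\[
    W_{AB_1^n} = \sum_{i \in I} p_i\, \sigma^i_{AB_1^k} \otimes \omega^i_{B_{k+1}^n},
\]
with $p_i \geq 0$ and $\sigma^i,\omega^i \succeq 0$. The derivation is iterative: one locates a product vector $\ket{\alpha}_{AB_1^k}\ket{\beta}_{B_{k+1}^n}$ in the range of $W_{AB_1^n}$ whose partial conjugate on $B_{k+1}^n$ lies in the range of $W_{AB_1^n}^{T_{B_{k+1}^n}}$, subtracts a maximal multiple of the corresponding rank-one projector while preserving positivity and PPT, and uses the rank-loop bound to certify that the procedure terminates at zero rather than stalling at a PPT edge state.

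Second, I would exploit the permutation symmetry to pass from this product decomposition across the PPT cut to separability of $W_{AB_1}$ across $A|B_1$. Since $\mathcal{U}^\pi_{B_1^n}(W_{AB_1^n}) = W_{AB_1^n}$ for every $\pi \in \mathfrak{S}_n$, all two-body marginals $W_{AB_j}$ coincide; in particular, for $k < n$, we have $W_{AB_1} = W_{AB_{k+1}}$. Tracing out $B_1, \ldots, B_k, B_{k+2}, \ldots, B_n$ from the product decomposition then yields
\[
    W_{AB_1} \;=\; \sum_{i \in I} p_i\, \mathrm{Tr}_{B_1^k}[\sigma^i_{AB_1^k}] \otimes \mathrm{Tr}_{B_{k+2}^n}[\omega^i_{B_{k+1}^n}],
\]
which is manifestly a convex combination of product states on $A \otimes B_1$, hence separable across $A|B_1$. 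The boundary case $k=0$ is in fact easier, as the PPT cut already coincides with $A|B_1^n$ and no symmetry rotation is needed, while $k = n$ renders the rank condition vacuous.

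The nontrivial ingredient is the first step, which is the full strength of the Horodecki range-criterion / edge-state analysis; I would invoke it as a black box rather than reprove it, and the only delicate point there is checking that the inductive subtraction of product vectors respects both the positivity and the rank-loop inequality throughout. The new observation specific to this lemma is the symmetry argument of Step 2, where permutation invariance lets one ``rotate'' the PPT cut into the $A|B_1$ cut upon taking partial traces, converting bipartite separability across the large partition $AB_1^k | B_{k+1}^n$ into separability of the small two-body marginal $W_{AB_1}$.
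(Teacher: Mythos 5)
Your proposal is correct and is essentially the argument the paper itself delegates to its references (it cites Navascu\'es et al.\ and Horodecki et al.\ rather than giving a proof): the PPT condition across the cut $AB_1^k\,|\,B_{k+1}^n$ together with the rank-loop inequality yields, via the Horodecki--Lewenstein--Vidal--Cirac low-rank criterion, a separable decomposition across that cut, and permutation invariance plus partial tracing (using $W_{AB_1}=W_{AB_{k+1}}$) transfers this to separability of the two-body marginal $W_{AB_1}$, exactly as intended. The only caveat is the endpoint $k=n$, where both the PPT and the rank conditions become vacuous and the claimed conclusion cannot follow (this is really a sloppiness in the statement, which should be read with $k<n$); your remark that the rank condition is ``vacuous'' there should be understood as excluding that case rather than covering it.
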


Finally, note that instead of extending the $B$-systems we could equally well extend the $A$-systems to get another hierarchy. In the next section we directly study our main setting of interest\,---\,approximate quantum error correction\,---\,and refrain from further analysing the general case.


\section{Approximate quantum error correction}\label{sec:quantum-error-correction}

\subsection{Motivation}

In order to introduce the problem, we describe its relevance and applications in quantum information theory. First, we introduce the theoretical setting, then we apply the results of the previous sections, thus obtaining specific convergent hierarchies. Corresponding numerical tests can be found in Appendix~\ref{sec:numerics}.

Given a noisy classical channel $N_{X\to Y}$, a central quantity of interest in error correction is the maximum success probability $p(N,M)$ for transmitting a uniform $M$-dimensional message under the noise model $N_{X\to Y}$. This is a bilinear maximization problem, which is in general NP-hard to approximate up to a sufficiently small constant factor~\cite{fawzi18}. Nevertheless, there are efficient methods for constructing feasible coding schemes approximating $p(N,M)$ from below as well as an efficiently computable linear programming relaxation $\mathrm{lp}(N,M)$ (sometimes called meta converse~\cite{hayashi09,polyanskiy10}) giving upper bounds on $p(N,M)$.\footnote{Operationally, $\mathrm{lp}(N,M)$ corresponds to the non-signalling assisted maximum success probability~\cite{matthews11}.} In fact, it was shown in~\cite{fawzi18} that $p(N,M)$ and $\mathrm{lp}(N,M)$ cannot be very far from each other
\begin{align}
p(N,M)\leq\mathrm{lp}(N,M)\leq\frac{1}{1-\frac{1}{e}}\cdot p(N,M).
\end{align}
Furthermore, the meta-converse has many appealing analytic properties, such as, e.g., the ability to evaluate it efficiently in the limit of many independent repetitions $N_{X\to Y}^{\times n}$, leading to very precise asymptotic bounds on the capacity of noisy classical channels \cite{fawzi18}.

The analogue quantum problem is to determine the maximum fidelity $F(\mathcal{N},M)$, a quantity that will be formally defined later (Definition \ref{def:plain-coding}), for transmitting one part of a maximally entangled state of dimension $M$ over a noisy quantum channel $\mathcal{N}_{A\to B}$. As in the classical case, this is a bilinear optimization problem, only now with matrix-valued variables. In order to approximate $F(\mathcal{N},M)$, an efficiently computable semidefinite programming relaxation $\mathrm{SDP}(\mathcal{N},M)$ was given in~\cite{matthews14}.\footnote{Operationally, $\mathrm{SDP}(N,M)$ corresponds to the positive partial transpose preserving, non-signalling assisted maximum fidelity~\cite{matthews14}.} However, contrary to the classical case the gap between $\mathrm{SDP}(\mathcal{N},M)$ and $F(\mathcal{N},M)$ is not understood. On the other hand, the tools introduced in Section \ref{sec:deFinetti} will exactly be used to generate a converging hierarchy of efficiently computable semidefinite programming relaxations, allowing us to quantify the gap between these new relaxations and $F(\mathcal{N},M)$.

Moreover, the relaxation $\mathrm{SDP}(\mathcal{N},M)$ is lacking most of the analytic properties of its classical analogue $\mathrm{lp}(N,M)$. In fact, in quantum communication theory so-called non-additivity problems caused by quantum correlations make it notoriously hard to compute asymptotic limits in the first place~\cite{divincenzo98}. Hence, we propose to use methods from optimization theory to directly study the maximum fidelity $F(\mathcal{N},M)$ in order to quantify the ability of a quantum channel to transmit quantum information. The goal is then to identify a quantum version of the meta converse for approximating $F(\mathcal{N},M)$, having similar properties as the classical meta converse $\mathrm{lp}(N,M)$ for approximating $p(N,M)$. This approach can also be justified by the fact that most of the quantum devices that will be available in the near future are likely to be noisy and small in size. As such, efficient algorithms approximating $F(\mathcal{N},M)$ for reasonable error models $\mathcal{N}$ and dimension $M$ are more relevant in such settings than computing the asymptotic limit of the rate achievable for multiple copies of a given noise model.

Numerical lower bound methods for $F(\mathcal{N},M)$ are available through iterative seesaw methods that lead to efficiently computable semidefinite programs~\cite{Werner05,reimpell08,Shor07,fletcher08,Kosut2009,lidar10,aspuru-guzik17}. These algorithms often converge in practice and sometimes even provably reach a local maximum. What was previously missing, however, is a general method to give an approximation guarantee to the global maximum. Here, the techniques as developed in Section~\ref{sec:outer-bounds} exactly lead to a converging hierarchy of efficiently computable semidefinite programming relaxations on the maximum fidelity $F(\mathcal{N},M)$. As such, this can be seen as a tool for benchmarking existing quantum error correction codes and to understand in what direction to look for improved codes

We note that references \cite{Tomamichel:2016aa,wang16,wang17,kaur18} gave refined relaxations on the size of a maximally entangled state that can be sent over a noisy quantum channel for fixed fidelity $1-\eps$. These approaches are complementary to our work and contrary to our findings they do not lead to a converging hierarchy of efficiently computable bounds.


\subsection{Setting}

The mathematical setting of approximate quantum error correction we study is as follows.

\begin{definition}\label{def:plain-coding}
	Let $\mathcal{N}_{\bar A\to B}$ be a quantum channel and $M \in \mathbb{N}$. The channel fidelity for message dimension $M$ is defined as
	\begin{align}
	F(\mathcal{N},M):=\max &\quad F\Big(\Phi_{\bar BR},\big(\left(\mathcal{D}_{B\to\bar B}\circ\mathcal{N}_{\bar A\to B}\circ\mathcal{E}_{A\to\bar A}\right)\otimes\mathcal{I}_R\big)(\Phi_{AR})\Big)\\
	s.t. &\quad \mathcal{D}_{B\to\bar B},\mathcal{E}_{A\to\bar A}\;\text{quantum channels},
	\end{align}
	where $F(\rho,\sigma):=\left\|\sqrt{\rho}\sqrt{\sigma}\right\|_1^2$ denotes the fidelity, $\Phi_{ A R}$ denotes the maximally entangled state on $AR$, and we have $M=d_A=d_{\bar B}=d_R$.
\end{definition}

In information-theoretic language, the channel fidelity corresponds to an average error criteria for preserving uniformly distributed information. Alternatively, we might also aim for a worst error criteria and we discuss this in Appendix~\ref{sec:diamondNorm}.

By the Choi-Jamio\l{}kowski isomorphism the channel fidelity is conveniently rewritten as a bilinear optimization.

\begin{lemma}\label{lem:Choi}
	Let $\mathcal{N}_{\bar A\to B}$ be a quantum channel and $M\in\mathbb{N}$. Then, the channel fidelity can be written as
	\begin{align}
	F(\mathcal{N},M)=\max &\quad d_{\bar A}d_B\cdot\mathrm{Tr}\left[\left(J^\mathcal{N}_{\bar AB}\otimes\Phi_{A\bar B}\right)\left(E_{A\bar A}\otimes D_{B\bar B}\right)\right] \\
	s.t. &\quad E_{A\bar A}\succeq0,\;D_{B\bar B}\succeq0,\;E_A=\frac{1_A}{d_A},\;D_B=\frac{1_B}{d_B},
	\end{align}
	where $J^\mathcal{N}_{B\bar A}:=(\mathcal{N}_{\bar A \to B}\otimes\mathcal{I}_{\bar A})(\Phi_{\bar A\bar A})$ denotes the Choi state of $\mathcal{N}_{\bar A\to B}$.
\end{lemma}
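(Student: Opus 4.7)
The plan is to combine two elementary ingredients. First, since $\Phi_{\bar B R}$ is pure, the fidelity is just an overlap: $F(\Phi_{\bar B R},\sigma) = \mathrm{Tr}[\Phi_{\bar B R}\,\sigma]$, so the objective in Definition~\ref{def:plain-coding} becomes
\begin{align*}
F(\mathcal{N},M) = \max\, \mathrm{Tr}\big[\Phi_{\bar B R}\,((\mathcal{D}\circ\mathcal{N}\circ\mathcal{E})\otimes\mathcal{I}_R)(\Phi_{AR})\big].
\end{align*}
Second, the Choi--Jamiolkowski isomorphism provides a bijection between trace-preserving CP maps $\mathcal{E}_{A\to\bar A}$ and positive semidefinite matrices $E_{A\bar A}$ satisfying $E_A=1_A/d_A$, and analogously for $\mathcal{D}_{B\to\bar B}$ and $D_{B\bar B}$. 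Consequently the optimization domain on the right-hand side of the lemma is exactly the image under this isomorphism of the domain on the left, so it suffices to verify the stated identity pointwise in $(\mathcal{E},\mathcal{D})$.

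For the algebraic identity, I would unfold the composition $\mathcal{D}\circ\mathcal{N}\circ\mathcal{E}$ one step at a time using the ``ricochet'' property of maximally entangled states, $(X_A\otimes 1_R)|\Phi\rangle_{AR}=(1_A\otimes X_R^T)|\Phi\rangle_{AR}$. Equivalently, this is the standard link-product identity: for any channel $\mathcal{E}_{A\to\bar A}$ with normalized Choi state $E_{A\bar A}=(\mathcal{E}\otimes\mathcal{I})(\Phi_{AA})$ one has $(\mathcal{E}_{A\to\bar A}\otimes\mathcal{I}_R)(\Phi_{AR})=E_{\bar A R}$ after the natural relabeling $A\cong R$. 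Chaining this through $\mathcal{E}\to\mathcal{N}\to\mathcal{D}$ and then pairing the output against $\Phi_{\bar B R}$ and tracing out $R$ produces a final $\Phi_{A\bar B}$ on the remaining systems, converting the whole expression into a single trace of $J^\mathcal{N}_{\bar A B}\otimes\Phi_{A\bar B}$ against $E_{A\bar A}\otimes D_{B\bar B}$. The prefactor $d_{\bar A}d_B$ arises from undoing the normalizations of the two intermediate Choi states: each normalized $J$ carries a factor $1/d_{\text{input}}$ that must be compensated when the corresponding channel action is re-expressed as a partial trace against its Choi matrix.

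The main obstacle is purely combinatorial bookkeeping: matching input/output conventions, tracking which transpose is absorbed by which copy of a maximally entangled state, and collecting the powers of $M$, $d_{\bar A}$, $d_B$ correctly. Conceptually nothing happens beyond the Choi--Jamiolkowski dictionary and the transpose trick. A useful sanity check is that the right-hand side is manifestly bilinear in $(E,D)$, which it must be because the encoder $\mathcal{E}$ and decoder $\mathcal{D}$ act on disjoint tensor factors of the composition $\mathcal{D}\circ\mathcal{N}\circ\mathcal{E}$; this structural observation is exactly what makes the lemma\,---\,and the whole bilinear-programming approach developed in the paper\,---\,possible in the first place.
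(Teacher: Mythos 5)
Your strategy is essentially the paper's: rewrite the pure-state fidelity as an overlap and then translate the encoder/decoder into Choi matrices, observing that positivity plus the marginal constraints $E_A=1_A/d_A$, $D_B=1_B/d_B$ exactly characterizes trace-preserving CP maps, so that it suffices to match the objective pointwise. One caveat, though, about the dictionary you propose to verify pointwise. With the lemma's conventions for $J^{\mathcal N}_{\bar A B}$ and $\Phi_{A\bar B}$, plugging in the \emph{naive} Choi states $E_{A\bar A}=(\mathcal E\otimes\mathcal I)(\Phi_{AA})$ and $D_{B\bar B}$ the normalized Choi state of $\mathcal D$ does \emph{not} reproduce $F(\mathcal E,\mathcal D)$: a short index computation shows the right-hand side then equals $\frac{1}{M^2}\sum_{a,a'}\langle a|\,\mathcal D\bigl(\mathcal N^{\Theta}\bigl(\mathcal E(|a\rangle\langle a'|)\bigr)\bigr)|a'\rangle$, i.e.\ the fidelity through the \emph{transposed} channel $\mathcal N^{\Theta}(X):=\mathcal N(X^T)^T$, which differs from $F(\mathcal E,\mathcal D)$ in general. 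The paper's proof fixes this by identifying $E_{A\bar A}=(J^{\mathcal E}_{A\bar A})^T$ and $D_{B\bar B}=\frac{d_A}{d_B}\,J^{\mathcal D^{\dagger}}_{B\bar B}$ (the rescaled Choi matrix of the \emph{adjoint} of the decoder, whose unitality is what gives $D_B=1_B/d_B$); under that dictionary the pointwise identity holds exactly. Your overall conclusion is unaffected because transposition and the adjoint correspondence are bijections of the feasible set onto itself, so the two maxima coincide, but as literally stated your ``pointwise in $(\mathcal E,\mathcal D)$'' step would fail and needs exactly this transpose/adjoint bookkeeping (which you flag, but which is the one place where the argument can silently go wrong).
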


The advantage of this notation is that all $A$-systems are with the sender (termed Alice) and all $B$-systems are with the receiver (termed Bob), which is consistent with~\cite{matthews14}.

\begin{proof}
By using the adjoint map in Hilbert-Schmidt inner product and multiple times the Choi-Jamio\l{}kowski isomorphism as given in \eqref{eq:Choi-state} and \eqref{eq:Choi-channel}, and noting that $\Phi_{\bar BR}$ allows us to use the simplified expression for the fidelity when one of the two arguments is pure \cite[Section 9.2]{Wilde-Book}, we can write the objective function from Definition~\ref{def:plain-coding} as
\begin{align}
&F\Big(\Phi_{\bar BR},\big(\left(\mathcal{D}_{B\to\bar B}\circ\mathcal{N}_{\bar A\to B}\circ\mathcal{E}_{A\to\bar A}\right)\otimes\mathcal{I}_R\big)(\Phi_{AR})\Big)\\
&=\tr\left[\Phi_{\bar BR}\big(\left(\mathcal{D}_{B\to\bar B}\circ\mathcal{N}_{\bar A\to B}\circ\mathcal{E}_{A\to\bar A}\right)\otimes\mathcal{I}_R\big)(\Phi_{AR})\right]\\
&= \tr\left[J^\mathcal{D^\dagger}_{BR} \left(\mathcal{N}_{\bar A\to B} \otimes\mathcal{I}_R\right)\left(J^\mathcal{E}_{\bar A R}\right)\right].
\end{align}
Taking advantage of $d_A=d_{\bar B}=d_R$, we relabel the systems and we proceed as follows
\begin{align}
&F\Big(\Phi_{\bar BR},\big(\left(\mathcal{D}_{B\to\bar B}\circ\mathcal{N}_{\bar A\to B}\circ\mathcal{E}_{A\to\bar A}\right)\otimes\mathcal{I}_R\big)(\Phi_{AR})\Big)\\
&= \tr\left[J^\mathcal{D^\dagger}_{BR} \left(\mathcal{N}_{\bar A\to B} \otimes\mathcal{I}_R\right)\left(J^\mathcal{E}_{\bar A R}\right)\right]\\		
&= \tr\left[J^\mathcal{D^\dagger}_{B\bar B} \left(\mathcal{N}_{\bar A\to B} \otimes\mathcal{I}_{A \to \bar B} \right)\left(J^\mathcal{E}_{\bar A A}\right)\right] \\
&= d_A d_{\bar A} \cdot \tr \left[ \left(J^\mathcal{N}_{\bar AB} \otimes \Phi_{A\bar B} \right)\left(\left(J^\mathcal{E}_{A\bar A}\right)^T \otimes J^\mathcal{D^\dagger}_{B\bar B} \right) \right] \\
&=  d_{\bar A} d_B \cdot \tr \left[\left(J^\mathcal{N}_{\bar AB} \otimes \Phi_{A\bar B} \right)\left(\left(J^\mathcal{E}_{A\bar A} \right)^T \otimes\frac{d_A}{d_B}\cdot J^\mathcal{D^\dagger}_{B\bar B} \right) \right],
\end{align}
where the transpose is taken with respect to the canonical basis, and the dimensional factors come from the connection between the Hilbert-Schmidt inner product and the maximally entangled state \cite[Example 1.2]{Wolf-Notes}. Due to the basic proprieties of the Choi-Jamio\l{}kowski isomorphism it is immediate to see that $(J^\mathcal{E}_{A\bar A})^T$ can be identified with the $E_{A\bar A}$ of Lemma~\ref{lem:Choi}. In addition, we have $\frac{d_A}{d_B}\cdot J^\mathcal{D^\dagger}_{B\bar B}\succeq0$, and tracing out the $\bar B$ system as well as using $d_A = d_{\bar B}$ we get
\begin{align}
\frac{d_A}{d_B}\cdot J^\mathcal{D^\dagger}_{B} = \frac{d_A}{d_B}\cdot \mathcal{D}^\dagger\left(\frac{1_{\bar B}}{d_{\bar B}}\right) = \frac{d_A}{d_B}\cdot\frac{1}{d_{\bar B}}\cdot1_{B} = \frac{1_B}{d_B}.
\end{align}
Thus, we can identify $\frac{d_A}{d_B}\cdot J^\mathcal{D^\dagger}_{B\bar B}$ with the $D_{B\bar B}$ of Lemma~\ref{lem:Choi}. 
\end{proof}

The following simple dimension bounds hold for the channel fidelity.

\begin{restatable}[]{lemma}{fdim}
	\label{lem:F-dimension}
	Let $\mathcal{N}_{\bar A\to B}$ be a quantum channel and $M\in\mathbb{N}$. Then, we have
	\begin{align}
	0\leq F(\mathcal{N},M)\leq\min\left\{1,\left(\frac{d_{\bar A}}{M}\right)^2,\frac{d_B}{M}\right\}.
	\end{align}
\end{restatable}

The proof can be found in Appendix~\ref{app:missing-proofs}. By the linearity of the objective function we can furthermore rewrite the channel fidelity as
\begin{align}
F(\mathcal{N},M)=\max &\quad d_{\bar A}d_B\cdot\mathrm{Tr}\left[\left(J^\mathcal{N}_{\bar AB}\otimes\Phi_{A\bar B}\right)\left(\sum_{i\in I}p_iE_{A\bar A}^i\otimes D_{B\bar B}^i\right)\right] \\
s.t. &\quad p_i\geq0\;\forall i\in I,\;\sum_{i\in I}p_i=1\\
& \quad E_{A\bar A}^i\succeq0,\;D_{B\bar B}^i\succeq0,\;E^i_A=\frac{1_A}{d_A},\;D^i_B=\frac{1_B}{d_B}\;\forall i\in I.
\end{align}


\subsection{De Finetti theorems for quantum channels}\label{sec:separable-deFinetti}

Recall that a quantum channel is just a completely positive, trace preserving map between two spaces of quantum states. Here, we establish a sufficient criterion under which permutation invariance of a quantum channel implies that it can be well approximated by a mixture of product quantum channels.

\begin{theorem}\label{lem:sep-deFinetti}
Let $\rho_{A\bar A(B\bar B)_1^n}$ be a quantum state with
\begin{align}\label{eq:constraint-perm-inv}
\rho_{A\bar A(B\bar B)_1^n}&=\mathcal{U}_{(B\bar B)_1^n}^\pi(\rho_{A\bar A(B\bar B)_1^n})\;\forall\pi\in\mathfrak{S}_n\\
\label{eq:constraint-tra}
\rho_{A(B\bar B)_1^n}&=\frac{1_A}{d_A}\otimes\rho_{(B\bar B)_1^n}\\
\label{eq:constraint-trb}
\rho_{(B\bar B)_1^{n-1}B_n}&=\rho_{(B\bar B)_1^{n-1}}\otimes\frac{1_{B_n}}{d_B}.
\end{align}
Then, we have for $0<k<n$ that
\begin{align}
\left\|\rho_{A\bar A(B\bar B)_1^{k}}-\sum_{i\in I}p_i\sigma_{A\bar A}^i\otimes\left(\omega_{B\bar B}^i\right)^{\otimes k}\right\|_1 &\leq kf(B \bar{B} |\cdot)\sqrt{ (2 \ln 2) \frac{ \log(d_{A} d_{\bar{A}}) + (k - 1) \log(d_B d_{\bar{B}}) }{n-k+1}}
\end{align}
with $\{p_i\}_{i\in I}$ a probability distribution, and $\sigma_{A\bar A}^i,\omega_{B\bar B}^i\succeq0$ such that $\sigma_A^i=\frac{1_A}{d_A}$ and $\omega_B^i=\frac{1_B}{d_B}$ for $i\in I$.
\end{theorem}

\begin{proof}
We simply apply Theorem~\ref{cor:main-deFinetti} for the linear maps $\Lambda_{A\bar{A} \to A} = \tr_{\bar{A}}$ and $\Gamma_{B\bar{B} \to B} = \tr_{\bar{B}}$.
\end{proof}

We emphasize that the representation we obtain in this theorem, $\rho_{A\bar A(B\bar B)_1^{k}}$ is close to a mixture of products of Choi states of completely positive and \emph{trace-preserving} maps. We note that applying standard de Finetti theorems for quantum states would only show that $\rho_{A\bar A(B\bar B)_1^{k}}$ is close to a mixture of products states\,---\,or in other words Choi states of completely positive maps that are in general not even trace-non-increasing. This is not sufficient for our applications, and having the constraints~\eqref{eq:constraint-trb} and~\eqref{eq:constraint-tra} are needed in our proofs to achieve this stronger statement. We discuss this in more detail by means of the following examples.

\begin{example}
For $\bar A\bar B$ trivial and $k=1$ Theorem~\ref{lem:sep-deFinetti} says that $\rho_{AB}$ is close to the product state $\frac{1_{AB}}{d_Ad_B}$, as this is the only valid state satisfying the linear constraints. However, having only the permutation invariance condition~\eqref{eq:constraint-perm-inv} without the other two conditions in Theorem~\ref{lem:sep-deFinetti}, this conclusion does not hold. In fact, choose $\rho_{AB_1^n}$ to be maximally classically correlated between all systems $A;B_1;B_2^n$
\begin{align}
\rho_{AB_1^n} = \frac{1}{d} \sum_{i} \ket{i}\bra{i}^{\otimes n+1}.
\end{align}
Then, the systems $B_1^n$ are symmetric with respect to $A$ and even more, the state is supported on the symmetric subspace $(1_{A} \otimes P^{\mathrm{sym}}_{B_1^n})(\rho_{A B_1^n}) = \rho_{A B_1^n}$. However, of course $\rho_{AB_1}$ is not close to the state $\frac{1_{AB_1}}{d_Ad_B}$.
\end{example}

\begin{example}\label{example:weak-condition}
This following example shows that imposing the constraint $\rho_{AB_1} = \frac{1_{AB_1}}{d_Ad_B}$ is not enough either. Let $A,\bar A, B, \bar B$ all be of dimension $d \geq 2$. Then, define for any $n \geq 1$
\begin{align}
\rho_{AB_1^n \bar{A} \bar{B}_1^n} = \frac{1}{d^2} \sum_{i,j}  \ket{j}\bra{j}_{A} \otimes \ket{i}\bra{i}_{\bar A} \otimes \ket{i}\bra{i}^{\otimes n}_{B} \otimes \ket{i}\bra{i}^{\otimes n}_{\bar B}.
\end{align}
Then, the state is invariant under permutations of the $B\bar{B}$ systems and $\rho_{AB_1} =  \frac{1_{AB_1}}{d^2}$. However, the reduced state $\rho_{A\bar{A} B_1 \bar{B}_1}$ is not close to states of the form
\begin{align}
\sum_{\ell} p_{\ell} \sigma^{\ell}_{A\bar{A}} \otimes \omega^{\ell}_{B \bar{B}}\quad\text{with}\quad\sigma^{\ell}_{A} =\frac{1_A}{d}, \omega^{\ell}_{B} = \frac{1_B}{d}.
\end{align}
To see this, consider the projector $\Pi_{\bar{A} B} = \sum_{i} \ket{i}\bra{i}_{\bar{A}}\otimes \ket{i}\bra{i}_{B}$. Then, we get 
\begin{align}
\mathrm{Tr}(\Pi_{\bar{A} B} \rho_{A\bar{A} B\bar{B}}) = 1\quad\text{but}\quad\mathrm{Tr}(\Pi_{\bar{A} B} \sigma^{\ell}_{A\bar{A}} \otimes  \omega^{\ell}_{B\bar{B}}) = \mathrm{Tr}(\Pi_{\bar{A} B} \sigma^{\ell}_{\bar{A}} \otimes \frac{1_B}{d}) = \frac{1}{d}.
\end{align}
\end{example}

By the Choi-Jamio\l{}kowski isomorphism and relating the trace norm distance of Choi states to the diamond norm distance of the quantum channels~\cite[Lemma 7]{Wallman14}, we can alternatively state the bounds from Theorem~\ref{lem:sep-deFinetti} directly in terms of the quantum channels.

\begin{corollary}\label{cor:exchangeable-deFinetti}
Let $\mathcal{N}_{AB_1^{n}\to\bar A\bar B_1^{n}}$ be a quantum channel such that
\begin{align}
\mathcal{U}_{\bar B_1^n}^\pi\left(\mathcal{N}_{AB_1^n\to\bar A\bar B_1^n}(\cdot)\right)&=\mathcal{N}_{AB_1^n\to\bar A\bar B_1^n}\left(\mathcal{U}_{B_1^n}^\pi(\cdot)\right)\;\forall\pi\in\mathfrak{S}_n\label{eq:sym-condition}\\
\mathrm{Tr}_{\bar B_n}\Big[\mathcal{N}_{AB_1^n\to\bar A\bar B_1^n}(\cdot)\Big] &= \mathrm{Tr}_{\bar B_n}\left[\mathcal{N}_{AB_1^n\to\bar A\bar B_1^n}\left(\mathrm{Tr}_{B_n}[\cdot] \otimes \frac{1_{B_n}}{d_B} \right) \right] \label{eq:partial-trace-bn}\\
\mathrm{Tr}_{\bar A}\left[\mathcal{N}_{AB_1^n\to\bar A\bar B_1^n}(\cdot)\right]&=\mathrm{Tr}_{\bar A}\left[\mathcal{N}_{AB_1^n\to\bar A\bar B_1^n}\left(\frac{1_A}{d_A}\otimes\mathrm{Tr}_A\left[\cdot\right]\right)\right].\label{eq:partial-trace-an}
\end{align}
Then, we have for $0<k<n$ with
\begin{align}\label{eq:reduction}
\mathcal{N}_{AB_1^{k}\to\bar A\bar B_1^{k}}(X_{A B_1^k}):= \mathrm{Tr}_{\bar{B}_{k+1}^n}\left[\mathcal{N}_{AB_1^n\to\bar A\bar B_1^n}\left(X_{A B_1^k} \otimes \frac{1_{B_{k+1}^n}}{d_B^{n-k}}\right) \right]
\end{align}
that
\begin{align}
\left\|\mathcal{N}_{AB_1^k \to\bar A\bar B_1^k}-\sum_{i\in I}p_i\mathcal{E}_{A\to\bar A}^i\otimes\left(\mathcal{D}_{B\to\bar B}^i\right)^{\otimes k}\right\|_\Diamond\leq\;&d_Ad_B^k\cdot kf(B \bar{B}|\cdot)  \\
& \times\sqrt{ (2 \ln 2) \frac{ \log(d_{A} d_{\bar{A}}) + (k - 1) \log(d_B d_{\bar{B}}) }{n-k+1}}
\end{align}
with $\{p_i\}_{i\in I}$ a probability distribution and $\mathcal{D}_{B\to\bar B}^i,\mathcal{E}_{A\to\bar A}^i$ quantum channels for $i\in I$.
\end{corollary}

In~\eqref{eq:reduction} we chose a specific extension of $X_{AB_1^k}$ to define $\mathcal{N}_{AB_1^{k}\to\bar A\bar B_1^{k}}(X_{A B_1^k})$, namely $X_{AB_1^k} \otimes \frac{1_{B_{k+1}^n}}{d_B^{n-k}}$. This is still well-defined as the conditions \eqref{eq:sym-condition} and \eqref{eq:partial-trace-bn} we require of $\mathcal{N}_{AB_1^n\to\bar A\bar B_1^n}$ actually say that the choice of extension does not matter. That is, we have for any $X_{A B_1^n}$ that
\begin{align}
\mathrm{Tr}_{\bar{B}_{k+1}^n} \left[\mathcal{N}_{AB_1^{n}\to\bar A\bar B_1^{n}}(X_{A B_1^n}) \right] 
&= \mathrm{Tr}_{\bar{B}_{k+1}^{n-1} }\left[ \mathrm{Tr}_{\bar{B}_{n} }\left[\mathcal{N}_{AB_1^n\to\bar A\bar B_1^n}\left(X_{A B_1^{n-1}} \otimes \frac{1_{B_n}}{d_B}\right) \right] \right]\\
&= \mathrm{Tr}_{\bar{B}_{k+1}^{n} }\left[\mathcal{N}_{AB_1^n\to\bar A\bar B_1^n}\left(X_{A B_1^{k}} \otimes \frac{1_{B_{k+1}^n}}{d_B^{n-k}}\right) \right] \\
&= \mathcal{N}_{AB_1^k\to\bar A\bar B_1^k}\left(X_{A B_1^{k}}\right) \,,
\end{align}
where we used \eqref{eq:partial-trace-bn} for the first equality as well as \eqref{eq:sym-condition} and \eqref{eq:partial-trace-bn} multiple times for the second equality.

In the following we state several comments about de Finetti theorems for quantum channels:
\begin{itemize}
\item We emphasize that the de Finetti reductions\,---\,called post-selection technique~\cite{christandl09}\,---\,for quantum channels proved in~\cite{fawzirenner14,brandao14} are different from what we need in our work (also see \cite{Rotem15,HT14} for classical versions). In particular, unlike de Finetti theorems, de Finetti reductions provide an operator inequality upper bound to a symmetric quantum state in the form of an integral superposition of product states.
\item In contrast to the bound for Choi states (Theorem~\ref{lem:sep-deFinetti}), the diamond norm bound in Corollary~\ref{cor:exchangeable-deFinetti} does not have a polynomial dependence in $d_B$ and $k$. We leave it as an open question to give a de Finetti theorem for quantum channels in terms of the diamond norm distance with a dimension dependence polynomial in $d_B$ and $k$. (For our purposes we only need the $k=1$ bound, in terms of the Choi states.)
\item In the case $k=1$, the conditions of the above theorem can be seen as approximations for the convex hull of product quantum channels, just as extendible states provide an approximation for the set of separable states.\footnote{The class of channels we consider here is more restricted than general separable channels, which usually refers to a mixture of product completely positive and not necessarily trace-preserving maps.} We note that in SDP hierarchies for the quantum separability problem the permutation invariance can be replaced by the stronger {\it Bose symmetric} condition. That is, the state in question is supported on the symmetric subspace. The reason is that every separable state can without loss of generality be decomposed in a convex combination of {\it pure} product states. However, in our setting, we cannot assume that we have a mixture of a product of pure channels, and so we keep the more general notion of permutation invariance.
\item In the following, we never directly make use of Corollary~\ref{cor:exchangeable-deFinetti} but rather state it for connecting to the previous literature. In particular, when choosing $A\bar A$ trivial as a special case we find a finite version of the asymptotic de Finetti theorem for quantum channels from~\cite{Fuchs04,Fuchs2004}.\footnote{We also refer to~\cite{Pankowski13} for previous related work and \cite{toner09} for a classical version. Moreover, following~\cite{kaur18}, conditions related to our~\eqref{eq:sym-condition} -- \eqref{eq:partial-trace-an} give rise to extendible channels in the resource theory of unextendibility.} We emphasize that our derived conditions then become a finite version of the notion of {\it exchangeable sequences of quantum channels} of~\cite{Fuchs04} defined as a sequence of channels $\{\mathcal{N}_{B_1^n\to\bar B_1^n}\}$ satisfying for all $n$ that
\begin{align}
\mathcal{U}_{\bar B_1^n}^\pi\left(\mathcal{N}_{B_1^n\to\bar B_1^n}(\cdot)\right)&=\mathcal{N}_{B_1^n\to\bar B_1^n}\left(\mathcal{U}_{B_1^n}^\pi(\cdot)\right)\;\forall\pi\in\mathfrak{S}_n\\
\mathcal{N}_{B_1^{n-1}\to\bar B_1^{n-1}}\Big(\mathrm{Tr}_{B_n}\left[\cdot\right]\Big)&=\mathrm{Tr}_{\bar B_n}\Big[\mathcal{N}_{B_1^n\to\bar B_1^n}(\cdot)\Big].
\end{align}
They show that under these conditions, for any $k$, the channel $\cN_{B_1^k \to \bar{B}_1^k}$ is in the convex hull of tensor power channels. In Corollary~\ref{cor:exchangeable-deFinetti}, we start with a channel\footnote{This is equivalent to being given a finite sequence $\mathcal{N}_{B_1^k\to\bar B_1^k}$ for $k \in \{1, \dots, n\}$ satisfying the exchangeability condition, as the reduced channels are then completely determined by $\mathcal{N}_{B_1^n\to\bar B_1^n}$} $\mathcal{N}_{B_1^n\to\bar B_1^n}$ and quantify the closeness of such $\mathcal{N}_{B_1^k\to\bar B_1^k}$ to convex combinations of tensor product channels $\sum_ip_i\left(\mathcal{D}_{B\to\bar B}^i\right)^{\otimes k}$.
\end{itemize}

Channels that are written as mixtures of channels of the form $\mathcal{E}_{A \to \bar A} \otimes \mathcal{D}_{B \to \bar B}$ where $\mathcal{E}_{A \to \bar A}$ and $\mathcal{D}_{B \to \bar B}$ are channels can straightforwardly be implemented between two parties having access to shared randomness but no communication. There is a natural relaxation to this set of channels, often called LOCC(1) channels, corresponding to channels that can be implemented with additional classical communication from $A$ to $B$. Mathematically, these are channels of the form
\begin{align}
\sum_{i \in I}\mathcal{E}^i_{A \to \bar A} \otimes \mathcal{D}^i_{B \to \bar B},
\end{align}
where $\mathcal{D}^i_{B\to\bar B}$ are channels and $\mathcal{E}^i_{A\to\bar A}$ are completely positive but not necessarily trace-preserving. We discuss this variation of approximate quantum error correction in Appendix~\ref{sec:appendix}.


\subsection{Hierarchy of outer bounds}\label{subsectionHierarchy}

Following the de Finetti theorem for quantum channels as given in Theorem~\ref{lem:sep-deFinetti}, the $n$-th level of the SDP hierarchy for the quantum channel fidelity becomes
\begin{align}
\mathrm{SDP}_n(\mathcal{N},M):=\max &\quad d_{\bar A}d_B\cdot\mathrm{Tr}\left[\left(J^\mathcal{N}_{\bar AB_1}\otimes\Phi_{A\bar B_1}\right)W_{A\bar AB_1\bar B_1}\right]\\
s.t. &\quad W_{A\bar A(B\bar B)_1^n}\succeq0,\;\mathrm{Tr}\left[W_{A\bar A(B\bar B)_1^n}\right]=1\\
& \quad W_{A\bar A(B\bar B)_1^n}=\mathcal{U}_{(B\bar B)_1^n}^\pi\left(W_{A\bar A(B\bar B)_1^n}\right)\;\forall\pi\in\mathfrak{S}_n\\
& \quad W_{A(B\bar B)_1^n}=\frac{1_A}{d_A}\otimes W_{(B\bar B)_1^n},\;W_{A\bar A(B\bar B)_1^{n-1}B_n}=W_{A\bar A(B\bar B)_1^{n-1}}\otimes\frac{1_{B_n}}{d_B}.
\end{align}
Here, we identified $B_1\equiv B$ and hence the $n$-th level of the hierarchy then corresponds to taking $n-1$ extensions. Note that instead of stating the last condition for the final block $B_n$ we could have equivalently stated it for any block $B_j$ with $j=1,\ldots,n$ (by the permutation invariance). Iteratively, the condition then also holds on all pairs of blocks of size two, and so on. Moreover, we slightly strengthened the last condition by including the $A$-systems compared to the minimal condition on the $B$-system needed for Theorem~\ref{lem:sep-deFinetti}
\begin{align}
W_{(B\bar B)_1^{n-1}B_n}=W_{(B\bar B)_1^{n-1}}\otimes\frac{1_{B_n}}{d_B}.
\end{align}
We then immediately have asymptotic convergence.

\begin{theorem}\label{thm:convergence-plain}
Let $\mathcal{N}$ be a quantum channel and $n,M\in\mathbb{N}$. Then, we have
\begin{align}
0 \leq \mathrm{SDP}_n(\mathcal{N},M) - F(\mathcal{N},M) \leq \frac{\mathrm{poly}(d)}{\sqrt{n}}\quad\text{implying}\quad F(\mathcal{N},M)=\lim_{n\to\infty}\mathrm{SDP}_n(\mathcal{N},M),
\end{align}
where $d:=\max\{d_A,d_{\bar A},d_B,d_{\bar B}\}$.
\end{theorem}

\begin{proof}
By construction $0 \leq \mathrm{SDP}_n(\mathcal{N},M) - F(\mathcal{N},M)$ and the remaining inequality arises from
\begin{align}
d_{\bar A}d_B\cdot\mathrm{Tr}\left[\left(J^\mathcal{N}_{\bar AB}\otimes\Phi_{A\bar B}\right)W_{A\bar A B \bar B}\right]&=  d_{\bar A}d_B\cdot\mathrm{Tr}\left[\left(J^\mathcal{N}_{\bar AB}\otimes\Phi_{A\bar B}\right)\left(E_{A\bar A}\otimes D_{B\bar B}\right)\right] \\
&\quad+ d_{\bar A}d_B\cdot\mathrm{Tr}\left[\left(J^\mathcal{N}_{\bar AB}\otimes\Phi_{A\bar B}\right)\left(W_{A\bar A B \bar B} - E_{A\bar A}\otimes D_{B\bar B}\right)\right] \\
& \leq d_{\bar A}d_B\cdot\mathrm{Tr}\left[\left(J^\mathcal{N}_{\bar AB}\otimes\Phi_{A\bar B}\right)\left(E_{A\bar A}\otimes D_{B\bar B}\right)\right] \\
&\quad + d_{\bar A}d_B \cdot \| J^\mathcal{N}_{\bar AB}\otimes\Phi_{A\bar B} \|_\infty \cdot \|W_{A\bar A B \bar B} - E_{A\bar A}\otimes D_{B\bar B} \|_1 \\
&\leq d_{\bar A}d_B\cdot\mathrm{Tr}\left[\left(J^\mathcal{N}_{\bar AB}\otimes\Phi_{A\bar B}\right)\left(E_{A\bar A}\otimes D_{B\bar B}\right)\right]+\frac{\mathrm{poly}(d)}{\sqrt{n}},
\end{align}
where we used H\"older's inequality and the de Finetti reduction from Theorem \ref{thm:main-deFinetti}.
\end{proof}

We note that the worst case convergence guarantee is slow, as to ensure that the approximation error becomes small, we need at least the level $n=\mathrm{poly}(d)$.

\begin{remark}
Instead of extending the $B$-systems we could alternatively extend the $A$-systems, which leads to the (non-equivalent) hierarchy
\begin{align}
\overline{\mathrm{SDP}}_n(\mathcal{N},M):=\max &\quad d_{\bar A}d_B\cdot\mathrm{Tr}\left[\left(J^\mathcal{N}_{\bar AB_1}\otimes\Phi_{A\bar B_1}\right)W_{A_1\bar A_1B\bar B}\right]\\
s.t. &\quad W_{(A\bar A)_1^nB\bar B}\succeq0,\;\mathrm{Tr}\left[W_{(A\bar A)_1^nB\bar B}\right]=1\\
& \quad W_{(A\bar A)_1^nB\bar B}=\mathcal{U}_{(A\bar A)_1^n}^\pi\left(W_{(A\bar A)_1^nB\bar B}\right)\;\forall\pi\in\mathfrak{S}_n\\
& \quad W_{(A\bar A)_1^nB}=W_{(A\bar A)_1^n}\otimes\frac{1_B}{d_B},\;W_{(A\bar A)_1^{n-1}A_nB\bar B}=\frac{1_{A_n}}{d_A}\otimes W_{(A\bar A)_1^{n-1}B\bar B}.
\end{align}
For the first level we have $\overline{\mathrm{SDP}}_1(\mathcal{N},M)=\mathrm{SDP}_1(\mathcal{N},M)$ by inspection, but for the higher levels it depends on the input-output dimensions $d_{\bar A},d_B$ which hierarchy is potentially more powerful.
\end{remark}

The relaxations $\mathrm{SDP}_n(\mathcal{N},M)$ behave naturally with respect to the first two bounds of Lemma~\ref{lem:F-dimension}.

\begin{restatable}[]{lemma}{boundsdpn}
Let $\mathcal{N}_{\bar A\to B}$ be a quantum channel and $n,M \geq 1$. Then, we have
\begin{align}
0\leq\mathrm{SDP}_n(\mathcal{N},M)\leq\min\left\{1,\left(\frac{d_{\bar A}}{M}\right)^2\right\}.
\end{align}
\end{restatable}

The proof can be found in Appendix~\ref{app:missing-proofs}. We can again add all the PPT constraints and denote the resulting relaxations by $\mathrm{SDP}_{n,\mathrm{PPT}}(\mathcal{N},M)$. In the following we study more closely these levels $\mathrm{SDP}_{n,\mathrm{PPT}}(\mathcal{N},M)$, which are our tightest outer bound relaxations on the channel fidelity.


\subsection{Low level relaxations}\label{sec:higherlevels-plain}

We find
\begin{align}
\mathrm{SDP}_{1,\mathrm{PPT}}(\mathcal{N},M)=\max &\quad d_{\bar A}d_B\cdot\mathrm{Tr}\left[\left(J^\mathcal{N}_{\bar AB}\otimes\Phi_{A\bar B}\right)W_{A\bar AB\bar B}\right]\\
s.t. &\quad W_{A\bar AB\bar B}\succeq0,\;W_{A\bar AB\bar B}^{T_{B\bar B}}\succeq0,\;\mathrm{Tr}\left[W_{A\bar AB\bar B}\right]=1\\
& \quad W_{AB\bar B}=\frac{1_A}{d_A}\otimes W_{B\bar B},\;W_{A\bar AB}=W_{A\bar A}\otimes\frac{1_B}{d_B},
\end{align}
which is the SDP outer bound found in~\cite[Section IV]{matthews14}, up to their a priori stronger condition
\begin{align}
\text{$W_{AB}=\frac{1_{AB}}{d_Ad_B}$ instead of our $\mathrm{Tr}\left[W_{A\bar AB\bar B}\right]=1$.}
\end{align}
However, as implicitly shown in~\cite[Theorem 3]{matthews14} these two conditions actually become equivalent because of the structure of the objective function. Operationally $\mathrm{SDP}_1(\mathcal{N},M)$ corresponds to the non-signalling assisted channel fidelity, whereas $\mathrm{SDP}_{1,\mathrm{PPT}}(\mathcal{N},M)$ adds the PPT-preserving constraint\,---\,as discussed in~\cite[Corollary 4]{matthews14}. Moreover, in the objective function the symmetry\footnote{Here, $\overline{U}_A$ denotes the complex conjugate of $U_A$ with respect to some standard basis.}
\begin{align}
\int\left(\overline U_A\otimes U_{\bar B}\right)(\cdot)\left(\overline U_A\otimes U_{\bar B}\right)^\dagger\,\mathrm{d}U
\end{align}
can be used to achieve a dimension reduction of $M^2$ leading to~\cite[Theorem 3]{matthews14}
\begin{align}\label{eq:second-first}
\mathrm{SDP}_{1,\mathrm{PPT}}(\mathcal{N},M)=\max &\quad d_{\bar A}d_B\cdot\mathrm{Tr}\left[J^\mathcal{N}_{\bar AB}Y_{\bar AB}\right] \\
s.t. &\quad \rho_{\bar A}\otimes\frac{1_B}{d_B}\succeq Y_{\bar AB}\succeq0,\;\mathrm{Tr}[\rho_{\bar A}]=1\\
& \quad M^2 \cdot Y_B=\frac{1_B}{d_B},\;\rho_{\bar A}\otimes\frac{1_B}{d_B}\succeq M\cdot Y_{\bar AB}^{T_B}\succeq-\rho_{\bar A}\otimes\frac{1_B}{d_B}.
\end{align}
The level $n=2$ reads as
\begin{align}\label{eq:second-level}
\mathrm{SDP}_{2,\mathrm{PPT}}(\mathcal{N},M)=\max &\quad d_{\bar A}d_B\cdot\mathrm{Tr}\left[\left(J^\mathcal{N}_{\bar AB_1}\otimes\Phi_{A\bar B_1}\right)W_{A\bar AB_1\bar B_1}\right] \\
s.t. &\quad W_{A\bar AB_1B_2\bar B_1\bar B_2}\succeq0,\;\mathrm{Tr}\left[W_{A\bar AB_1B_2\bar B_1\bar B_2}\right]=1\\
& \quad \mathcal{U}_{B_1B_2\bar B_1\bar B_2}^\pi\left(W_{A\bar AB_1B_2\bar B_1\bar B_2}\right)=W_{A\bar AB_1B_2\bar B_1\bar B_2}\;\forall\pi\in\Pi_2\\
& \quad W_{AB_1B_2\bar B_1\bar B_2}=\frac{1_A}{d_A}\otimes W_{B_1B_2\bar B_1\bar B_2},\;W_{A\bar AB_1B_2\bar B_1}=W_{A\bar AB_1\bar B_1}\otimes\frac{1_{B_2}}{d_B}\\
& \quad W_{A\bar AB_1B_2\bar B_1\bar B_2}^{T_{A\bar A}}\succeq0,\;W_{A\bar AB_1B_2\bar B_1\bar B_2}^{T_{B_2\bar B_2}}\succeq0.
\end{align}

Numerical evaluations of~\eqref{eq:second-first} and~\eqref{eq:second-level} can be found in Appendix~\ref{sec:numerics}.


\section{Conclusion}\label{sec:conclusion}

We have shown that quantum de Finetti theorems which impose linear constraints on the approximating state lead to converging SDP hierarchies for constrained bilinear optimization. As our main application, this gave efficiently computable outer bounds on the optimal quantum channel fidelity in approximate quantum error correction. In Appendix~\ref{sec:numerics}, we provide some numerical evidence that the resulting bounds are sometimes tight for low dimensional error models, but it would be desirable to do more extensive numerical studies for practically relevant settings. For example, it would be interesting to apply the techniques from \cite{rosset2018symdpoly} to automatically detect the symmetries in the problem in order to significantly improve the performance. One could also explore other operational settings in quantum information theory that are described in terms of jointly constrained semidefinite bilinear or multilinear programs (cf.~the related work~\cite{Huber18}).

On the mathematical side, it remains unclear if the linear constraint conditions in our quantum de Finetti theorem (Theorem~\ref{thm:main-deFinetti}) are minimal or could be further simplified. Recall that, for the linear constraint on system $B$, we had the condition
\begin{align}
\Gamma_{B_k\to C_B}(\rho_{B_1^k})&=\rho_{B_1^{k-1}}\otimes Y_{C_B}.
\end{align}
As in Example~\ref{example:weak-condition}, it is simple to see that only requiring $\Gamma_{B_k \to C_B}(\rho_{B_k}) = Y_{C_B}$ is not sufficient. However, the following weaker condition might be sufficient
\begin{align}
\Gamma_{B\to C_B}^{\otimes k}(\rho_{B_1^k})&=Y_{C_B}^{\otimes k}.
\end{align}
We leave this as an open question (also see the related works on de Finetti reductions \cite{fawzirenner14,brandao14}). Another mathematical question is to determine the optimal dimension dependence of the minimal distortion with side information (see Lemma~\ref{lem:ic-meas-side-info}). It would also be interesting to improve Corollary~\ref{cor:exchangeable-deFinetti} and give de Finetti theorems for quantum channels directly in terms of the diamond norm distance with a dimension dependence polynomial in $d_B$ and $k$. Finally, there are variants of quantum de Finetti theorems which provably lead to (exponentially) faster convergence for certain settings of the quantum separability problem~\cite{Christandl2007,brandao11,brandao13c}, and the consequences for approximate quantum error correction remain to be explored.


\section*{Acknowledgements}

We thank Fernando Brand\~ao, Matthias Christandl and Robert K\"onig for useful discussions, and Nengkun Yu for pointing out a mistake in a previous version of Corollary~\ref{cor:exchangeable-deFinetti} about channel de Finetti theorems in terms of the diamond norm distance.


\appendix

\section{Classically-assisted approximate quantum error correction}\label{sec:appendix}

\subsection{Setting}

It is often useful to add classical forward communication assistance to the problem of quantum error correction. The corresponding assisted channel fidelity is defined as follows.

\begin{definition}
Let $\mathcal{N}_{\bar A\to B}$ be a quantum channel and $M\in\mathbb{N}$. The LOCC(1)-assisted channel fidelity for message dimension $M$ is defined as\footnote{The term LOCC(1) stands for {\it local operations and one-way classical communication} from sender to receiver \cite{Chitambar14}.}
\begin{align}
F^{\mathrm{LOCC(1)}}(\mathcal{N},M):=\max &\quad F\Big(\Phi_{\bar BR},\sum_{i\in I}\big(\left(\mathcal{D}^i_{B\to\bar B}\circ\mathcal{N}_{\bar A\to B}\circ\mathcal{E}^i_{A\to\bar A}\right)\otimes\mathcal{I}_R\big)(\Phi_{AR})\Big)\\
s.t. &\quad \sum_{i\in I}\mathcal{E}^i_{A\to\bar A}\;\text{quantum channel with $\mathcal{E}^i_{A\to\bar A}$ cp for $i\in I$}\\
& \quad \mathcal{D}^i_{B\to\bar B}\;\text{quantum channel $\forall i\in I$},
\end{align}
where $\Phi_{AR}$ denotes the maximally entangled state on $AR$, cp is the abbreviation for completely positive, and we have $M=d_A=d_{\bar B}=d_R$.
\end{definition}

By the Choi-Jamio\l{}kowski isomorphism this can again be rewritten as a bilinear optimization.

\begin{lemma}
Let $\mathcal{N}_{\bar A\to B}$ be a quantum channel and $M\in\mathbb{N}$. Then, the LOCC(1)-assisted channel fidelity can be written as
\begin{align}
F^{\mathrm{LOCC(1)}}(\mathcal{N},M)=\max &\quad d_{\bar A}d_B\cdot\mathrm{Tr}\left[\left(J^\mathcal{N}_{\bar AB}\otimes\Phi_{A\bar B}\right)\left(\sum_{i\in I}E_{A\bar A}^i\otimes D_{B\bar B}^i\right)\right]\\
s.t. &\quad E_{A\bar A}^i\succeq0,\;D_{B\bar B}^i\succeq0\quad\forall i\in I\\
& \quad \sum_{i\in I}E_A^i=\frac{1_A}{d_A},\;D^i_B=\frac{1_B}{d_B}\quad\forall i\in I.
\end{align}
\end{lemma}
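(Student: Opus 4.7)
The plan is to mimic the proof of Lemma~\ref{lem:Choi} term by term, applied to each summand indexed by $i \in I$, and to translate the two different operational constraints on $\{\mathcal{E}^i\}$ and $\{\mathcal{D}^i\}$ into the asymmetric semidefinite constraints on the Choi matrices $\{E^i_{A\bar A}\}$ and $\{D^i_{B\bar B}\}$ that appear in the statement.

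First, I would expand the fidelity in the definition of $F^{\mathrm{LOCC(1)}}(\mathcal{N},M)$ using the Hilbert--Schmidt adjoint exactly as in Lemma~\ref{lem:Choi}, obtaining for each $i\in I$ a term of the form
\begin{align*}
d_{\bar A}d_B\cdot\mathrm{Tr}\!\left[\left(J^\mathcal{N}_{\bar AB}\otimes\Phi_{A\bar B}\right)\!\left(\left(J^{\mathcal{E}^i}_{A\bar A}\right)^T\otimes\tfrac{d_A}{d_B}\cdot J^{(\mathcal{D}^i)^\dagger}_{B\bar B}\right)\right],
\end{align*}
and then use linearity of the trace and of the Choi-Jamiolkowski map to collect the sum inside the trace. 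Identifying $E^i_{A\bar A}=(J^{\mathcal{E}^i}_{A\bar A})^T$ and $D^i_{B\bar B}=\tfrac{d_A}{d_B}\cdot J^{(\mathcal{D}^i)^\dagger}_{B\bar B}$ as in the proof of Lemma~\ref{lem:Choi} immediately yields the objective function of the claimed bilinear form and the positivity $E^i_{A\bar A}\succeq0$, $D^i_{B\bar B}\succeq0$ for all $i$.

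The second step is to translate the trace-preservation conditions correctly. For the decoders, each $\mathcal{D}^i_{B\to\bar B}$ is individually a channel, so the argument from Lemma~\ref{lem:Choi} applies verbatim to each $i$, giving $D^i_B=\tfrac{1_B}{d_B}$ for all $i\in I$. For the encoders, only the sum $\sum_{i\in I}\mathcal{E}^i_{A\to\bar A}$ is a channel (the individual $\mathcal{E}^i$ are merely completely positive), so tracing over the $\bar A$ system and using $\tr_{\bar A}\circ\,\mathcal{E}^i=\tr_{\bar A}\circ J^{\mathcal{E}^i}_{A\bar A}/\text{(normalization)}$ yields the single aggregate condition $\sum_{i\in I}E^i_A=\tfrac{1_A}{d_A}$, which is the only place the asymmetry between $E$ and $D$ in the statement appears. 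Conversely, any feasible tuple $\{E^i_{A\bar A},D^i_{B\bar B}\}_{i\in I}$ in the bilinear program defines completely positive maps $\{\mathcal{E}^i_{A\to\bar A}\}$ summing to a channel and quantum channels $\{\mathcal{D}^i_{B\to\bar B}\}$, so the two optimization programs have the same feasible region and the same value.

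There is no real obstacle here; the only minor subtlety to check carefully is the direction of the transpose and the placement of the factor $d_A/d_B$ when computing $J^{(\mathcal{D}^i)^\dagger}$ (since the adjoint is with respect to the Hilbert-Schmidt inner product and $\mathcal{D}^{i\dagger}$ is unital rather than trace-preserving), but this is identical to the computation already carried out in Lemma~\ref{lem:Choi} and uses $d_A=d_{\bar B}$.
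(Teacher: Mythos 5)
Your proposal is correct and follows exactly the route the paper takes: the paper's own proof is simply the remark that the argument of Lemma~\ref{lem:Choi} carries over, which is what you carry out summand by summand, with the trace-preservation of each $\mathcal{D}^i$ giving $D^i_B=\frac{1_B}{d_B}$ for every $i$ and the instrument condition on $\{\mathcal{E}^i\}$ giving only the aggregate constraint $\sum_{i\in I}E^i_A=\frac{1_A}{d_A}$. Your added check of the reverse direction (feasible Choi tuples yield a valid instrument/decoder family) is a harmless elaboration of the same correspondence.
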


The proof follows similarly as in Lemma~\ref{lem:Choi} about plain quantum error correction, and is based on the manipulation of the objective function $F\Big(\Phi_{\bar BR},\sum_{i\in I}\big(\left(\mathcal{D}^i_{B\to\bar B}\circ\mathcal{N}_{\bar A\to B}\circ\mathcal{E}^i_{A\to\bar A}\right)\otimes\mathcal{I}_R\big)(\Phi_{AR})\Big)$ by using the Choi-Jamio\l{}kowski isomorphism. We have that $F^{\mathrm{LOCC(1)}}(\mathcal{N},M)$ is closely connected to the channel fidelity $F(\mathcal{N},M)$.

\begin{lemma}\label{lem:locc(1)-plain}
Let $\mathcal{N}$ be a quantum channel and $M\in\mathbb{N}$. Then, we have
\begin{align}
F^{\mathrm{LOCC(1)}}(\mathcal{N},M)\geq F(\mathcal{N},M)\geq \left(F^{\mathrm{LOCC(1)}}(\mathcal{N},M)\right)^2.
\end{align}
\end{lemma}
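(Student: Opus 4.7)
The plan is to prove the two inequalities separately.

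\textbf{First inequality.} The inclusion $F^{\mathrm{LOCC(1)}}(\mathcal{N},M) \geq F(\mathcal{N},M)$ is immediate by specialization. Given any plain code $(\mathcal{E}_{A\to\bar A},\mathcal{D}_{B\to\bar B})$ attaining the plain fidelity, take the LOCC(1) code with trivial index set $I=\{1\}$, $\mathcal{E}^{1}:=\mathcal{E}$, $\mathcal{D}^{1}:=\mathcal{D}$. The LOCC(1) feasibility constraint that $\sum_{i\in I}\mathcal{E}^i$ be trace-preserving collapses to $\mathcal{E}$ being a channel, which holds, and the two objective values coincide.

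\textbf{Second inequality: construction.} For $F(\mathcal{N},M) \geq (F^{\mathrm{LOCC(1)}}(\mathcal{N},M))^{2}$ the plan is to construct, from an optimal LOCC(1) code $\{(\mathcal{E}^i,\mathcal{D}^i)\}_{i\in I}$ of fidelity $f$, a plain code of fidelity at least $f^{2}$. The natural candidate takes $\mathcal{E}:=\sum_i\mathcal{E}^i$ as the encoder (a channel by the LOCC(1) feasibility constraint, with Choi marginal $E_A=1_A/d_A$) and a convex mixture $\mathcal{D}:=\sum_j q_j\mathcal{D}^j$ for some distribution $\{q_j\}$ as the decoder (each $\mathcal{D}^j$ is a channel so the mixture is a channel with Choi marginal $D_B=1_B/d_B$), implemented via shared randomness whose presence does not enlarge the plain-code maximum by linearity of the objective. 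Writing $a_{ij}:=d_{\bar A}d_B\,\mathrm{Tr}\bigl[(J^{\mathcal{N}}_{\bar AB}\otimes\Phi_{A\bar B})(E^i_{A\bar A}\otimes D^j_{B\bar B})\bigr]\geq 0$ (nonnegativity from positive semidefiniteness of all four factors), the plain fidelity of this candidate equals $\sum_{i,j}q_j a_{ij}$, while the LOCC(1) fidelity is $f=\sum_i a_{ii}$.

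\textbf{Main obstacle.} The core technical step is to find weights $\{q_j\}$ so that $\sum_{i,j}q_j a_{ij}\geq f^{2}$. The strategy I would pursue is a Cauchy--Schwarz estimate exploiting the rank-one structure of the maximally entangled factor $\Phi_{A\bar B}$ in the objective. Setting $K:=d_{\bar A}d_B(J^{\mathcal{N}}_{\bar AB}\otimes\Phi_{A\bar B})\succeq 0$ and factorizing $a_{ij}=\|K^{1/2}((E^i)^{1/2}\otimes(D^j)^{1/2})\|_{2}^{2}$ as a Hilbert--Schmidt norm, the purity of $\Phi_{A\bar B}=\ketbra{\Phi}{\Phi}$ makes $K^{1/2}$ act through a single vector $\ket{\Phi}$ in the $A\bar B$ sector. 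Combined with the nonnegativity of every off-diagonal $a_{ij}$ and the natural choice $q_j:=a_{jj}/f$ (valid since $\sum_j a_{jj}=f$), an operator Cauchy--Schwarz tailored to the bilinear structure should then yield the quadratic lower bound. The chief difficulty is that a purely diagonal estimate gives only $\sum_j a_{jj}^{2}/f$, which is strictly weaker than $f^{2}$ when $|I|>1/f$, so the off-diagonal contributions $a_{ij}$ for $i\neq j$ must be used in an essential way; it is precisely the rank-one factor $\Phi$ that makes the needed off-diagonal Cauchy--Schwarz estimate possible.
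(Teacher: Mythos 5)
Your first inequality is fine and matches the paper. The second inequality, however, rests on a construction that cannot work, so the ``main obstacle'' you flag is not a technical hurdle but a genuine dead end. Taking the plain encoder to be $\mathcal{E}:=\sum_i\mathcal{E}^i$ and the decoder to be any fixed mixture $\sum_j q_j\mathcal{D}^j$ discards exactly the resource that makes the LOCC(1) code work: the correlation between which encoding branch occurred and which decoder is applied. Concretely, take $\mathcal{N}=\mathcal{I}$ on dimension $M$ and the LOCC(1) code in which Alice applies a uniformly random Heisenberg--Weyl unitary, $\mathcal{E}^i=\frac{1}{M^2}\,U_i(\cdot)U_i^\dagger$, sends $i$ forward, and Bob applies $\mathcal{D}^i=U_i^\dagger(\cdot)U_i$. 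Then $f=F^{\mathrm{LOCC(1)}}=1$, but $a_{ij}=\frac{1}{M^2}\,\bigl|\mathrm{Tr}\bigl[U_j^\dagger U_i\bigr]/M\bigr|^2=\frac{1}{M^2}\,\delta_{ij}$, so $\sum_{i,j}q_ja_{ij}=\frac{1}{M^2}$ for \emph{every} probability vector $q$, far below $f^2=1$. Hence the inequality $\sum_{i,j}q_ja_{ij}\geq f^2$ you hope to extract from a Cauchy--Schwarz argument is simply false; no exploitation of the rank-one factor $\Phi_{A\bar B}$ or of the off-diagonal terms can repair it, because in this example all off-diagonal $a_{ij}$ vanish.

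The paper's proof avoids this by never averaging over the classical message at the encoder. One first picks a single best symbol $\tilde s$: since the LOCC(1) fidelity is a sum over branches, there is an $\tilde s$ for which the (suitably normalized) branch $\mathcal{D}^{\tilde s}\circ\mathcal{N}\circ\frac{\mathcal{E}^{\tilde s}}{e^{\tilde s}}$ has fidelity at least $F^{\mathrm{LOCC(1)}}$, where $e^{\tilde s}$ normalizes the completely positive map $\mathcal{E}^{\tilde s}$ to be trace-preserving on the maximally mixed input. The remaining problem is that $\frac{\mathcal{E}^{\tilde s}}{e^{\tilde s}}$ is in general not a channel, and this is where the square is lost: by the polar-decomposition argument of Kretschmann--Werner~\cite[Propositions 4.5 and 5.1]{werner04} one can replace it by an isometric encoder $\mathcal{V}^{\tilde s}$ such that the plain code $(\mathcal{V}^{\tilde s},\mathcal{D}^{\tilde s})$ achieves fidelity at least the \emph{square} of the branch fidelity, giving $F(\mathcal{N},M)\geq\bigl(F^{\mathrm{LOCC(1)}}(\mathcal{N},M)\bigr)^2$. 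If you want to salvage your outline, the step to import is this single-branch selection plus the isometric replacement, not a convexity or Cauchy--Schwarz estimate over the whole family $\{(\mathcal{E}^i,\mathcal{D}^j)\}$.
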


Asymptotically this corresponds to the well-known statement that forward classical communication assistance does not increase the capacity. 

\begin{proof}
The first inequality is trivial because the addition of a forward classical communication channel cannot decrease the channel fidelity. The fact that $\left(F^{\mathrm{LOCC(1)}}(\mathcal{N},M)\right)^2$ gives a lower bound on $F(\mathcal{N},M)$ can be seen from \cite[Proposition 4.5]{werner04}. Consider an arbitrary coding scheme for the quantum channel $\mathcal{N}$ assisted with a forward classical communication channel and call $\mathcal{F}_{\mathrm{LOCC(1)}}$ the channel fidelity obtained using that scheme. We then want to show that it is always possible to find a coding scheme for the quantum channel $\mathcal{N}$ alone allowing us to achieve a channel fidelity $\mathcal{F} \geq \mathcal{F}_{\mathrm{LOCC(1)}}^2$. Say we are able to send, through the forward classical communication channel, a symbol in the set $\{1,\dots,S\}$ with $S\in\mathbb{N}$. An arbitrary coding scheme for the assisted quantum channel can be modelled by a collection of instruments $\{\mathcal{E}^s_{A \rightarrow \bar A}\}_{s\in \{1,\dots,S\}}$, i.e., trace-nonincreasing cp maps summing up to a channel, and channels $\{\mathcal{D}^s_{B \rightarrow \bar B}\}_{s\in \{1,\dots,S\}}$. It is then easy to show that there must exist a symbol $\tilde s$ such that the fidelity of the map $\mathcal{D}^{\tilde s} \circ \mathcal{N} \circ \frac{\mathcal{E}^{\tilde s}}{e^{\tilde s}}$ is lower bounded by $\mathcal{F}_{\mathrm{LOCC(1)}}$, where the factor $e^{\tilde s}$ is chosen such that the completely positive map $\frac{\mathcal{E}^{\tilde s}}{e^{\tilde s}}$ becomes trace preserving with respect to the maximally mixed state $\frac{1_A}{d_A}$, as done in ~\cite[Proposition 5.1]{werner04}. Using the polar decomposition it is possible to find an isometric encoder $\mathcal{V}^{\tilde s} $ such that the channel fidelity $\mathcal{F}$ obtained using the coding scheme with encoder $\mathcal{V}^{\tilde s} $ and decoder $\mathcal{D}^{\tilde s} $ is lower bounded by the squared fidelity of the map $\mathcal{D}^{\tilde s} \circ \mathcal{N} \circ \frac{\mathcal{E}^{\tilde s}}{e^{\tilde s}}$. This implies $\mathcal{F} \geq \mathcal{F}_{\mathrm{LOCC(1)}} ^2$.
\end{proof}

We have the dimension bounds for the LOCC(1)-assisted setting. Notice that the following result readily implies Lemma \ref{lem:F-dimension}.

\begin{lemma}\label{lem:F-dimension-locc}
Let $\mathcal{N}_{\bar A\to B}$ be a quantum channel and $M\in\mathbb{N}$. Then, we have
\begin{align}
0\leq F^{\mathrm{LOCC(1)}}(\mathcal{N},M)\leq\min\left\{1,\left(\frac{d_{\bar A}}{M}\right)^2,\frac{d_B}{M}\right\}.
\end{align}
\end{lemma}

\begin{proof}
The lower bound is trivial. For the upper bounds, as in the proof of Lemma~\ref{lem:F-dimension}, we mainly use that for any sub-normalized bipartite quantum state $\rho_{XY}$ we have $d_X\cdot1_X\otimes\rho_Y\succeq\rho_{XY}$. Now, for the first upper bound note that $\frac{d_{\bar B}}{d_B}\cdot1_{B\bar B}=d_{\bar B}\cdot1_{\bar B}\otimes D_B^i\succeq D^i_{B\bar B}$ for all $i\in I$, and hence we get for the objective function (with $d_A=d_{\bar B}=M$)
\begin{align}
F^{\mathrm{LOCC(1)}}(\mathcal{N},M)&\leq d_{\bar A}d_{\bar B}\cdot\mathrm{Tr}\left[\left(J^\mathcal{N}_{\bar AB}\otimes\Phi_{A\bar B}\right)^{1/2}\left(\sum_{i\in I}E_{A\bar A}^i\otimes1_{B\bar B}\right)\left(J^\mathcal{N}_{\bar AB}\otimes\Phi_{A\bar B}\right)^{1/2}\right]\\
&=d_{\bar A}d_{\bar B}\cdot\mathrm{Tr}\left[\left(\frac{1_{\bar A}}{d_{\bar A}}\otimes\frac{1_A}{d_A}\right)\sum_{i\in I}E_{A\bar A}^i\right]=\mathrm{Tr}\left[\sum_{i\in I}E_{A\bar A}^i\right]=1.
\end{align}
For the second upper bound, note that from $E_{A\bar A}^i\succeq0,\;D_{B\bar B}^i\succeq0$ we get
\begin{align}
F^{\mathrm{LOCC(1)}}(\mathcal{N},M)\leq d_{\bar A}d_B\cdot\mathrm{Tr}\left[\left(J^\mathcal{N}_{\bar AB}\otimes\Phi_{A\bar B}\right)\left(\sum_{i\in I}E_{A\bar A}^i\otimes\sum_{j\in I}D_{B\bar B}^j\right)\right].
\end{align}
Now, we employ that $d_{\bar A}\cdot E^i_A\otimes1_{\bar A}\succeq E^i_{A\bar A}$ giving $\frac{d_{\bar A}}{d_A}\cdot1_{A\bar A}\succeq\sum_{i\in I}E^i_{A\bar A}$, which in turn leads to
\begin{align}
F^{\mathrm{LOCC(1)}}(\mathcal{N},M)&\leq\frac{d_{\bar A}^2d_B}{d_A}\cdot\mathrm{Tr}\left[\left(J^\mathcal{N}_{\bar AB}\otimes\Phi_{A\bar B}\right)\left(1_{A\bar A}\otimes\sum_{j\in I}D_{B\bar B}^j\right)\right]\\
&=\frac{d_{\bar A}^2d_B}{d_A}\cdot\mathrm{Tr}\left[\left(J^\mathcal{N}_B\otimes\frac{1_{\bar B}}{d_{\bar B}}\right)\sum_{j\in I}D_{B\bar B}^j\right]\\
&=\frac{d_{\bar A}^2d_B}{d_A^2d_{\bar B}}\cdot\mathrm{Tr}\left[J^\mathcal{N}_B\sum_{j\in I}D_B^j\right]=\frac{d_{\bar A}^2d_B}{d_A^2d_{\bar B}}\cdot\mathrm{Tr}\left[J^\mathcal{N}_Bd_A\frac{1_B}{d_B}\right]=\frac{d_{\bar A}^2}{d_Ad_{\bar B}}.
\end{align}
For the third upper bound, note that $1_{B\bar B}\succeq D^i_{B\bar B}$ and thus
\begin{align}
F^{\mathrm{LOCC(1)}}(\mathcal{N},M)&\leq d_{\bar A}d_B\cdot\mathrm{Tr}\left[\left(J^\mathcal{N}_{\bar AB}\otimes\Phi_{A\bar B}\right)\left(\sum_{i\in I}E_{A\bar A}^i\otimes1_{B\bar B}\right)\right]\\
&=d_{\bar A}d_B\cdot\mathrm{Tr}\left[\left(\frac{1_{\bar A}}{d_{\bar A}}\otimes\frac{1_A}{d_A}\right)\sum_{i\in I}E_{A\bar A}^i\right]=\frac{d_B}{d_A}\cdot\mathrm{Tr}\left[\sum_{i\in I}E_{A\bar A}^i\right]=\frac{d_B}{d_A}.
\end{align}
\end{proof}


\subsection{Hierarchy of outer bounds}

By removing one of the two conditions in Theorem~\ref{lem:sep-deFinetti}, we get the following approximation for the set of LOCC(1) channels\,---\,stated in terms of the corresponding Choi states.

\begin{proposition}\label{lem:locc(1)-deFinetti}
Let $\rho_{A\bar A(B\bar B)_1^n}$ be a quantum state with
\begin{align}
\rho_{A\bar A(B\bar B)_1^n}=\mathcal{U}_{(B\bar B)_1^n}^\pi(\rho_{A\bar A(B\bar B)_1^n})\;\forall\pi\in\mathfrak{S}_n,\quad\rho_A=\frac{1_A}{d_A},\quad\rho_{(B\bar B)_1^{n-1}B_n}=\rho_{(B\bar B)_1^{n-1}}\otimes\frac{1_{B_n}}{d_B}.
\end{align}
Then, we have for $0<k<n$ that $\left\|\rho_{A\bar A(B\bar B)_1^{k}}-\sum_{i\in I}\sigma_{A\bar A}^i\otimes\left(\omega_{B\bar B}^i\right)^{\otimes k}\right\|_1$ is upper bounded by the same term as in Theorem~\ref{lem:sep-deFinetti}, where $\omega_{B\bar B}^i\succeq0$ with $\omega_B^i=\frac{1_B}{d_B}$ and $\sigma_{A\bar A}^i\succeq0$ with $\sum_{i\in I}\sigma_A^i=\frac{1_A}{d_A}$.
\end{proposition}

The $n$-th level of the SDP hierarchy then becomes
\begin{align}
\mathrm{SDP}^{\mathrm{LOCC(1)}}_n(\mathcal{N},M):=\max &\quad d_{\bar A}d_B\cdot\mathrm{Tr}\left[\left(J^\mathcal{N}_{\bar AB_1}\otimes\Phi_{A\bar B_1}\right)Z_{A\bar AB_1\bar B_1}\right]\\
s.t. &\quad Z_{A\bar A(B\bar B)_1^n}\succeq0,\;\mathcal{U}_{(B\bar B)_1^n}^\pi\left(Z_{A\bar A(B\bar B)_1^n}\right)=Z_{A\bar A(B\bar B)_1^n}\quad\forall\pi\in\mathfrak{S}_n\\
& \quad Z_{AB_1^n}=\frac{1_{AB_1^n}}{d_Ad_B^n},\;Z_{A\bar A\left(B\bar B\right)_1^{n-1}B_n}= Z_{A\bar A\left(B\bar B\right)_1^{n-1}}\otimes \frac{1_{B_n}}{d_B}.
\end{align}
By inspection, the only difference between $\mathrm{SDP}_n(\mathcal{N},M)$ and $\mathrm{SDP}^{\mathrm{LOCC(1)}}_n(\mathcal{N},M)$ is the weakened second to last condition. The asymptotic convergence follows immediately from Proposition~\ref{lem:locc(1)-deFinetti}.

\begin{theorem}\label{thm:convergence-locc(1)}
Let $\mathcal{N}$ be a quantum channel and $n,M\in\mathbb{N}$. Then, we have
\begin{align}
\mathrm{SDP}^{\mathrm{LOCC(1)}}_{n+1}(\mathcal{N},M)\leq\mathrm{SDP}^{\mathrm{LOCC(1)}}_n(\mathcal{N},M)\quad\text{and}\quad F^{\mathrm{LOCC(1)}}(\mathcal{N},M)=\lim_{n\to\infty}\mathrm{SDP}^{\mathrm{LOCC(1)}}_n(\mathcal{N},M).
\end{align}
\end{theorem}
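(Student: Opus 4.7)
The plan is to mirror the proof of Theorem~\ref{thm:convergence-plain}, but invoking Proposition~\ref{lem:locc(1)-deFinetti} instead of Theorem~\ref{lem:sep-deFinetti}, so that the weaker marginal condition on the $A$-system matches precisely the LOCC(1) constraints.

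First I would establish monotonicity. Given a feasible $Z_{A\bar A(B\bar B)_1^{n+1}}$ for $\mathrm{sdp}^{\mathrm{LOCC(1)}}_{n+1}(\mathcal{N},M)$, I set $Z'_{A\bar A(B\bar B)_1^n}:=\mathrm{Tr}_{(B\bar B)_{n+1}}[Z_{A\bar A(B\bar B)_1^{n+1}}]$. Using the symmetry of $Z$ under $\mathfrak{S}_{n+1}$, which in particular implies symmetry under $\mathfrak{S}_n$ acting on the first $n$ blocks, together with the two marginal constraints, it is direct to check that $Z'$ is feasible at level $n$. Since the objective function depends only on the reduced state on $A\bar A B_1\bar B_1$, the objective value is preserved, giving $\mathrm{sdp}^{\mathrm{LOCC(1)}}_{n+1}(\mathcal{N},M)\leq \mathrm{sdp}^{\mathrm{LOCC(1)}}_n(\mathcal{N},M)$.

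Next I would show $F^{\mathrm{LOCC(1)}}(\mathcal{N},M)\leq \mathrm{sdp}^{\mathrm{LOCC(1)}}_n(\mathcal{N},M)$ for every $n$. Starting from an LOCC(1) coding scheme represented via Choi matrices $\{E^i_{A\bar A},D^i_{B\bar B}\}_{i\in I}$ satisfying $E^i_{A\bar A}\succeq0$, $D^i_{B\bar B}\succeq 0$, $\sum_i E^i_A=\tfrac{1_A}{d_A}$ and $D^i_B=\tfrac{1_B}{d_B}$ for each $i$, I define
\begin{align*}
Z_{A\bar A(B\bar B)_1^n}:=\sum_{i\in I} E^i_{A\bar A}\otimes\bigl(D^i_{B\bar B}\bigr)^{\otimes n}.
\end{align*}
Positivity, trace one, symmetry under $\mathfrak{S}_n$ on $(B\bar B)_1^n$, $Z_{AB_1^n}=\tfrac{1_{AB_1^n}}{d_A d_B^n}$, and $Z_{A\bar A(B\bar B)_1^{n-1}B_n}=\tfrac{1_{B_n}}{d_B}\otimes Z_{A\bar A(B\bar B)_1^{n-1}}$ are all immediate, and the objective on $A\bar A B_1\bar B_1$ matches the LOCC(1) fidelity of the scheme. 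Taking the supremum over schemes gives the bound.

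Finally, for the converse direction of the limit, take any feasible $Z_{A\bar A(B\bar B)_1^n}$ for $\mathrm{sdp}^{\mathrm{LOCC(1)}}_n(\mathcal{N},M)$. The feasibility constraints, specifically permutation invariance on the $(B\bar B)$-blocks and $Z_{A\bar A(B\bar B)_1^{n-1}B_n}=Z_{A\bar A(B\bar B)_1^{n-1}}\otimes\tfrac{1_{B_n}}{d_B}$, are exactly the hypotheses of Proposition~\ref{lem:locc(1)-deFinetti}. Applying it with $k=1$, there exist $\sigma^i_{A\bar A}\succeq 0$, $\omega^i_{B\bar B}\succeq 0$ with $\omega^i_B=\tfrac{1_B}{d_B}$ and $\sum_i\sigma^i_A=\tfrac{1_A}{d_A}$ such that $\|Z_{A\bar A B_1\bar B_1}-\sum_i\sigma^i_{A\bar A}\otimes\omega^i_{B\bar B}\|_1\leq \varepsilon_n$, where $\varepsilon_n\to 0$ as $n\to\infty$ for fixed dimensions. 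The operators $\{\sigma^i_{A\bar A},\omega^i_{B\bar B}\}$ constitute a feasible LOCC(1) strategy, so the objective evaluated on their mixture is $\leq F^{\mathrm{LOCC(1)}}(\mathcal{N},M)$, while by H\"older the objective on $Z_{A\bar AB_1\bar B_1}$ differs by at most $d_{\bar A}d_B\|J^{\mathcal{N}}_{\bar A B_1}\otimes\Phi_{A\bar B_1}\|_\infty\cdot\varepsilon_n$. Hence $\mathrm{sdp}^{\mathrm{LOCC(1)}}_n(\mathcal{N},M)\leq F^{\mathrm{LOCC(1)}}(\mathcal{N},M)+O(\varepsilon_n)$, and combined with the reverse inequality and monotonicity the asymptotic equality follows.

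The main subtlety is the third step: one has to ensure that the $A$-marginal condition imposed in $\mathrm{sdp}^{\mathrm{LOCC(1)}}_n$, namely $Z_{AB_1^n}=\tfrac{1_{AB_1^n}}{d_A d_B^n}$ rather than a factorized constraint $Z_{A(B\bar B)_1^n}=\tfrac{1_A}{d_A}\otimes Z_{(B\bar B)_1^n}$, is exactly what Proposition~\ref{lem:locc(1)-deFinetti} needs to produce a decomposition with $\sum_i\sigma^i_A=\tfrac{1_A}{d_A}$ (the LOCC(1) shape) rather than the stronger $\sigma^i_A=\tfrac{1_A}{d_A}$ of Theorem~\ref{lem:sep-deFinetti}. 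This matching of weakened hypothesis to weakened conclusion is the crux of why the correct de Finetti variant here is the LOCC(1) one.
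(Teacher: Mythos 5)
Your proposal is correct and follows essentially the same route as the paper, which obtains Theorem~\ref{thm:convergence-locc(1)} directly from Proposition~\ref{lem:locc(1)-deFinetti} in exact analogy with the plain-coding case: product solutions $\sum_i E^i_{A\bar A}\otimes(D^i_{B\bar B})^{\otimes n}$ give $F^{\mathrm{LOCC(1)}}\leq\mathrm{sdp}^{\mathrm{LOCC(1)}}_n$, tracing out a block gives monotonicity, and the LOCC(1) de Finetti bound plus H\"older gives the converse in the limit. Your closing observation correctly identifies why the weakened $A$-marginal constraint $Z_{AB_1^n}=\frac{1_{AB_1^n}}{d_Ad_B^n}$ (which forces $Z_A=\frac{1_A}{d_A}$ and hence $\sum_i\sigma^i_A=\frac{1_A}{d_A}$ in the decomposition) is matched to the LOCC(1) form of the de Finetti conclusion, which is precisely the paper's intent.
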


Note that for $\mathrm{SDP}^{\mathrm{LOCC(1)}}_n(\mathcal{N},M)$ we slightly strengthened the last two conditions by including some more $A$- and $B$-systems in the conditions compared to the minimal conditions
\begin{align}
Z_A=\frac{1_A}{d_A}\quad\text{and}\quad Z_{\left(B\bar B\right)_1^{n-1}B_n}=\frac{1_{B_n}}{d_B}\otimes Z_{\left(B\bar B\right)_1^{n-1}}
\end{align}
needed for Proposition~\ref{lem:locc(1)-deFinetti}. By an iterative argument the last condition implies in particular that
\begin{align}
Z_{A\bar AB_1^n\bar B_1}=\frac{1_{B_2^n}}{d_B^n}\otimes Z_{A\bar AB_1\bar B_1},
\end{align}
which together with the other three conditions in $\mathrm{SDP}^{\mathrm{LOCC(1)}}_n(\mathcal{N},M)$ then corresponds to the notion of {\it extendible channels} from~\cite[Definition 5]{kaur18} (also see~\cite{duan16} for similar conditions). We note, however, that when relaxing the conditions to $n$-extendible channels our proofs for the asymptotic convergence of the resulting outer bounds do not apply.

The SDP relaxations again behave naturally in the sense that they are upper bounded by one.

\begin{lemma}\label{lem:ineqspnlocc}
Let $\mathcal{N}$ be a quantum channel and $n,M\in\mathbb{N}$. Then, we have
\begin{align}
0\leq\mathrm{SDP}^{\mathrm{LOCC(1)}}_n(\mathcal{N},M)\leq1.
\end{align}
\end{lemma}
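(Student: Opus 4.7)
My plan is to mirror the argument used for Lemma~\ref{lem:F-dimension-locc}, exploiting the sdp constraints on the marginals of $Z$ together with the standard operator inequality that $d_X\cdot 1_X\otimes\rho_Y\succeq\rho_{XY}$ for any sub-normalized bipartite state $\rho_{XY}$.

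For the lower bound, I observe that the feasible set is nonempty: the ``maximally mixed'' choice $Z_{A\bar A(B\bar B)_1^n}=\frac{1_{A\bar A}}{d_Ad_{\bar A}}\otimes\bigl(\frac{1_{B\bar B}}{d_Bd_{\bar B}}\bigr)^{\otimes n}$ is PSD, permutation invariant, and satisfies both marginal constraints by direct inspection. Since $J^{\mathcal N}_{\bar AB_1}\otimes\Phi_{A\bar B_1}\succeq0$ and $Z\succeq0$, any feasible $Z$ gives a nonnegative objective, so $\mathrm{sdp}^{\mathrm{LOCC(1)}}_n(\mathcal N,M)\geq0$.

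For the upper bound, the first step is to extract the structure of the marginal $Z_{A\bar AB_1}$. By permutation invariance in the $(B\bar B)_1^n$ systems, the constraint $Z_{A\bar A(B\bar B)_1^{n-1}B_n}=\frac{1_{B_n}}{d_B}\otimes Z_{A\bar A(B\bar B)_1^{n-1}}$ may be applied with $B_1$ in place of $B_n$; tracing out the remaining $(B\bar B)_2^n$ systems yields $Z_{A\bar AB_1}=\frac{1_{B_1}}{d_B}\otimes Z_{A\bar A}$. Next, applying the inequality $d_{\bar B}\cdot 1_{\bar B_1}\otimes Z_{A\bar AB_1}\succeq Z_{A\bar AB_1\bar B_1}$ inside the objective and using $\mathrm{Tr}_{\bar B_1}[\Phi_{A\bar B_1}]=\frac{1_A}{d_A}$ gives
\begin{align*}
\mathrm{sdp}^{\mathrm{LOCC(1)}}_n(\mathcal N,M)&\leq d_{\bar A}d_Bd_{\bar B}\cdot\mathrm{Tr}\!\left[\Bigl(J^{\mathcal N}_{\bar AB_1}\otimes\tfrac{1_A}{d_A}\Bigr)\Bigl(\tfrac{1_{B_1}}{d_B}\otimes Z_{A\bar A}\Bigr)\right]\\
&=\frac{d_{\bar A}d_{\bar B}}{d_A}\cdot\mathrm{Tr}\!\left[J^{\mathcal N}_{\bar A}\,Z_{\bar A}\right].
\end{align*}

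Finally, I finish by plugging in $J^{\mathcal N}_{\bar A}=\frac{1_{\bar A}}{d_{\bar A}}$ (trace-preservation of $\mathcal N$) and $\mathrm{Tr}[Z_{\bar A}]=\mathrm{Tr}[Z_{AB_1^n}]=1$ (from the constraint $Z_{AB_1^n}=\frac{1_{AB_1^n}}{d_Ad_B^n}$), together with $d_A=d_{\bar B}=M$. This leaves $\frac{d_{\bar B}}{d_A}=1$, completing the proof. There is no real obstacle here; the only thing requiring care is correctly extracting $Z_{A\bar AB_1}=\frac{1_{B_1}}{d_B}\otimes Z_{A\bar A}$ from the combination of permutation invariance and the last marginal constraint.
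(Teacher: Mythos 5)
Your proof is correct and follows essentially the same route as the paper: the key step in both is the operator inequality $Z_{A\bar AB_1\bar B_1}\preceq\frac{d_{\bar B}}{d_B}\,Z_{A\bar A}\otimes 1_{B_1\bar B_1}$ (obtained from $d_{\bar B}\cdot 1_{\bar B_1}\otimes Z_{A\bar AB_1}\succeq Z_{A\bar AB_1\bar B_1}$ together with the marginal constraint), followed by the same trace computation using $J^{\mathcal N}_{\bar A}=\frac{1_{\bar A}}{d_{\bar A}}$, $\mathrm{Tr}[Z_{A\bar A}]=1$ and $d_A=d_{\bar B}=M$. The only minor difference is that you derive $Z_{A\bar AB_1}=\frac{1_{B_1}}{d_B}\otimes Z_{A\bar A}$ directly at level $n$ via permutation invariance, whereas the paper first reduces to $n=1$ using the monotonicity $\mathrm{sdp}^{\mathrm{LOCC(1)}}_{n+1}(\mathcal N,M)\leq\mathrm{sdp}^{\mathrm{LOCC(1)}}_{n}(\mathcal N,M)$.
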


\begin{proof}
The lower bound is trivial. For the upper bound, by the monotonicity in $n$ (Theorem~\ref{thm:convergence-locc(1)}) it is enough to restrict to $n=1$. As in the proof of Lemma~\ref{lem:F-dimension-locc}, we make use of $\frac{d_{\bar B}}{d_B}\cdot Z_{A\bar A}\otimes1_{B_1\bar B_1}\succeq Z_{A\bar AB_1\bar B_1}$. This again gives
\begin{align}
\mathrm{SDP}^{\mathrm{LOCC(1)}}_1(\mathcal{N},M)\leq d_{\bar A}d_B\cdot\mathrm{Tr}\left[\left(J^\mathcal{N}_{\bar AB}\otimes\Phi_{A\bar B}\right)\frac{d_{\bar B}}{d_B}\cdot Z_{A\bar A}\otimes1_{B_1\bar B_1}\right]=1.
\end{align}
\end{proof}

We can again add PPT constraints and we denote the resulting relaxations by $\mathrm{SDP}^{\mathrm{LOCC(1)}}_{n,\mathrm{PPT}}(\mathcal{N},M)$. In the following we study more closely these levels $\mathrm{SDP}_{n,\mathrm{PPT}}^{\mathrm{LOCC(1)}}(\mathcal{N},M)$, which are our tightest outer bound relaxations on the LOCC(1)-assisted channel fidelity. We find
\begin{align}
\mathrm{SDP}^{\mathrm{LOCC(1)}}_{1,\mathrm{PPT}}(\mathcal{N},M)=\max &\quad d_{\bar A}d_B\cdot\mathrm{Tr}\left[\left(J^\mathcal{N}_{\bar AB}\otimes\Phi_{A\bar B}\right)Z_{A\bar AB\bar B}\right] \\
s.t. &\quad Z_{A\bar AB\bar B}\succeq0,\;Z_{A\bar AB\bar B}^{T_{B\bar B}}\succeq0\\
& \quad Z_{AB}=\frac{1_{AB}}{d_Ad_B},\;Z_{A\bar AB}=Z_{A\bar A}\otimes\frac{1_B}{d_B}.
\end{align}
This is exactly the SDP outer bound found in~\cite[Section IV]{matthews14}, which simplifies to
\begin{align}
\mathrm{SDP}^{\mathrm{LOCC(1)}}_{1,\mathrm{PPT}}(\mathcal{N},M)=\max &\quad d_{\bar A}d_B\cdot\mathrm{Tr}\left[J^\mathcal{N}_{\bar AB}X_{\bar AB}\right] \\
s.t. &\quad \rho_{\bar A}\otimes\frac{1_B}{d_B}\succeq X_{\bar AB}\succeq0,\;\mathrm{Tr}[\rho_{\bar A}]=1\\
& \quad \rho_{\bar A}\otimes\frac{1_B}{d_B}\succeq M\cdot X_{\bar AB}^{T_B}\succeq-\rho_{\bar A}\otimes\frac{1_B}{d_B}.
\end{align}
By inspection, this corresponds to $\mathrm{SDP}_{1,\mathrm{PPT}}(\mathcal{N},M)$ but with one missing constraint, namely $M^2X_B=\frac{1_B}{d_B}$. For $n=2$ we get
\begin{align}
\mathrm{SDP}^{\mathrm{LOCC(1)}}_{2,\mathrm{PPT}}(\mathcal{N},M)=\max &\quad d_{\bar A}d_B\cdot\mathrm{Tr}\left[\left(J^\mathcal{N}_{\bar AB_1}\otimes\Phi_{A\bar B_1}\right)Z_{A\bar AB_1\bar B_1}\right] \\
s.t. &\quad Z_{A\bar AB_1B_2\bar B_1\bar B_2}\succeq0,\;Z_{A\bar AB_1B_2\bar B_1\bar B_2}^{T_{A\bar A}}\succeq0,\;Z_{A\bar AB_1B_2\bar B_1\bar B_2}^{T_{B_2\bar B_2}}\succeq0\\
& \quad \mathcal{U}_{B_1B_2\bar B_1\bar B_2}^\pi\left(Z_{A\bar AB_1B_2\bar B_1\bar B_2}\right)=Z_{A\bar AB_1B_2\bar B_1\bar B_2}\quad\forall\pi\in\Pi_2\\
& \quad Z_{AB_1B_2}=\frac{1_{AB_1B_2}}{d_Ad_B^2},\;Z_{A\bar AB_1B_2\bar B_1}=Z_{A\bar AB_1\bar B_1}\otimes\frac{1_{B_2}}{d_B},
\end{align}
and we recover the exact same conditions as for the notion of extendible channels~\cite[Definition 5]{kaur18}.


\section{Numerical examples}\label{sec:numerics}

\subsection{Methods}

In the following we present the proof of concept numerics we implemented to test the low levels of our hierarchy for the application of approximate quantum error correction. The experiments have been done in MATLAB using the QETLAB library~\cite{referenceQETLAB}, CVX~\cite{referenceCVX}, MOSEK~\cite{mosek17}, and SDPT3~\cite{Toh2012}.\footnote{All our code is available at https://github.com/FrancescoBorderi/Quantum-SDPs.} As discussed in Lemma~\ref{lem:rank-loop}, the authors of~\cite{Navascues09} gave a rank loop condition to certify that a certain level of the hierarchy already gives the optimal value. We restate the condition here in the exact form needed for approximate quantum error correction.

\begin{lemma}\label{lem:rank-loop-channel}
Let $W_{A\bar A(B\bar B)_1^n}=\mathcal{U}_{(B\bar B)_1^n}^\pi\left(W_{A\bar A(B\bar B)_1^n}\right)$ for all $\pi\in\mathfrak{S}_n$ and fixed $0\leq k\leq n$ such that $W_{A\bar A(B\bar B)_1^n}^{T_{(B\bar B)_{k+1}^n}}\succeq0$. If we have
\begin{align}
\mathrm{rank}\left(W_{A\bar A(B\bar B)_1^n}\right) \leq \max\left\{\mathrm{rank}\left(W_{A\bar A (B\bar B)_1^k}\right),\,\mathrm{rank}\left(W_{ (B\bar B)_{k+1}^n}\right)\right\},
\end{align}
then $W_{A\bar AB\bar B}$ is separable with respect to the partition $A \bar A | B \bar B$.
\end{lemma}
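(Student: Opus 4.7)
The plan is to observe that Lemma~\ref{lem:rank-loop-channel} is obtained from Lemma~\ref{lem:rank-loop} by the trivial regrouping of systems $A' := A\bar A$ and $B'_i := (B\bar B)_i$ for $i=1,\dots,n$. Under this identification, the permutation invariance $\mathcal{U}_{(B\bar B)_1^n}^\pi(W) = W$ becomes permutation invariance of $W_{A'(B')_1^n}$ on the $B'_1,\dots,B'_n$ factors; the PPT hypothesis $W^{T_{(B\bar B)_{k+1}^n}}\succeq 0$ becomes $W_{A'(B')_1^n}^{T_{B'_{k+1}^n}}\succeq 0$; and the rank hypothesis is exactly the one appearing in Lemma~\ref{lem:rank-loop}. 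Applying Lemma~\ref{lem:rank-loop} then yields that $W_{A'B'_1} = W_{A\bar A B\bar B}$ is separable with respect to the $A' | B'_1 = A\bar A \, | \, B\bar B$ cut, which is what we want.

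More concretely, I would first state this regrouping and verify that the hypotheses match. The PPT condition is the key enabler: under permutation invariance of the $B'_i$ systems, the partial transpose on $B'_{k+1}^n$ is well-defined independently of which particular $n-k$ blocks are chosen, and positivity of this partial transpose is a nontrivial constraint that survives the regrouping. I would then cite Lemma~\ref{lem:rank-loop} directly to conclude separability of $W_{A\bar A B\bar B}$.

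Since Lemma~\ref{lem:rank-loop} itself is an import from~\cite{Navascues09,Horodecki00}, the only substantive point to discuss in the proof is that nothing in the passage from Lemma~\ref{lem:rank-loop} to its channel variant requires the individual structure of $B$ and $\bar B$: neither the range criterion of Horodecki that underlies the rank loop condition, nor the symmetric extension argument, interacts with the internal tensor factorization of $B\bar B$. The main (only) obstacle is thus notational: making the reader believe that the two lemmas are truly the same statement up to relabelling, and in particular that the channel-specific constraints appearing elsewhere in our $\mathrm{sdp}_{n,\mathrm{PPT}}(\mathcal{N},M)$ hierarchy (the marginal conditions $W_{A(B\bar B)_1^n} = \tfrac{1_A}{d_A}\otimes W_{(B\bar B)_1^n}$ and $W_{A\bar A(B\bar B)_1^{n-1}B_n} = W_{A\bar A(B\bar B)_1^{n-1}}\otimes \tfrac{1_{B_n}}{d_B}$) are not needed for the rank loop certificate of separability, though they are of course used elsewhere to guarantee that the separable decomposition obtained has factors which are valid Choi matrices of quantum channels.
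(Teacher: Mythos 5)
Your proposal is correct and matches the paper's treatment: Lemma~\ref{lem:rank-loop-channel} is stated there as a direct restatement of Lemma~\ref{lem:rank-loop} under the regrouping $A\bar A \leftrightarrow A$ and $(B\bar B)_i \leftrightarrow B_i$, with no separate argument given, exactly as you observe. Your remark that the additional marginal constraints of the $\mathrm{sdp}_{n,\mathrm{PPT}}$ hierarchy play no role in the rank loop certificate itself is also consistent with the paper.
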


Using Lemma~\ref{lem:rank-loop-channel} it is in principle possible to, e.g., certify the optimality of the first level using the second level of our hierarchy. Moreover, if the criterion is fulfilled it will also allow us to extract the actual encoder and decoder of the optimal quantum error correction code. However, in order to facilitate the search for solutions having rank loops we need to look for low rank solutions $W_{A\bar A(B\bar B)_1^n}$. It is not possible to directly write a rank condition into our semidefinite programs because rank constraints are not convex. In addition, SDP solvers typically give high rank solutions since they tend to look for solutions at the interior of the convex set.\footnote{We noticed that SDPT3 compared to MOSEK gives results having in general lower rank.} Nevertheless, a possible strategy is to find a solution $W_{A\bar A(B\bar B)_1^n}$ and then employ a heuristic to minimize the rank while keeping the hierarchy constraints. The heuristic we found the most effective for our purposes was the log-det method described in~\cite{Fazel03}. The idea is to minimize the first-order Taylor series expansion of
\begin{align}
\log \det \left(W_{A\bar A(B\bar B)_1^n} +\delta\cdot1\right),
\end{align}
which is used as a smooth surrogate for $\mathrm{rank}\big(W_{A\bar A(B\bar B)_1^n}\big)$ and $\delta>0$ is a small regularization constant. The procedure is iterative, meaning that we start from $W_0 = 1_{A\bar A(B\bar B)_1^n}$, then compute $W_1$ minimizing the log-det objective function, and so on. In particular, the choice $W_0 = 1_{A\bar A(B\bar B)_1^n}$ connects the method to the trace heuristic, which is known to be an effective heuristic for rank reduction. We stop after a certain number $l$ of iterations and then we find a solution $W_l$ having hopefully lower rank than the original $\mathrm{rank}\big(W_{A\bar A(B\bar B)_1^n}\big)$.


\subsection{Qubit Channels}

We computed SDP relaxations in the plain coding setting for all the most common qubit channels: depolarizing, amplitude damping, bit flip, phase flip, bit-phase flip, Werner-Holevo and generalized Werner-Holevo channel. We found the upper bounds
\begin{align}
\mathrm{SDP}_{1,\mathrm{PPT}}(\mathcal{N}_2,2) = \mathrm{SDP}_{2,\mathrm{PPT}}(\mathcal{N}_2,2) = \mathrm{SDP}_{3,\mathrm{PPT}}(\mathcal{N}_2,2)=\mathrm{SDP}_1(\mathcal{N}_2,2) = \mathrm{SDP}_2(\mathcal{N}_2,2) = \mathrm{SDP}_3(\mathcal{N}_2,2),
\end{align}
where the subscript in $\mathcal{N}_2$ refers to the two-dimensional input and output of the channel. These identities also remain true for random qubit channels and one might then conjecture that for qubit channels indeed already $\mathrm{SDP}_1(\mathcal{N}_2,2)$ captures $F(\mathcal{N},2)$.

For the qubit depolarizing channel the trivial coding scheme is known to be optimal and we retrieve this result using the rank loop condition of the second level based on the log-det method. Similarly, for the qubit bit flip channel with parameter $p=0.1$ we find a rank-one state solution of the second level using again the log-det method, implying that the rank loop condition holds. In this case the solution is not just the state associated with the trivial coding scheme via the Choi isomorphism but the resulting encoder/decoder pair with optimal fidelity $0.9$ is given by the unitary channels with Kraus matrices $U_E= - \ketbra{1}{0} + \ketbra{0}{1}$ and $U_D=  \ketbra{0}{0} - \ketbra{1}{1}$, respectively. Note that the trivial coding scheme is largely suboptimal for a qubit bit flip channel with $p=0.1$, as the corresponding fidelity is $0.1$.

\begin{figure} [ht]
	\includegraphics[width=0.7\linewidth]{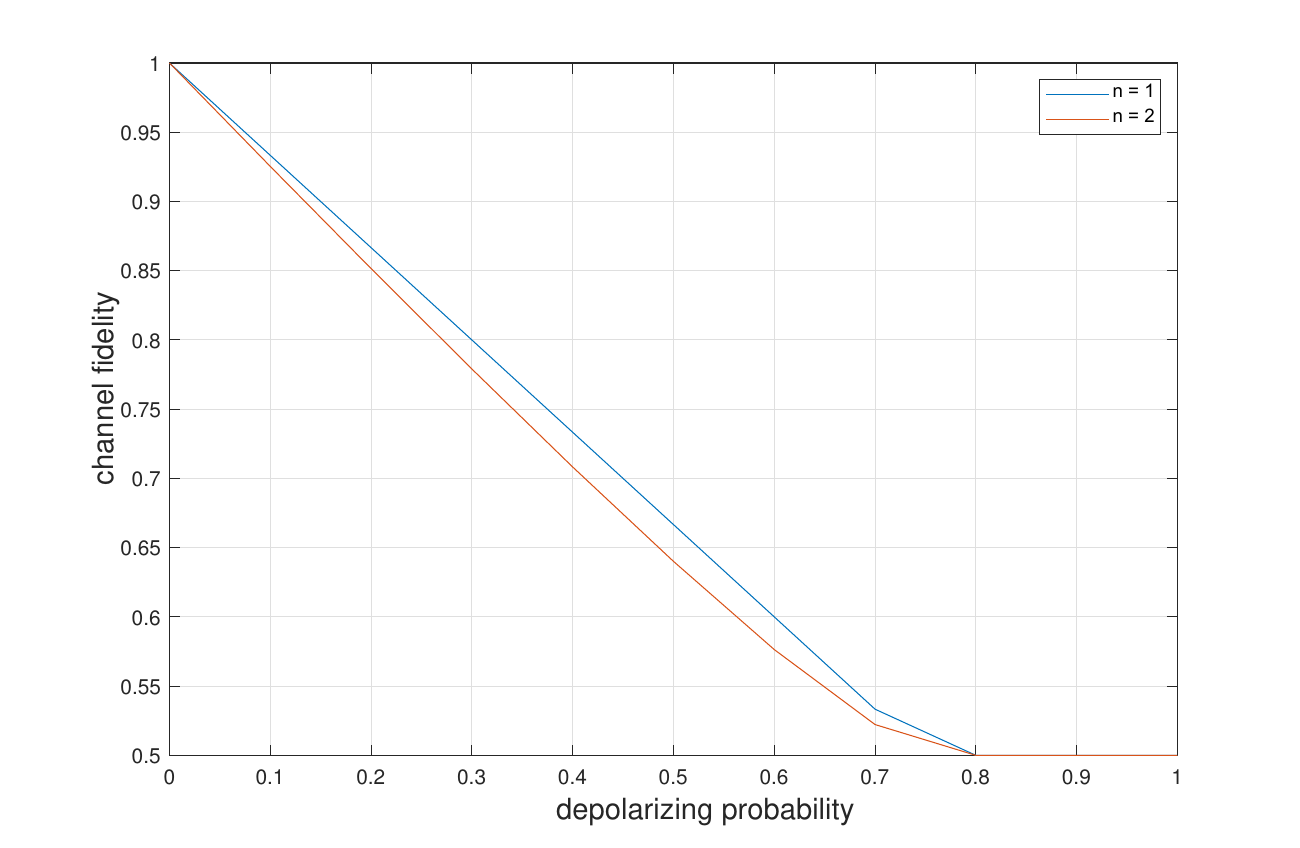}
	\caption{Comparison of the SDP upper bounds $n=1,2$ on the channel fidelity of the 3-dimensional depolarizing channel for LOCC(1)-assisted coding (see Appendix~\ref{sec:appendix}). We see an improvement for the second level for $p \in (0,0.8)$.}
	\label{fig:depL}
\end{figure}

\begin{figure} [ht]
	\includegraphics[width=0.7\linewidth]{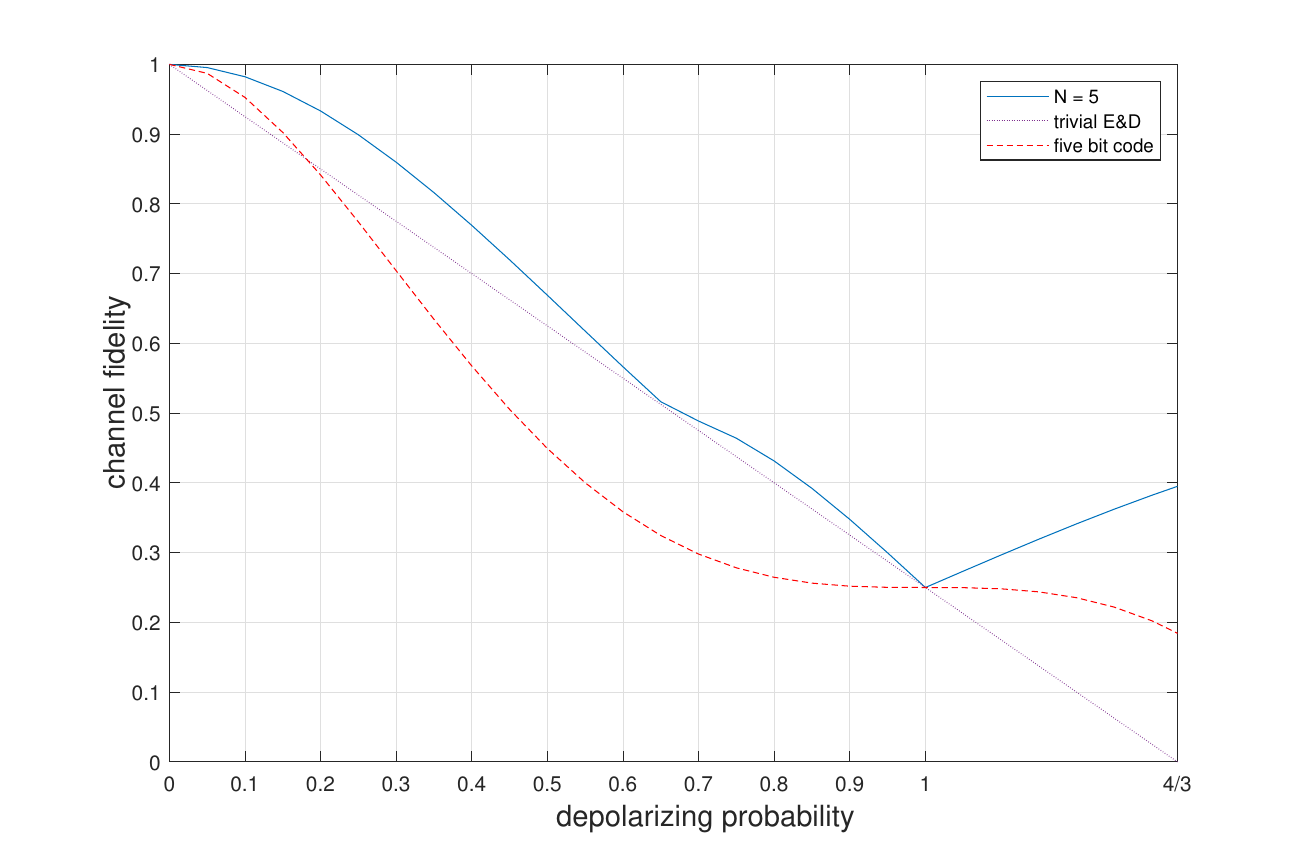}
	\caption{Comparison of the SDP upper bound $n=1$ on the channel fidelity for five repetitions of the qubit depolarizing channel in the plain coding setting, with the trivial coding scheme and the 5 qubit stabilizer code from~\cite{Bennett96}. Notice the intersection of the 5 qubit code and the trivial scheme in the region $p\in(0.1,0.2)$ and the singular behaviour of the first level in the region $p\in(0.6,0.7)$. In addition, for $p \in [1,4/3]$ the behaviour of the first level seems to match exactly with the lower bound obtained with an iterative seesaw algorithm reported in Figure 3.7 of~\cite[Chapter 3]{Werner05}.}
	\label{fig:depfive}
\end{figure}

\begin{figure} [ht]
	\includegraphics[width=0.7\linewidth]{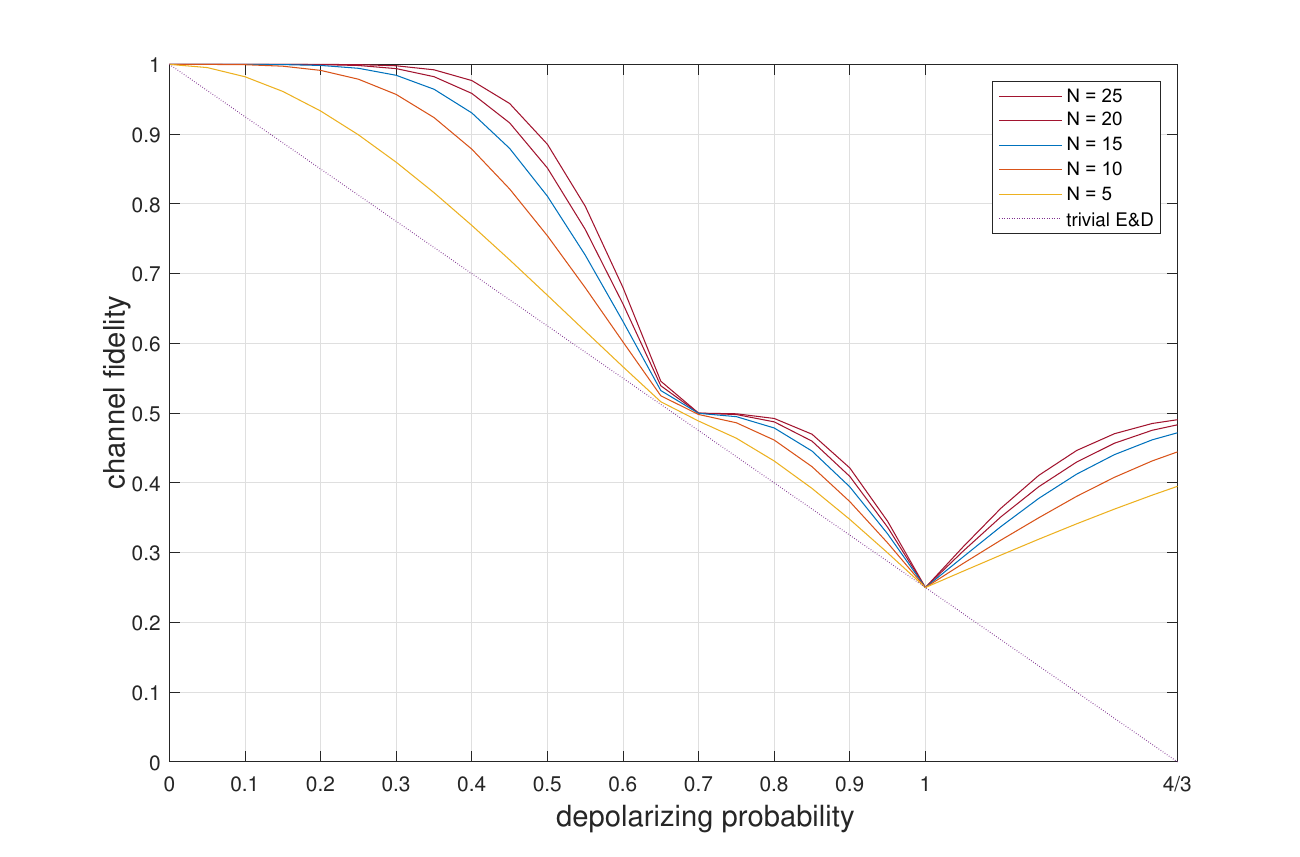}
	\caption{Comparison of the SDP upper bound $n=1$ on the channel fidelity for 5, 10, 15, 20, 25 repetitions of the 2-dimensional depolarizing channel in the plain coding setting. Notice that the singular behaviour of the first level in the region $p \in (0.6,0.7)$ is even more accentuated with the increase of the number of repetitions.}
	\label{fig:depseveral}
\end{figure}

\begin{figure} [ht]
	\includegraphics[width=0.7\linewidth]{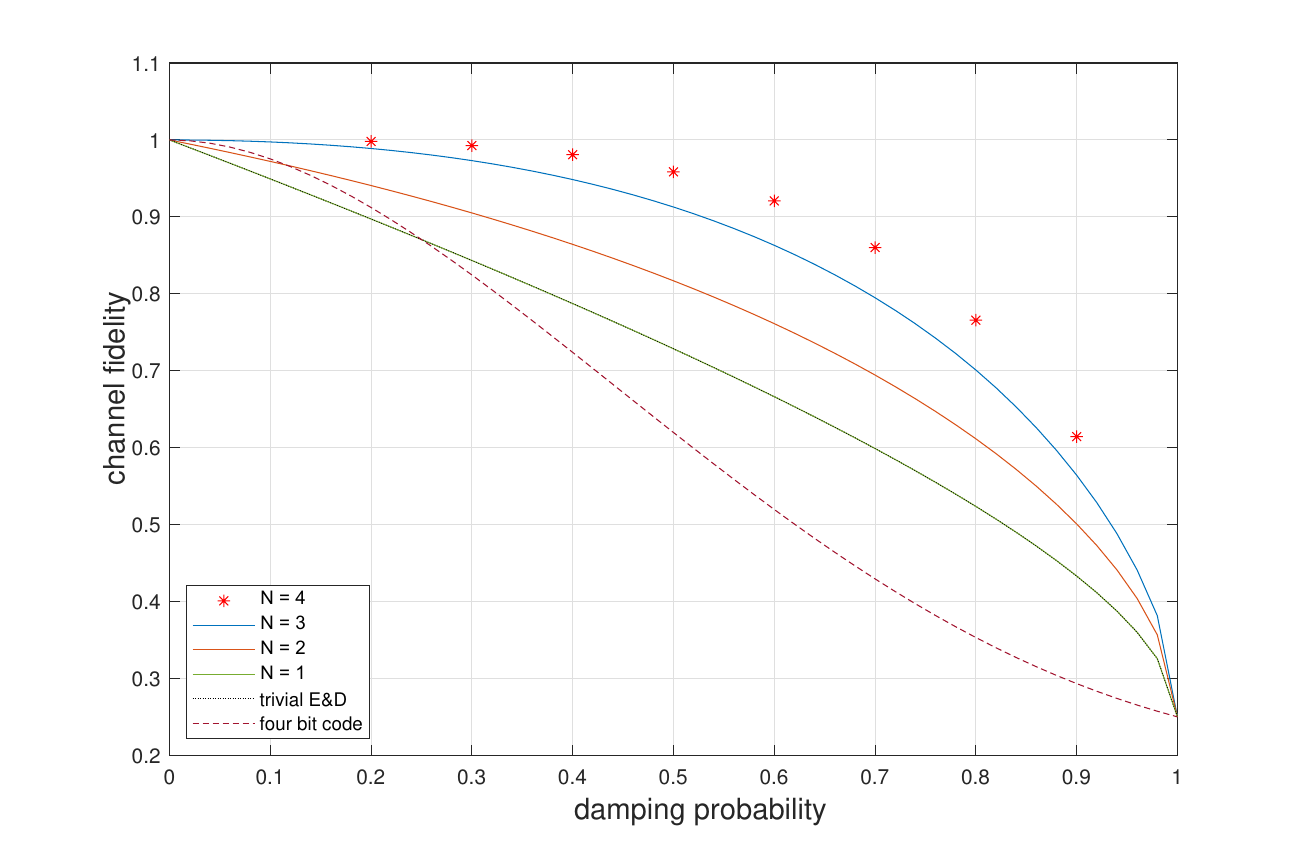}
	\caption{Comparison of the SDP upper bound $n=1$ on the channel fidelity of the qubit amplitude damping channel for 1,2,3 and 4 repetitions in the plain coding setting, as well as the trivial encoder and decoder and the 4 qubit code from~\cite{Leung97}.}
	\label{fig:ampdamp}
\end{figure}


\subsection{Qutrit Channels}\label{qutrit}

We computed SDP relaxations in the plain coding setting for the following qutrit channels: depolarizing, Werner-Holevo and generalized Werner-Holevo channel. We found the upper bounds $\mathrm{SDP}_{1,\mathrm{PPT}}(\mathcal{N}_3,2) = \mathrm{SDP}_{2,\mathrm{PPT}}(\mathcal{N}_3,2)$ and this identity also remains true for random qutrit channels. Removing the PPT conditions, however, we found qutrit channels $\mathcal{N}_3$ such that $\mathrm{SDP}_2(\mathcal{N}_3,2) < \mathrm{SDP}_1(\mathcal{N}_3,2)$.


\subsection{Depolarizing channel}

The depolarizing channel for $p \in [0,4/3]$ is given as
\begin{align}
Dep_d:\rho_{\bar A} \mapsto p\cdot\tr[\rho_{\bar A}]\frac{1_B}{d_B} + (1-p)\cdot\rho_{B},
\end{align}
where $d$ denotes the dimension of the input and output. Notice that even though often the channel is only studied for $p \in [0,1]$ where we can interpret $p$ as a depolarizing probability, the above expression also represents a channel for $p \in (1,4/3]$ (as, e.g., discussed in~\cite[Chapter 3]{Werner05}). We find that
\begin{align}
\mathrm{SDP}_{1,\mathrm{PPT}}(Dep_2,2) = \mathrm{SDP}_{2,\mathrm{PPT}}(Dep_2,2)=\mathrm{SDP}_{1,\mathrm{PPT}}(Dep_3,2) = \mathrm{SDP}_{2,\mathrm{PPT}}(Dep_3,2).
\end{align}
However, in Section \ref{qutrit} we found that in general removing the PPT conditions allows us to see a difference for the first two levels. This behaviour is not shown by the qutrit depolarizing channel, probably due to its highly symmetrical structure. We computed the upper bound for LOCC(1) coding (see Appendix~\ref{sec:appendix}) and found for $p \in (0,0.8)$ that
\begin{align}
\mathrm{SDP}^{\mathrm{LOCC(1)}}_{2,\mathrm{PPT}}(Dep_2,2)=\mathrm{SDP}^{\mathrm{LOCC(1)}}_{1,\mathrm{PPT}}(Dep_2,2),\;\text{while}\;\mathrm{SDP}^{\mathrm{LOCC(1)}}_{2,\mathrm{PPT}}(Dep_3,2)<\mathrm{SDP}^{\mathrm{LOCC(1)}}_{1,\mathrm{PPT}}(Dep_3,2).
\end{align}
We compared, for the plain coding setting, the $n=1$ level for five repetitions of the qubit depolarizing channel with the fidelity of the trivial coding scheme, as well as the 5 qubit stabilizer code from~\cite{Bennett96}. In particular, following~\cite{wang17} we exploited the symmetries of the qubit depolarizing channel to get the linear program
\begin{align}
\mathrm{SDP}_{1,\mathrm{PPT}}\left(Dep_2^{\otimes N},2\right)=\max &\quad \sum_{i=0}^N \binom{N}{i}\left(1-\frac{3p}{4}\right)^i {\left(\frac{3p}{4}\right)}^{N-i} m_i \\
s.t. &\quad 0 \le m_i \le 1 \quad i\in \{0,\dots,N\}\\
& \quad -\frac{1}{2} \leq \sum_{i = 0}^N x_{i,k} m_i \le \frac{1}{2} \quad k\in \{0,\dots,N\}\\
& \quad \sum_{i=0}^{N}\binom{N}{i} 3^{N-i} m_i = 2^{2N-2}.
\end{align}
where $x_{i,k}=\frac{1}{d^N}\sum_{r= \max\{0,i+k-N\}}^{\min\{i,k\}} \binom{k}{r} \binom{N-k}{i-r}  (-1)^{i-r} (d-1)^{k-r} (d+1)^{N-k+r-i}$ with $i,k\in \{0,\dots,N\}$. Notice that the number of variables is an affine function of $N$. The results are reported in Figure \ref{fig:depfive}. Comparing these with Figure 3.7 in \cite[Chapter 3]{Werner05}, it seems that the first level of the hierarchy matches their lower bounds in the region $p \in [1,4/3]$. Notice the intersection of the five qubit code and the trivial coding scheme in the region $p\in(0.1,0.2)$ and the singular behaviour in the region $p\in(0.6,0.7)$. We have also examined five, ten, fifteen, twenty and twenty five repetitions of the qubit depolarizing channel, again using the above linear program. The results are shown in Figure \ref{fig:depseveral}. Notice that the singular behaviour noted in Figure~\ref{fig:depfive} is now even more accentuated when increasing the number of repetitions.


\subsection{Amplitude damping channel}

The qubit amplitude damping channel with damping probability $\gamma \in [0,1]$ is given as
\begin{align}
Amp_{\gamma}:\rho_{\bar A} \rightarrow E^0_B \rho_{B}{E^0_B}^\dagger + E^1_B \rho_{B}{E^1_B}^\dagger,\;\text{where}\;E^0_B = \ketbra{0}{0} + \sqrt{1-\gamma}\ketbra{1}{1},\;E^1_B = \sqrt{\gamma}\ketbra{0}{1}.
\end{align}
We compared the results given by one, two, three, and four repetitions of the channel for the level $n=1$. The bounds are shown in Figure \ref{fig:ampdamp}, compared with the fidelity of the trivial coding scheme, and the 4 qubit code from~\cite{Leung97}. Notice the overlap between the first level of the hierarchy and the trivial coding scheme for the one-shot setting. Comparing these results with Figure 3.12 in~\cite[Chapter 3]{Werner05} we see that there is gap between their lower bounds (that significantly improve on the trivial coding scheme) and our upper bounds.


\section{Worst case error criteria}\label{sec:diamondNorm}

\subsection{Setting}

So far we have used the channel fidelity from Definition \ref{def:plain-coding} as the measure to study approximate quantum error correction\,---\,which corresponds to the average error case. In this appendix, we consider the diamond norm to study the worst case error and we find a program for which the hierarchy can be used to generate, in this case, lower bounds. We prove the sequence of semidefinite relaxations do in fact converge to the exact value of the original optimization program.

\begin{definition}\label{def:plain-worstcoding}
Let $\mathcal{N}_{\bar A\to B}$ be a quantum channel and $M \in \mathbb{N}$, with $M=d_{A}=d_{\bar B}$. The channel distance is defined as
\begin{align}
\Delta(\cN,M):=\min &\;\frac{1}{2}\left\|\mathcal{D}_{B\to\bar B}\circ\mathcal{N}_{\bar A\to B}\circ\mathcal{E}_{A\to\bar A}-\mathcal{I}_{A\to\bar B}\right\|_{\Diamond} \\
\mathrm{s.t.} &\;\mathcal{D}_{B\to\bar B},\mathcal{E}_{A\to\bar A}\;\text{quantum channels.}
\end{align}
\end{definition}

The following lemma writes the channel distance as given in Definition~\ref{def:plain-worstcoding} in terms of the Choi matrices of the encoder $\mathcal{E}_{A\to\bar A}$ and decoder $\mathcal{D}_{B\to\bar B}$, respectively.

\begin{lemma}\label{lem:Choi-worst}
Let $\mathcal{N}_{\bar A\to B}$ be a quantum channel and $M\in\mathbb{N}$. Then, we have that
\begin{align}
&\Delta(\cN,M)=\min\;\lambda \\
\mathrm{s.t.} &\; \;E_{A\bar A}\succeq0,\,E_A=\frac{1_A}{d_A},\;D_{B\bar B}\succeq0,\,D_B=\frac{1_B}{d_B}\\
&\;Z_{A\bar B}\succeq0,\;\frac{\lambda}{d_A}\cdot1_A\succeq Z_A\\
&\; Z_{A\bar B}+\Phi_{A\bar B}\succeq d_{\bar A}d_{B}\cdot\mathrm{Tr}_{\bar A B}\left[\left( 1_{A} \otimes J^{\mathcal{N}}_{\bar A B}\otimes 1_{\bar B})(E_{A\bar A}\otimes D_{B\bar B}\right)\right],
\end{align}
where $J^{\mathcal{N}}_{\bar A B}$ denotes the Choi matrix of $\mathcal{N}_{\bar A\to B}$.
\end{lemma}

\begin{proof}
Following \cite{v005a011}, the channel distance $\Delta(\cN,M)$ can be written as
\begin{align}
\Delta(\cN,M)=\min&\;\left\|Z_A \right\|_{\infty}\\
\mathrm{s.t.} &\;\mathcal{D}_{B\to\bar B},\mathcal{E}_{A\to\bar A}\;\text{quantum channels} \\
&\;Z_{A\bar B}\succeq 0,\;Z_{A\bar B}\succeq d_A \cdot J^{\mathcal{D}\circ\mathcal{N}\circ\mathcal{E}-\mathcal{I}}_{ A \bar B}.
\end{align}
We simplify
\begin{align}
J^{\mathcal{D}\circ\mathcal{N}\circ\mathcal{E}-\mathcal{I}}_{ A \bar B} &= J^{\mathcal{D}\circ\mathcal{N}\circ\mathcal{E}}_{ A \bar B} - J^{\mathcal{I}}_{ A \bar B} 
= J^{\mathcal{D}\circ\mathcal{N}\circ\mathcal{E}}_{ A \bar B} - \Phi_{ A \bar B},
\end{align}
write for the infinity norm $\left\|Z_A \right\|_{\infty}=\min\left\{\lambda\in \mathbb{R}: \lambda \cdot 1_A \succeq Z_A\right\}$, and relabel $\frac{Z_{A\bar B}}{d_A}$ as $Z_{A\bar B}$, leading to
\begin{align}
\Delta(\cN,M)=\min&\;\lambda\label{eq:watrous}\\
\mathrm{s.t.} &\;\mathcal{D}_{B\to\bar B},\mathcal{E}_{A\to\bar A}\;\text{quantum channels}\\
&\;Z_{A\bar B}\succeq 0,  \frac{\lambda}{d_A} \cdot 1_A \succeq Z_A \\
&\; Z_{A\bar B} + \Phi_{ A \bar B} \succeq  J^{\mathcal{D}\circ\mathcal{N}\circ\mathcal{E}}_{A\bar B}.
\end{align}
Following \cite{matthews14} and in particular \cite[Equation 7]{wang17-2}, we have the Choi state
\begin{align}
J^{\mathcal{D}\circ\mathcal{N}\circ\mathcal{E}}_{ A \bar B}=d_{\bar A}d_{B}\cdot\mathrm{Tr}_{\bar A B}\left[ \left( 1_{A} \otimes J^{\mathcal{N}}_{\bar A B}\otimes 1_{\bar B}\right)\left(J^{\mathcal{E}}_{ A \bar A}\otimes J^{\mathcal{D}}_{ B \bar B}\right)\right]
\end{align}
and writing $J^{\mathcal{E}}_{ A \bar A} = E_{A\bar A}$ as well as $J^{\mathcal{D}}_{ B \bar B} = D_{B\bar B} $ concludes the proof.
\end{proof}


\subsection{Hierarchy of lower bounds}

Similarly as in Section \ref{subsectionHierarchy}, we define a hierarchy of semidefinite programs labelled by an index $n$. Our framework directly applies as the structure of the optimization problem derived in Lemma \ref{lem:Choi-worst} involves the tensor product $E_{A\bar A}\otimes D_{B\bar B}$. The $n$-th level of the SDP hierarchy then generates the lower bounds $\mathrm{SDP}^\Delta_n(\mathcal{N},M)$ for the distance $\Delta(\cN,M)$ as
\begin{align}
\mathrm{SDP}^\Delta_n(\mathcal{N},M):=\min&\;\lambda \\
\mathrm{s.t.} &\; W_{A\bar A(B\bar B)_1^n}\succeq0,\;\mathrm{Tr}\left[W_{A\bar A(B\bar B)_1^n}\right]=1\\
&\;W_{A\bar A(B\bar B)_1^n}=\mathcal{U}_{(B\bar B)_1^n}^\pi\left(W_{A\bar A(B\bar B)_1^n}\right)\;\forall\pi\in\mathfrak{S}_n\\
&\;W_{A(B\bar B)_1^n}=\frac{1_A}{d_A}\otimes W_{(B\bar B)_1^n},\;W_{A\bar A(B\bar B)_1^{n-1}B_n}=W_{A\bar A(B\bar B)_1^{n-1}}\otimes\frac{1_{B_n}}{d_B}\\
&\; Z_{A\bar B}\succeq0,\;\frac{\lambda}{d_A}\cdot1_A\succeq Z_A\\
&\; Z_{A\bar B}+\Phi_{A\bar B}\succeq d_{\bar A}d_{B}\cdot\mathrm{Tr}_{\bar A B}\left[\left(1_{A} \otimes J^{\mathcal{N}}_{\bar A B}\otimes 1_{\bar B}\right)W_{A\bar AB\bar B}\right].
\end{align}
We can also add PPT constraints and denote the resulting relaxations by $\mathrm{SDP}^\Delta_{n,\mathrm{PPT}}(\mathcal{N},M)$. The following theorem states the convergence of the hierarchy.

\begin{theorem}\label{thm:convergence-diamond}
Let $\mathcal{N}$ be a quantum channel and $n,M\in\mathbb{N}$. Then, we have
\begin{align}
0\leq\Delta(\cN,M)-\mathrm{SDP}^\Delta_{n}(\mathcal{N},M)\leq\frac{\mathrm{poly}(d)}{\sqrt{n}}\quad\text{implying}\quad\Delta(\cN,M)=\lim_{n\to\infty}\mathrm{SDP}^\Delta_{n}(\mathcal{N},M),
\end{align}
where $d=\max\{d_A,d_{\bar A},d_B,d_{\bar B}\}$.
\end{theorem}

\begin{proof}
The bound $0\leq\Delta(\cN,M)-\mathrm{SDP}^\Delta_{n}(\mathcal{N},M)$ holds by construction and thus we consider the upper bound. First, note that again applying \eqref{eq:watrous} we can write
\begin{align}
\mathrm{SDP}^\Delta_n(\mathcal{N},M)=\min&\;\frac{1}{2}\left\|\mathcal{W}(\mathcal{N})_{A\to\bar B}-\mathcal{I}_{A\bar B}\right\|_{\Diamond} \\
\mathrm{s.t.} &\; W_{A\bar A(B\bar B)_1^n}\succeq0,\;\mathrm{Tr}\left[W_{A\bar A(B\bar B)_1^n}\right]=1\\
&\;W_{A\bar A(B\bar B)_1^n}=\mathcal{U}_{(B\bar B)_1^n}^\pi\left(W_{A\bar A(B\bar B)_1^n}\right)\;\forall\pi\in\mathfrak{S}_n\\
&\;W_{A(B\bar B)_1^n}=\frac{1_A}{d_A}\otimes W_{(B\bar B)_1^n},\;W_{A\bar A(B\bar B)_1^{n-1}B_n}=W_{A\bar A(B\bar B)_1^{n-1}}\otimes\frac{1_{B_n}}{d_B}
\end{align}
with the quantum channel $\mathcal{W}(\mathcal{N})_{A\to\bar B}$ defined via its Choi state
\begin{align}
J^{\mathcal{W}(\mathcal{N})}_{ A \bar B}:=d_{\bar A}d_{B}\cdot\mathrm{Tr}_{\bar A B}\left[\left(1_{A} \otimes J^{\mathcal{N}}_{\bar A B}\otimes 1_{\bar B}\right)W_{A\bar A B \bar B}\right].
\end{align}
Second, using the de Finetti Theorem \ref{thm:main-deFinetti} we get that for every feasible Choi state $W_{A\bar A{(B\bar B)}_1^n}$ in $\mathrm{SDP}^\Delta_n(\mathcal{N},M)$, there exists a feasible Choi state $E_{A\bar A}\otimes D_{B\bar B}$ in $\Delta(\cN,M)$ from Lemma \ref{lem:Choi-worst}, such that
\begin{align}
\left\|E_{A\bar A}\otimes D_{B\bar B}-W_{A\bar A B \bar B}\right\|_1\leq\frac{\mathrm{poly}(d)}{\sqrt{n}}.
\end{align}
Third, employing the triangle inequality for the diamond norm we have
\begin{align}
&\left\|\mathcal{D}_{B\to\bar B}\circ\mathcal{N}_{\bar A\to B}\circ\mathcal{E}_{A\to\bar A}-\mathcal{I}_{A\to\bar B}\right\|_{\Diamond}-\left\|\mathcal{W}(\mathcal{N})_{A\to\bar B}-\mathcal{I}_{A\to\bar B}\right\|_{\Diamond}\\
&\leq\left\|\mathcal{D}_{B\to\bar B}\circ\mathcal{N}_{\bar A\to B}\circ\mathcal{E}_{A\to\bar A}-\mathcal{W}(\mathcal{N})_{A\to\bar B}\right\|_{\Diamond}.\label{eq:diamond-triangle}
\end{align}
Forth, relating the trace norm distance of Choi states to the diamond norm distance of quantum channels~\cite[Lemma 7]{Wallman14}, we have
\begin{align}
 \left\|\mathcal{D}_{B\to\bar B}\circ\mathcal{N}_{\bar A\to B}\circ\mathcal{E}_{A\to\bar A}-\mathcal{W}(\mathcal{N})_{A\to\bar B}\right\|_{\Diamond} \leq d_A\cdot\left\|J^{\mathcal{D}\circ\mathcal{N}\circ\mathcal{E}}_{ A \bar B}-J^{\mathcal{W}(\mathcal{N})}_{ A \bar B}\right\|_{1}
\end{align} 
and thanks to the monotonicity under partial trace and H\"older's inequality this bounds \eqref{eq:diamond-triangle} as
\begin{align}
 &\left\|\mathcal{D}_{B\to\bar B}\circ\mathcal{N}_{\bar A\to B}\circ\mathcal{E}_{A\to\bar A}-\mathcal{W}(\mathcal{N})_{A\to\bar B}\right\|_{\Diamond}\\
 &\leq d_Ad_{\bar A}d_{B}\cdot\left\|  \mathrm{Tr}_{\bar A B}\left[\left(1_{A} \otimes J^{\mathcal{N}}_{\bar A B}\otimes 1_{\bar B}\right)\left(E_{A\bar A}\otimes D_{B\bar B}-W_{A\bar A B \bar B}\right)\right] \right\|_{1}\\
 &\leq d_Ad_{\bar A}d_{B}\cdot\left\|\left(1_{A} \otimes J^{\mathcal{N}}_{\bar A B}\otimes 1_{\bar B}\right)(E_{A\bar A}\otimes D_{B\bar B}-W_{A\bar A B \bar B})\right\|_{1} \\
 &\leq d_Ad_{\bar A}d_{B}\cdot\left\|1_{A} \otimes J^{\mathcal{N}}_{\bar A B}\otimes 1_{\bar B}\right\|_{\infty} \left\|E_{A\bar A}\otimes D_{B\bar B}- W_{A\bar A B \bar B}\right\|_{1} \\
 &\leq\frac{\mathrm{poly}(d)}{\sqrt{n}}\quad\text{with $d=\max\{d_A,d_{\bar A},d_B,d_{\bar B}\}$.}
\end{align}
Finally, optimising in \eqref{eq:diamond-triangle} over all feasible Choi states $W_{A\bar A{(B\bar B)}_1^n}$ and then optimising over all feasible Choi states $E_{A\bar A}\otimes D_{B\bar B}$, we get the claimed upper bound
\begin{align}
\Delta(\cN,M)-\mathrm{SDP}^\Delta_{n}(\mathcal{N},M)\leq\frac{\mathrm{poly}(d)}{\sqrt{n}}.
\end{align}
\end{proof}

Numerically, we have found that for the qubit depolarizing channel the first level of our hierarchy already gives the exact optimal value
\begin{align}
\Delta(Dep_2,2)=\mathrm{SDP}^\Delta_{1,\mathrm{PPT}}(Dep_2,2),
\end{align}
which coincides with $1-F(Dep_2,2)$. That is, for the qubit depolarizing channel the average and worst case error criteria become the same.


\section{Distortion with side information}\label{app:lemmas}

The following lemma shows that if the $A$ system is not measured, then the loss in distinguishability after applying a measurement on the $B$ system can be bounded independently of $d_A$.

\begin{lemma}\label{lem:ic-meas-side-info}
Consider a state two-design on $B$, i.e., a set of rank-one projectors $\{P_z\}_{z \in \{1, \dots, t\}}$ such that $\frac{1}{t} \sum_{z=1}^t P_z \otimes P_z = \frac{2 P^{\mathrm{sym}}}{d_B (d_B+1)}$, where $P^{\mathrm{sym}}$ denotes the projector onto the symmetric subspace of $B \otimes B$. Let $\mathcal{M}_B$ be the measurement defined as
\begin{align}
\mathcal{M}_B(X) = \sum_{z} \frac{d_B}{t} \cdot\mathrm{Tr}\big[P_z X\big] |z\rangle\langle z|,
\end{align}
and $\xi_{AB}$ be a Hermitian matrix on $A\otimes B$. Then, we have that
\begin{align}
\| (\mathcal{I}_{A} \otimes \mathcal{M}_B)(\xi_{AB}) \|_1\geq \frac{1}{d_B^2 (d_B+1)} \| \xi_{AB} \|_{1}.
\end{align}
\end{lemma}

We note that the existence of such two-designs is known for any dimension, see e.g., \cite[Corollary 5.3]{scott08} for unitary two-designs and applying these unitaries to any fixed state leads to a state two-design.

\begin{proof}
For any full rank quantum state $\sigma_{A}$, we have by a H\"older type inequality for $\sigma$-weighted Schatten norms that (see, e.g., \cite{OLKIEWICZ1999246} or \cite{beigi13})
\begin{align}
\| (\mathcal{I}_{A} \otimes \mathcal{M}_B)(\xi_{AB}) \|_1 \geq  \frac{\left\| \sigma_{A}^{-1/4} (\mathcal{I}_{A} \otimes \mathcal{M})(\xi_{AB}) \sigma_{A}^{-1/4} \right\|^2_{2}}{\left\| \sigma_{A}^{-1/2} (\mathcal{I}_{A} \otimes \mathcal{M}_B)(\xi_{AB}) \sigma_{A}^{-1/2} \right\|_{\infty}}.
\end{align}
For example, the above inequality can be obtained using \cite[Corollary 3]{beigi13} with the operator $\sigma_{A}^{-1/2} (\mathcal{I}_{A} \otimes \mathcal{M}_B)(\xi_{AB}) \sigma_{A}^{-1/2}$, weight $\sigma$, and $p_\theta = 2$, $\theta = 1/2$, $p_0 = 1$, $p_1 = \infty$. We note that this particular H\"older type inequality for $\sigma$-weighted norms is elementary and follows easily from the usual H\"older inequality, but one way of potentially improving the dimension dependence in Lemma~\ref{lem:ic-meas-side-info} might be to use another H\"older inequality, in particular the $(1,4)$ inequality.

Henceforth, we abbreviate $d\equiv d_B$. To further bound the numerator, letting $\tilde{\xi}_{AB}:=\sigma_{A}^{-1/4} \xi_{AB} \sigma_{A}^{-1/4}$ we get
\begin{align}
\| (\mathcal{I}_{A} \otimes \mathcal{M}_B)(\tilde{\xi}_{AB}) \|^2_{2}
&= \left\| \sum_{z} \frac{d}{t} |z\rangle \langle z| \otimes  \mathrm{Tr}_{B}\left[(1_{A} \otimes P_z) \tilde{\xi}_{AB}\right] \right\|_2^2 \\
&= \sum_{z}  \frac{d^2}{t^2} \mathrm{Tr}\left[\mathrm{Tr}_{B}\left[(1_{A} \otimes P_z) \tilde{\xi}_{AB}\right] \otimes \mathrm{Tr}_{\bar{B}}\left[(1_{\bar{A}} \otimes P_z) \tilde{\xi}_{\bar{A}\bar{B}}\right]^{\dagger} F_{A\bar{A}} \right] \\
&= \sum_{z} \frac{d^2}{t^2} \mathrm{Tr}\left[\left(\left((1_{A} \otimes P_z) \tilde{\xi}_{AB}\right) \otimes \left((1_{\bar{A}} \otimes P_z) \tilde{\xi}_{\bar{A}\bar{B}}\right)^{\dagger}\right)\left(F_{A\bar{A}} \otimes1_{B\bar{B}}\right)\right] \\
&= \frac{d^2}{t^2} \mathrm{Tr}\left[\left(\tilde{\xi}_{AB} \otimes \tilde{\xi}_{\bar{A}\bar{B}}^{\dagger}\right)\left(\sum_{z} (1_{A\bar{A}} \otimes P_z \otimes P_{z})\right)\left(F_{A\bar{A}} \otimes1_{B \bar{B}}\right)\right] \\
&= \frac{1}{t} \frac{d^2}{d(d+1)} \mathrm{Tr}\left[\left(\tilde{\xi}_{AB} \otimes \tilde{\xi}_{\bar{A}\bar{B}}^{\dagger}\right) (1_{A\bar{A}} \otimes (1_{B\bar{B}} + F_{B\bar{B}}))\left(F_{A\bar{A}} \otimes1_{B \bar{B}}\right)\right] \\
&= \frac{1}{t} \frac{d^2}{d(d+1)} \Big(\underbrace{\mathrm{Tr}\left[\tilde{\xi}_{A} \tilde{\xi}_{A}^{\dagger}\right]}_{\geq 0} + \underbrace{\mathrm{Tr}\left[\tilde{\xi}_{AB} \tilde{\xi}_{AB}^{\dagger}\right]}_{=\left\|\tilde{\xi}_{AB}\right\|_2^2} \Big) \\
&\geq \frac{1}{t}\frac{d^2}{d^2 (d+1)} \| \xi_{AB} \|^2_{1} ,
\end{align}
where $F$ denotes the swap operator (as defined in Section \ref{sec:notation}) and in the last step used the H\"older inequality (see, e.g., \cite{bhatia97})
\begin{align}
\left\|\xi_{AB}\right\|_1=\left\|\sigma^{1/4}\sigma^{-1/4}\xi_{AB}\sigma^{-1/4}\sigma^{1/4}\right\|_1\leq\left\|\sigma^{1/4}\otimes1_B\right\|_4\left\|\sigma_A^{-1/4}\xi_{AB}\sigma_A^{-1/4}\right\|_2\left\|\sigma^{1/4}\otimes1_B\right\|_4\leq\sqrt{d}\left\|\tilde{\xi}_{AB}\right\|_2.
\end{align}
For further bounding the denominator we write
\begin{align}
\left\| \sigma_{A}^{-1/2} (\mathcal{I}_{A} \otimes \mathcal{M}_B)(\xi_{AB}) \sigma_{A}^{-1/2} \right\|_\infty
&= \max_{z} \frac{d}{t} \left\| \mathrm{Tr}_{B}\left[(1_{A} \otimes P_z) \sigma_{A}^{-1/2} \xi_{AB} \sigma_{A}^{-1/2}\right]\right\|_\infty \\
&\leq \frac{d}{t} \max_{\ket{\phi}_{A}, \ket{\psi}_{B}} \bra{\phi}_{A} \otimes \bra{\psi}_{B} \sigma_{A}^{-1/2} \xi_{AB} \sigma_{A}^{-1/2} \ket{\phi}_{A} \otimes \ket{\psi}_{B} \\
&\leq \frac{d}{t} \left\| \sigma_{A}^{-1/2} \xi_{AB} \sigma_{A}^{-1/2} \right\|_\infty,
\end{align}
where we used the fact that $P_z$ is a rank 1 projector. Now, observe that for any $\xi_{AB}$, there exists a $\sigma_A$ of unit trace such that
\begin{align}
\frac{\sqrt{\xi_{AB} \xi_{AB}^{\dagger}}}{\| \xi_{AB} \|_{1}} \preceq d\cdot\sigma_{A} \otimes1_{B}.
\end{align}
This just follows from, e.g., \cite[Lemma B.6]{bertachristandl11}, where it is shown that we can in fact choose\footnote{By a continuity argument $\sigma_A$ can be assumed to have full rank.}
\begin{align}
\sigma_{A} = \| \xi_{AB} \|_{1}^{-1}\cdot\mathrm{Tr}_{B}\left[\sqrt{\xi_{AB} \xi_{AB}^{\dagger}}\right].
\end{align}
As a result, we have
\begin{align}
\sqrt{\xi_{AB} \xi_{AB}^{\dagger}} \preceq d\| \xi_{AB} \|_{1}\cdot\sigma_{A} \otimes1_{B}.
\end{align}
As $\xi_{AB}$ is Hermitian, we can decompose it into the positive and negative part $\xi_{AB} = P - Q$ with $P$ and $Q$ positive semidefinite and $PQ = 0$, then $\sqrt{\xi_{AB} \xi_{AB}^{\dagger}} = P + Q$ and so $-\sqrt{\xi_{AB} \xi_{AB}^{\dagger}} \preceq \xi_{AB} \preceq \sqrt{\xi_{AB} \xi_{AB}^{\dagger}}$. Thus, we get
\begin{align}
- d_{B} \| \xi_{AB} \|_1\cdot\sigma_{A} \otimes1_{B} \preceq \xi_{AB} \preceq d_{B} \| \xi_{AB} \|_1 \cdot\sigma_{A} \otimes1_{B},
\end{align}
and we find $\| \sigma_{A}^{-1/2} \xi_{AB} \sigma_{A}^{-1/2} \|_{\infty} \leq d\| \xi_{AB} \|_1$. This concludes the proof.
\end{proof}


\section{Missing proofs}\label{app:missing-proofs}

In the following we give the proofs omitted in the main discussion.

\fdim*

\begin{proof}
The lower bound is trivial and the upper bounds follow directly from the more general statements about the optimal fidelity under additional classical communication assistance as given in Lemma~\ref{lem:F-dimension-locc}.
\end{proof}

\boundsdpn*

\begin{proof}
The lower bound is trivial. By the monotonicity in $n$ (Theorem~\ref{thm:convergence-plain}), it is enough to restrict to $n=1$ for the upper bounds.\footnote{Alternatively, the upper bound of one can directly be deduced operationally from~\cite[Theorem 3]{matthews14}, where $\mathrm{SDP}_1(\mathcal{N},M)$ was identified as the non-signalling assisted channel fidelity.} As in the proof of Lemma~\ref{lem:F-dimension-locc} we mostly use that for any sub-normalized bipartite quantum state $\rho_{XY}$ we have $d_X\cdot1_X\otimes\rho_Y\succeq\rho_{XY}$. For the first upper bound we find $\frac{d_{\bar B}}{d_B}\cdot W_{A\bar A}\otimes1_{B_1\bar B_1}\succeq W_{A\bar AB_1\bar B_1}$, which gives for the objective function
\begin{align}
\mathrm{SDP}_1(\mathcal{N},M)&\leq d_{\bar A}d_B\cdot\mathrm{Tr}\left[\left(J^\mathcal{N}_{\bar AB_1}\otimes\Phi_{A\bar B_1}\right)\left(\frac{d_{\bar B}}{d_B}\cdot W_{A\bar A}\otimes1_{B_1\bar B_1}\right)\right]\\
&=d_{\bar A}d_{\bar B}\cdot\mathrm{Tr}\left[\left(\frac{1_A}{d_A}\otimes\frac{1_{\bar A}}{d_{\bar A}}\right)W_{A\bar A}\right]=\mathrm{Tr}\left[W_{A\bar A}\right]=1.
\end{align}
For the second upper bound we find similarly as for the first upper bound $\frac{d_{\bar A}}{d_A}\cdot1_{A\bar A}\otimes W_{B_1\bar B_1}\succeq W_{A\bar AB_1\bar B_1}$, which then leads to the claim by the same argument as for the second upper bound in Lemma~\ref{lem:F-dimension}.
\end{proof}


\bibliographystyle{abbrv}
\bibliography{library}

\end{document}